\definecolor{ForestGreen}{rgb}{0.1333,0.5451,0.1333}
\definecolor{DarkRed}{rgb}{0.65,0,0}
\definecolor{Red}{rgb}{1,0,0}
\newtheorem{theorem}{Theorem}[section]
\newtheorem{lemma}[theorem]{Lemma}
\newtheorem{proposition}[theorem]{Proposition}
\theoremstyle{definition}
\newtheorem{definition}[theorem]{Definition}
\newcommand{\exan}{\textit{ex-ante}\xspace}
\newcommand{\expo}{\textit{ex-post}\xspace}
\newcommand{\calA}{\mathcal{A}}
\newcommand{\calM}{\mathcal{M}}
\newcommand{\calF}{\mathcal{F}}
\newcommand{\calD}{\mathcal{D}}
\newcommand{\bX}{\mathbf{X}}
\newcommand{\truc}{{\tt truc}}
\DeclareMathOperator*{\argmin}{argmin}
\DeclareMathOperator*{\argmax}{argmax}
\title{Truthful and Almost Envy-Free Mechanism of Allocating Indivisible Goods: the Power of Randomness}
\author{Xiaolin Bu\\
lin\_bu@sjtu.edu.cn\\
Shanghai Jiao Tong University
\and
Biaoshuai Tao\\
bstao@sjtu.edu.cn\\
Shanghai Jiao Tong University}
\date{}
\begin{document}

\maketitle

\begin{abstract}
We study the problem of \emph{fairly} and \emph{truthfully} allocating $m$ indivisible items to $n$ agents with additive preferences.
Specifically, we consider truthful mechanisms outputting allocations that satisfy EF$^{+u}_{-v}$, where, in an EF$^{+u}_{-v}$ allocation, for any pair of agents $i$ and $j$, agent $i$ will not envy agent $j$ if $u$ items were added to $i$'s bundle and $v$ items were removed from $j$'s bundle.
Previous work easily indicates that, when restricted to deterministic mechanisms, truthfulness will lead to a poor guarantee of fairness: even with two agents, for any $u$ and $v$, EF$^{+u}_{-v}$ cannot be guaranteed by truthful mechanisms when the number of items is large enough.
In this work, we focus on randomized mechanisms, where we consider $\exan$ truthfulness and $\expo$ fairness.
For two agents, we present a truthful mechanism that achieves EF$^{+0}_{-1}$ (i.e., the well-studied fairness notion EF$1$). 
For three agents, we present a truthful mechanism that achieves EF$^{+1}_{-1}$.
For $n$ agents in general, we show that there exists a truthful mechanism that achieves EF$^{+0}_{-O(\sqrt{n})}$.
On the negative side, when considering the stronger notion EF$_{-v}^{+u}$X, we show that it cannot be achieved by any randomized truthful mechanism for any $u, v$, and any fixed number of agents.

We further consider fair and truthful mechanisms that also satisfy the standard efficiency guarantee: Pareto-optimality.
We provide a mechanism that simultaneously achieves truthfulness, EF$1$, and Pareto-optimality for bi-valued utilities (where agents' valuation on each item is either $p$ or $q$ for some $p>q\geq0$).
For tri-valued utilities (where agents' valuations on each item belong to $\{p,q,r\}$  for some $p>q>r\geq0$) and any $u,v$, we show that truthfulness is incompatible with EF$^{+u}_{-v}$ and Pareto-optimality even for two agents.
\end{abstract}

\section{Introduction}
\emph{Fair division} studies how to allocate a set of resources to a set of agents with heterogeneous preferences.
Starting from~\citet{Steinhaus48,Steinhaus49}, the fair division problem has been extensively studied by economists, mathematicians, and computer scientists.
Multiple textbooks and survey papers have been published on this topic~\cite{amanatidis2022fair,amanatidis2023fair,BramsTa96,BrandtCoEn16,procaccia2013cake,liu2023mixed,robertson1998cake}.
In this paper, we study the fair division problem when resources are \emph{indivisible} items.
Specifically, we aim to fairly allocate $m$ items to $n$ agents, where each agent has her own valuations on those $m$ items.
Unless otherwise stated, agents' valuations on the items are additive, i.e., each agent's valuation on a set of items is the sum of her values for all the individual items.

Among various fairness criteria, \emph{envy-freeness}~\cite{Foley67,varian1973equity} is arguably the most studied notion, which says that, for any pair of agents $i$ and $j$, agent $i$ should value her own allocated share weakly more than agent $j$'s, i.e., agent $i$ does not envy agent $j$.
However, when indivisible items are concerned, envy-free allocation may not exist (e.g., all the agents value the items equally, but $m$ is not a multiple of $n$).
It is then natural to define relaxations of envy-freeness that are tractable.
The most popular line of research considers envy-freeness up to the addition or/and removal of a small number of items.
In particular, an allocation is ``almost envy-free'' if, for each pair of agents $i$ and $j$, agent $i$ will no longer envy agent $j$ if a small number of items is (hypothetically) added to agent $i$'s allocated bundle and a small number of items is (hypothetically) removed from agent $j$'s allocated bundle.
Among this type of relaxation, \emph{envy-freeness up to one item} (EF1) receives the most significant attention.
It is well-known that an EF1 allocation always exists even for general monotone valuations that need not be additive, and it can be computed efficiently~\cite{budish2011combinatorial,lipton2004approximately}.

When deploying a fair division algorithm in practice, agents may not honestly report their valuation preferences to the algorithm if they can benefit from strategic behaviors.
This motivates the study of the fair division problem from the mechanism design point of view.
Other than guaranteeing fairness, we would also like an algorithm, or a mechanism, to be \emph{truthful}, where truth-telling is each agent's dominant strategy.
Unfortunately, it is known that truthfulness is incompatible with most of the meaningful fairness notions for deterministic mechanisms~\cite{lipton2004approximately,amanatidis2017truthful,amanatidis2016truthful,caragiannis2009low,dobzinski2023fairness}, including those above-mentioned variants of envy-freeness~\cite{amanatidis2017truthful}.
In particular, \citet{amanatidis2017truthful} give a characterization of truthful mechanisms with two agents.
Their observation implies that no truthful mechanism can achieve envy-freeness even up to adding/removing an arbitrary number of items (see Theorem~\ref{thm:impossibility_deterministic}).
Truthfulness and (almost) envy-freeness are compatible only for very restrictive valuation functions~\cite{halpern2020fair,babaioff2021fair,barman2022truthful,christodoulou2024fair}.
Further, it is shown that under mild additional assumptions, the only deterministic truthful mechanism is serial/sequential-quota dictatorship~\cite{papai2000strategyproof,papai2001strategyproof,bouveret2023thou,babaioff2024truthful}, where each agent is asked to take a predefined number of items one by one\footnote{In serial dictatorship, the order of the agents is predefined and independent of the agents' preferences.
In sequential dictatorship, the order of the agents depends on the agents' preferences.}.
Such mechanisms clearly lack fairness guarantees.

As the incompatibility of truthfulness and fairness holds only under deterministic mechanisms, a natural approach to resolve the problem is to apply \emph{randomness} in mechanisms.
We aim to design \emph{randomized mechanisms} that are \emph{truthful in expectation}---truth-telling maximizes each agent's \emph{expected utility}, meanwhile guaranteeing that every allocation possibly output by the mechanism is almost envy-free.

\paragraph{The Power of Randomness.}
The use of randomness to achieve truthfulness has been proven successful in a variety of problems~\cite{mossel2010truthful,dobzinski2013power,AzizFrSh23,babaioff2022best,sun2024randomized}.
For fair division of \emph{divisible} resources, also known as \emph{the cake cutting problem}, truthfulness is incompatible with the \emph{share-based} fairness notion, proportionality, under deterministic mechanisms even for two agents and piecewise constant valuations~\cite{tao2022existence}, where proportionality requires each agent to receive no less than her proportional value ($1/n$ of the total value).
However, when randomness is introduced, \citet{mossel2010truthful} shows the existence of a truthful mechanism such that each agent receives exactly her proportional value in each output allocation, satisfying both proportionality and envy-freeness.
For indivisible goods, randomness allows us to obtain a mechanism that is truthful in expectation, with each output allocation satisfying proportionality up to one item (PROP$1$), where each agent achieves her proportional value after receiving one more item~\cite{AzizFrSh23,babaioff2022best}.

Despite the above progress, our understanding of the power of randomness for fair division of indivisible goods is still limited, especially for \emph{envy-based} fairness notions.
In this work, we offer new insights into the power of randomness by showing how it can bypass the above incompatibility results of deterministic mechanisms for envy-based fairness.

\paragraph{Technical Challenges.}
When designing randomized mechanisms, it is often convenient to view the indivisible items as being \emph{divisible}, and a fractional allocation is then viewed as a lottery of ``pure'' allocations of indivisible items.
In particular, the fraction of an item allocated to an agent corresponds to the probability that the item is allocated to the agent.
In this view, a randomized mechanism consists of \emph{two} parts: a \emph{fractional allocation rule} $\calF$ that decides a fractional allocation where items are viewed as divisible, and a \emph{decomposition rule} $\calD$ that decides the lottery of integral allocations.
Further, for $n$ agents and $m$ items, if we represent a fractional allocation by an $mn$-dimensional vector $\{x_{ig}\}_{i=1,\ldots,n;g=1,\ldots,m}$ where $x_{ig}$ is the fraction of item $g$ allocated to agent $i$ (where we have $\sum_{i=1}^nx_{ig}=1$ as each item is fully allocated), those fair integral allocations are then represented by some integral points in $\mathbb{R}^{mn}$.

To successfully design a truthful and fair randomized mechanism, we need $\calF$ to be truthful, and we also need the fractional allocation output by $\calF$ to be within the convex hull of fair integral allocations.
It is technically challenging to ensure both.

When considering $\calF$, many fractional allocation rules are known to provide fractional allocations that can be decomposed into integral allocations that satisfy fairness notions such as almost envy-freeness, yet fail to ensure truthfulness.
For example, the \emph{probabilistic serial rule}~\cite{bogomolnaia2001new}, which lets agents simultaneously eat the items with a constant speed from items with larger values to items with smaller values, outputs allocations that can be decomposed to EF1 allocations~\cite{AzizFrSh23}.
The \emph{maximum Nash welfare rule}~\cite{nash1950bargaining}, which finds an allocation that maximizes the product of agents' utilities, outputs allocations that can be decomposed to allocations satisfying envy-freeness up to adding and removing one item~\cite{AzizFrSh23}.
However, neither rule is truthful.
The maximum Nash welfare rule fails to be truthful even if agents' valuations on the items are restricted to one of the two values (an example is given in Sect.~\ref{sect:PObi}).
It is also easy to see that the probabilistic serial rule is not truthful: if an agent's most preferred item is not valued by anyone else while a slightly less preferred item is also favored by other agents, it is beneficial for this agent to claim that the latter item is more valuable so that she can compete with other agents for this item at an earlier time.
\citet{babaioff2022best} design algorithms that output allocations which are lotteries of integral allocations that satisfy share-based fairness notions such as approximate \emph{maximin share} (MMS).
\citet{feldman2023breaking} randomize the classical envy-cycle procedure for subadditive valuations.
However, the main focus of these papers is \emph{the best-of-both-worlds}: simultaneously achieving $\exan$ fairness (the fairness of the fractional allocation, which can be viewed as fairness in expectation) and $\expo$ fairness (where each realized integral allocation satisfies certain fairness notion).
In fact, none of those above-mentioned fractional allocation rules is truthful. 

A few fractional allocation rules are known to be truthful.
A natural example is the \emph{equal division rule}, where each item is evenly allocated to $n$ agents.
It is clearly truthful, as it ignores agents' valuations.
It is also $\exan$ envy-free: each agent does not envy any other agents \emph{in expectation}.
One may expect that the equal division fractional allocation, being at the ``center'' of the allocation space, has a reasonable chance to be within the convex hull of integral fair allocations.
While this is the case for PROP1 allocations~\cite{AzizFrSh23,babaioff2022best} (see Sect.~\ref{sec:sharebased} for details), we show in Theorem~\ref{thm:notEF1realizable} that the equal division fractional allocation may be outside of the convex hull of EF1 allocations.
Another class of truthful rules is provided by~\citet{freeman2023equivalence}, which are \emph{responsive to agents' valuations}.
However, as we will discuss later, analyzing whether such fractional allocations can be decomposed into fair integral allocations is technically challenging.
In fact, the allocation rules following their framework produce fractional allocations that are close to equal division. 
Given that the equal division allocation may be outside the convex hull of EF1 allocations, allocations close to it may also face the same limitation. 

Not only is designing truthful fractional allocations rule $\calF$ challenging, it is also challenging to analyze if the output fractional allocations are within the convex hull of fair integral allocations, especially for envy-based fairness criteria.
The generic math tools such as matrix decomposition and total unimodularity of matrices, widely used in the existing literature~\cite{budish2013designing,AzizFrSh23,babaioff2022best}, are useful mostly for share-based fairness notions.
Taking proportionality as an example, given a fractional allocation that is $\exan$ proportional, it is possible to find several integral PROP1 allocations that ``surround'' this fractional allocation.
Specifically, these integral PROP1 allocations can be characterized by the vertices of a polytope given by a set of linear constraints, and the coefficient matrix of this set of linear constraints is totally unimodular, which ensures all vertices of this polytope are integral (we refer the readers to \citet{AzizFrSh23} for details).
Similar techniques work for other share-based fairness criteria such as maximin share (MMS)~\cite{babaioff2022best}.
However, this becomes difficult for envy-based fairness criteria: the coefficient matrix for the linear constraints describing a share-based fairness criterion is relatively simple as we only need $n$ constraints (one for each agent); on the other hand, we need $n(n-1)$ constraints for a typical envy-based criterion, and this extra complexity usually makes the coefficient matrix fail to be totally unimodular.
The only known exception is the fractional allocation output by the probabilistic serial rule, where the ``simultaneously eating'' feature enables us to describe the EF1 allocations with only $n$ linear constraints (again, see \citet{AzizFrSh23} for details).
However, as we have mentioned, the probabilistic serial rule is not truthful.
It seems to us, besides those generic math tools, further progress on envy-based fairness notions requires more delicate analyses that are specific for envy-based fairness.

In conclusion, \emph{there is still a large gap between fractional allocation rules that are decomposable to fair integral allocations and those that are truthful.}

\subsection{Our Results}
In this paper, we mainly focus on envy-based fairness notions, and we show that randomized mechanisms provide significantly better fairness guarantees than their deterministic counterpart.

For $n=2$ agents, we provide a simple truthful randomized mechanism based on the equal division rule that outputs EF1 allocations (Sect.~\ref{sect:twoagents}).
We show that the equal division rule fails to guarantee the EF1 fairness property for $n=3$.
For $n=3$ agents, we provide a truthful randomized mechanism that outputs EF$^{+1}_{-1}$ allocations (envy-freeness up to adding and removing one item, also called envy-freeness up to one good more-and-less proposed by~\cite{barman2019proximity}) (Sect.~\ref{sect:threeagents}).
This is achieved by some carefully designed fractional allocation rule, and the decomposition to EF$^{+1}_{-1}$ allocations applies a series of techniques including proper coloring of regular bipartite graphs and rounding of vertex solutions of linear programs.
For general numbers of agents, we show two mechanisms based on the equal division rule: a truthful randomized mechanism that outputs EF$^{+0}_{-O(\sqrt{n})}$ allocations and a truthful randomized mechanism that outputs allocations simultaneously satisfying two share-based fairness notions---PROP1 and $\frac1n$-MMS (Sect.~\ref{sect:nagents}).
On the negative side, when considering the generalization of EFX, EF$_{-v}^{+u}$X, we show that for any $u$ and $v$, EF$_{-v}^{+u}$X cannot be achieved by any randomized truthful mechanism for any fixed number of agents (Sect.~\ref{sect:incom_efkx}).

Finally, we study efficient randomized truthful mechanisms that satisfy Pareto-optimality in addition to fairness (Sect.~\ref{sect:PO}).
We show that the truthful EF1 Pareto-optimal mechanism for binary valuations~\cite{halpern2020fair,babaioff2021fair,barman2022truthful} generalizes to the bi-valued valuations (where an agent's value to an item can only take two values $p$ or $q$) if randomization is allowed.
This is complemented by the impossibility result that, for any $u$ and $v$, EF$^{+u}_{-v}$ is incompatible with Pareto-optimality for randomized truthful mechanisms for two agents with tri-valued valuations.

\subsection{Related Work}
\paragraph{Truthful mechanisms for allocating indivisible resources.}
Deterministic truthful mechanisms are known to be incompatible with most of the meaningful fairness notions~\cite{lipton2004approximately,amanatidis2017truthful,amanatidis2016truthful,caragiannis2009low,dobzinski2023fairness}.
Positive results are shown only for restrictive valuation functions such as binary valuations and matroid-rank valuations~\cite{halpern2020fair,babaioff2021fair,barman2022truthful}, leveled valuations~\cite{christodoulou2024fair}, or have a poor fairness guarantee such as $\frac1{\lfloor m/2\rfloor}$-MMS even for two agents~\cite{amanatidis2016truthful,amanatidis2017truthful}.
For randomized mechanisms, the equal division rule, which is truthful in expectation, is shown to be PROP1-realizable~\cite{AzizFrSh23,babaioff2022best}, or decomposable to allocations such that with high probability, each agent's value difference to each pair of bundles can be bounded~\cite{caragiannis2009low}.

Another line of work provides characterizations that the only deterministic truthful mechanism is the serial/sequential-quota dictatorship under mild additional assumptions~\cite{papai2001strategyproof,papai2000strategyproof,babaioff2024truthful,hosseini2019multiple,bouveret2023thou}, for example, non-bossiness and Pareto-optimality for strict preferences~\cite{papai2001strategyproof}, or non-bossiness, Pareto-optimality, and neutrality for quantity-monotonic preferences~\cite{papai2000strategyproof}.
Babaioff and Manaker Morag~\cite{babaioff2024truthful} obtain the same characterization result for lexicographic valuations while dropping the requirement on Pareto-optimality.

For indivisible chores (items with negative utilities), \citet{sun2024randomized} provide a randomized mechanism that achieves truthfulness, best-of-both-world fairness, and efficiency under restricted additive valuations (where each item $g$ has an instinct value $v_g$, and each agent's value to $g$ is either $0$ or $v_g$).

\paragraph{Truthful mechanisms for allocating divisible resources.}
As mentioned earlier, the design of truthful mechanisms for allocating divisible items has been studied in the previous work~\cite{cole2013mechanism,freeman2023equivalence,shende2023strategy,satterthwaite1981strategy,garg2022efficient,momi2017efficient,li2023truthful}.
Other than divisible items, another different model for divisible resources is the \emph{cake-cutting} model, where a single heterogeneous ``item''---a piece of cake modeled as the interval $[0,1]$---is allocated to agents who may value different parts of the interval differently.
Truthful mechanism design problem has been extensively studied for the cake-cutting problem~\cite{mossel2010truthful,branzei2015dictatorship,chen2013truth,bei2017cake,menon2017deterministic,bei2020truthful,aziz2014cake,maya2012incentive,li2015truthful,asano2020cake,tao2022existence,bu2023existence}.
It is known that, for deterministic mechanisms, truthfulness is incompatible with fairness~\cite{branzei2015dictatorship,tao2022existence}, unless agents' valuations are binary~\cite{chen2013truth,bei2020truthful}.
However, when allowing randomness, we can simultaneously achieve truthfulness and envy-freeness~\cite{mossel2010truthful}.

\paragraph{Truthful mechanisms for house allocation problem.}
The \emph{house allocation} problem is similar to the fair division problem, except that each agent receives exactly one item.
Truthful mechanism design problem has also been addressed for this problem~\cite{shende2023strategy,svensson1999strategy,suzuki2023strategyproof}.
A typical truthful mechanism is the \emph{serial dictatorship} rule that asks agents to take their favorite items one by one in a fixed order.
It is known that, under some mild technical assumptions, the serial dictatorship rule or its randomized version (with random agent orders) is the unique truthful mechanism~\cite{svensson1999strategy}.
However, its counterpart in the fair division problem, the \emph{round-robin} algorithm or its randomized version, is known to be untruthful.

\paragraph{Other aspects on mechanism design and strategic behaviors.}
Some work focuses on designing mechanisms that satisfy weaker truthfulness guarantees.
One line of work studies mechanisms with good \emph{incentive ratios}~\cite{HUANG2024103491,DBLP:journals/corr/abs-2308-08903,tao2023fair,xiao2020algorithms,huang2025incentive}, where the incentive ratio of a mechanism is the maximum possible ratio of the utility an agent can obtain by strategic behaviors over the utility that she gets by truth-telling.
Incentive ratios measure the degree of untruthfulness; in particular, a mechanism with an incentive ratio of $1$ is truthful.
Other than incentive ratios, other more tractable truthful notions have been proposed and studied~\cite{DBLP:conf/nips/0001V22,troyan2020obvious,ortega2022obvious,bu2023existence,brams2006better,hartman2025s}, including non-obvious manipulability, risk-averse truthfulness, maximin strategyproofness, etc.

Instead of designing a truthful mechanism, some work considers untruthful mechanisms and analyzes the outcome allocations that are the results of agents' strategic behaviors~\cite{DBLP:conf/wine/AmanatidisBFLLR21,DBLP:conf/sigecom/AmanatidisBL0R23}.
Notably, \citet{DBLP:conf/wine/AmanatidisBFLLR21,DBLP:conf/sigecom/AmanatidisBL0R23} show that an EF1 allocation is obtained by the round-robin algorithm (which is known to be untruthful) if agents' strategic plays form a Nash equilibrium.

\paragraph{Other fairness notions.}
A variant of envy-freeness that is stronger than EF1 is \emph{envy-freeness up to any item} (EFX)~\cite{CKMP+19}, which says that an agent $i$ will no longer envy an agent $j$ if \emph{any} item is removed from $j$'s bundle.
EFX has also been extensively studied, although mostly in an existential aspect~\cite{hv2025efx,chaudhury2020efx,chaudhury2021little,plaut2020almost,akrami2023efx,caragiannis2019envy,berger2022almost,mahara2023extension,feldman2023breaking}.
For additive valuations, we know that EFX allocations always exist for up to three agents~\cite{chaudhury2020efx}.
For at least four agents, the existence of EFX allocations is an open problem.

Other than those envy-based fairness notions, another line of fairness notions is \emph{share-based}, where a threshold that more or less represents the ``average share'' is defined for each agent, and the allocation is fair if every agent receives a bundle that has value weakly higher than this threshold.
Examples include \emph{proportionality up to one/any item}~\cite{conitzer2017fair,aziz2020polynomial} and \emph{maximin share}~\cite{amanatidis2017approximation,akrami2024breaking,procaccia2014fair,kurokawa2016can}.

\section{Preliminaries}
A set of $m$ items $M=\{1,\ldots,m\}$ is allocated to a set of $n$ agents $N=\{1,\ldots,n\}$.
Each agent $i$ has a \emph{valuation function} $v_i:\{0,1\}^M\to\mathbb{R}_{\geq0}$.
We assume the valuation functions are additive: $v_i(S)=\sum_{g\in S}v_i(\{g\})$ for each $i\in N$.
For simplicity of notation, we use $v_{ig}$ and $v_i(g)$ interchangeably to denote $v_i(\{g\})$.
An \emph{allocation} is a partition $\calA=(A_1,\ldots,A_n)$ of $M$ where $A_i$ is the set of items allocated to agent $i$.
An allocation is \emph{envy-free} if each agent believes the bundle she receives has a weakly higher value than any other agent according to her own valuation function: for every pair of agents $i$ and $j$, we have $v_i(A_i)\geq v_i(A_j)$.
It is clear that an envy-free allocation may not exist.
For example, when $m<n$, there are agents who receive no item at all.
We consider the following relaxation of envy-freeness.

\begin{definition}
    For nonnegative integers $u$ and $v$, an allocation $\calA=(A_1,\ldots,A_n)$ is \emph{envy-free up to adding $u$ items and removing $v$ items}, denoted by EF$_{-v}^{+u}$, if for every pair of agents $i$ and $j$, there exist item sets $S_i$ and $S_j$ satisfying $|S_i|\leq u$ and $|S_j|\leq v$ such that $v_i(A_i\cup S_i)\geq v_i(A_j\setminus S_j)$.
\end{definition}

In words, an allocation is EF$_{-v}^{+u}$ if, for every pair of agents $i$ and $j$, $i$ will not envy $j$ if at most $u$ items were added to agent $i$'s bundle and at most $v$ items were removed from agent $j$'s bundle.
Given an allocation $\calA=(A_1,\ldots,A_n)$, we say that it is \emph{EF$_{-v}^{+u}$ for agent $i$} if the condition in the above definition holds for this particular agent $i$ and for any other agent $j$.
Given two bundles $X$ and $Y$ and an agent $i$ with valuation function $v_i$, we say that \emph{the EF$_{-v}^{+u}$ condition/relation is satisfied from $X$ to $Y$} if the condition in the above definition is satisfied, i.e., there exist item sets $S$ and $T$ satisfying $|S|\leq u$ and $|T|\leq v$ such that $v_i(X\cup S)\geq v_i(Y\setminus T)$.
When $u=0$ and $v=1$, EF$_{-v}^{+u}$ becomes the well-studied fairness notion EF1.

\begin{definition}
    An allocation $\calA=(A_1,\ldots,A_n)$ is \emph{envy-free up to one item}, denoted by EF1, if it is EF$^{+0}_{-1}$.
\end{definition}

An EF1 allocation always exists and can be efficiently computed~\cite{budish2011combinatorial,lipton2004approximately}.

Lastly, the following proposition is straightforward.
In Sect.~\ref{sect:threeagents}, we will frequently use the facts that EF1 implies EF$_{-0}^{+1}$ and that EF$_{-2}^{+0}$ implies EF$^{+1}_{-1}$, which follow from the proposition below.
\begin{proposition}\label{prop:EFvw}
    If an allocation is EF$^{+0}_{-v}$, then for any $w$ with $0\leq w\leq v$, the allocation is EF$^{+w}_{-(v-w)}$.
\end{proposition}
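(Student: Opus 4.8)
The plan is to unfold the definition of EF$^{+0}_{-v}$ and redistribute the $v$ "removal tokens" between the addition side and the removal side. Suppose $\calA = (A_1,\ldots,A_n)$ is EF$^{+0}_{-v}$ and fix a pair of agents $i,j$. By definition there is a set $T \subseteq A_j$ with $|T| \le v$ such that $v_i(A_i) \ge v_i(A_j \setminus T)$. Fix any $w$ with $0 \le w \le v$. The idea is to split $T$ into two parts: a part of size at most $w$ that we will instead "simulate" by adding items to $i$'s bundle, and a part of size at most $v-w$ that stays on the removal side.

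The key observation, which I would make precise, is that removing an item $g$ from agent $j$'s bundle can only help agent $i$ by at most $v_i(g)$, and the same quantity $v_i(g)$ is exactly what agent $i$ would gain by having a copy of $g$ added to her own bundle; so each removal token can be "converted" into an addition token without loss. Concretely: write $T = T' \sqcup T''$ with $|T'| \le w$ and $|T''| \le v - w$ (possible since $|T| \le v$; if $|T| \le w$ we may take $T'' = \emptyset$). Set $S_i = T'$ — note $S_i \cap A_i = \emptyset$ because $T' \subseteq T \subseteq A_j$ and the $A$'s are disjoint — and $S_j = T''$. Then by additivity,
\begin{align*}
v_i(A_i \cup S_i) &= v_i(A_i) + v_i(T') \\
&\ge v_i(A_j \setminus T) + v_i(T') \\
&= v_i\bigl((A_j \setminus T'') \setminus T'\bigr) + v_i(T') \\
&= v_i(A_j \setminus T''),
\end{align*}
where the last equality again uses additivity together with $T' \subseteq A_j \setminus T''$. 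Since $|S_i| \le w$ and $|S_j| = |T''| \le v - w$, this is precisely the EF$^{+w}_{-(v-w)}$ condition from $i$ to $j$. As $i,j$ were arbitrary, $\calA$ is EF$^{+w}_{-(v-w)}$.

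There is no real obstacle here; the only point requiring a moment's care is the bookkeeping that $T'$ is disjoint from $A_i$ (so it is a legal "addition" set) and contained in $A_j \setminus T''$ (so the final additivity step is valid), both of which follow immediately from $T \subseteq A_j$ and the disjointness of the allocation. One could alternatively phrase the whole argument in the "$X$ to $Y$" language of the preliminaries, applied with $X = A_i$ and $Y = A_j$, but the bundle-wise version above is self-contained.
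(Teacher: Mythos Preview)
Your proof is correct and follows essentially the same approach as the paper: take the removal set $T$ guaranteed by EF$^{+0}_{-v}$, split it into a part $T'$ of size at most $w$ and a part $T''$ of size at most $v-w$, and use additivity plus disjointness of the bundles to convert the $T'$-removals into additions on agent $i$'s side. If anything, your version is slightly more careful in handling the case $|T|<w$ explicitly.
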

\begin{proof}
    Suppose $\calA=(A_1,\ldots,A_n)$ is EF$^{+0}_{-v}$.
    Consider any two agents $i$ and $j$.
    By definition, there exist an item set $S_j$ satisfying $S_j\subseteq A_j$ and $|S_j|\leq v$ such that $v_i(A_i)\geq v_i(A_j\setminus S_j)$.
    Choose any $W\subseteq S_j$ with $|W|=w$.
    Since $W\subseteq S_j\subseteq A_j$ which implies $W\cap A_i=\emptyset$, we have $v_i(A_i\cup W)=v_i(A_i)+v_i(W)\geq v_i(A_j\setminus S_j)+v_i(W)=v_i(A_j\setminus (S_j\setminus W))$, which implies EF$^{+w}_{-(v-w)}$.
\end{proof}

\subsection{Mechanisms}
A \emph{deterministic mechanism} $\calM$ is a function that takes the valuation profile $(v_1,\ldots,v_n)$ as input and outputs an allocation $\calA$.
We say that $\calM$ is EF$^{+u}_{-v}$ if it always outputs EF$^{+u}_{-v}$ allocations with respect to the input valuation profile $(v_1,\ldots,v_n)$.
We say that $\calM$ is \emph{truthful} if, for each agent $i$, truthfully reporting her valuation function $v_i$ maximizes agent $i$'s utility.
This is formally defined below.

\begin{definition}
    A deterministic mechanism $\calM$ is \emph{truthful} if, for any valuation profile $(v_1,\ldots,v_n)$, any agent $i$, and any valuation function $v_i'$, we have
    $v_i(A_i)\geq v_i(A_i')$, where $\calA=(A_1,\ldots,A_n)=\calM(v_1,\ldots,v_n)$ and $\calA'=(A_1',\ldots,A_n')=\calM(v_1,\ldots,v_{i-1},v_i',v_{i+1},\ldots,v_n)$.
\end{definition}

For deterministic mechanisms, truthfulness has a very low compatibility with fairness.
From the previous study of~\citet{amanatidis2017truthful}, it is easy to obtain the following theorem, where the proof is deferred to Appendix~\ref{append:impossibility_deterministic}.

\begin{theorem}\label{thm:impossibility_deterministic}
    For any nonnegative integers $u$ and $v$, there does not exist a deterministic mechanism $\calM$ that is truthful and EF$^{+u}_{-v}$ even with two agents.
\end{theorem}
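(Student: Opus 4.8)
The plan is to invoke the characterization of two-agent truthful deterministic mechanisms due to \citet{amanatidis2017truthful}, and then construct, for any fixed $u$ and $v$, an instance on which no such mechanism can be EF$^{+u}_{-v}$. First I would recall the relevant consequence of their characterization: with two agents, any truthful deterministic mechanism must, roughly speaking, behave like a ``picking-sequence''/quota-type mechanism once one agent's valuation is held fixed --- in particular, there is a fixed quota $k$ (depending only on the mechanism and the fixed report of agent~$2$, not on agent~$1$'s report) such that agent~$1$ always receives her $k$ most-preferred items under some tie-breaking, and agent~$2$ receives the rest. I would state this precisely as whatever lemma the cited paper provides (their observation on truthful two-agent mechanisms), and then reduce to the case where both agents report identical valuations.

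The core of the argument is then a counting instance. Take $m$ large (to be chosen as a function of $u$, $v$, and the mechanism's quota $k$), and let both agents value all $m$ items equally at $1$. Under the mechanism, one agent --- say the one assigned the smaller share --- receives at most $\lfloor m/2 \rfloor$ items, and the other receives at least $\lceil m/2 \rceil$. For the EF$^{+u}_{-v}$ condition to hold from the small bundle $X$ (size $\le \lfloor m/2\rfloor$) toward the large bundle $Y$ (size $\ge \lceil m/2\rceil$), we would need $|X| + u \ge |Y| - v$, i.e.\ $\lfloor m/2\rfloor + u \ge \lceil m/2 \rceil - v$, which fails once $m > 2(u+v) + 1$. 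The subtlety is that the mechanism's quota $k$ might already be fixed so that the split is very uneven; but that only makes the gap $|Y| - |X|$ larger, so the violation persists. If the quota is such that the split is as even as possible, we pick $m$ odd and large, forcing a gap of $1$ plus whatever imbalance, still enough once $m$ exceeds the threshold.

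I expect the main obstacle to be stating and applying the \citet{amanatidis2017truthful} characterization at exactly the right level of generality: their result constrains truthful two-agent mechanisms enough to pin down the bundle \emph{sizes} as a function of one agent's report, but one must be careful that it genuinely forces a near-fixed quota rather than leaving room for the mechanism to adapt the split to agent~$1$'s valuation in a way that could rescue fairness. The resolution is that truthfulness prevents agent~$1$ from influencing how many items she gets when her reported ranking is generic; combined with the symmetric constraint on agent~$2$, and applied to the all-equal instance (approached as a limit of generic instances, or handled directly via the tie-breaking in the characterization), the bundle sizes are forced, and the counting bound above closes the argument. Finally I would note that since $u,v$ are arbitrary but fixed, and the threshold on $m$ depends only on $u,v$ (and the mechanism), no single truthful deterministic mechanism can be EF$^{+u}_{-v}$ for all instances, which is the claim; the full details are deferred to Appendix~\ref{append:impossibility_deterministic}.
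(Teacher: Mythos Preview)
Your proposal has a genuine gap. The all-equal-valuation instance cannot yield a contradiction: consider the mechanism that ignores all reports and deterministically assigns items $\{1,\ldots,\lfloor m/2\rfloor\}$ to agent~$1$ and the rest to agent~$2$. This is trivially truthful, and on the all-ones profile it is EF1 (hence EF$^{+u}_{-v}$ for any $u,v$). Your counting step is also miscalculated: the inequality $\lfloor m/2\rfloor + u \ge \lceil m/2\rceil - v$ reduces to $u+v \ge \lceil m/2\rceil - \lfloor m/2\rfloor \in\{0,1\}$, which essentially always holds regardless of $m$, so it does not ``fail once $m>2(u+v)+1$.'' The point is that nothing in the characterization forces an uneven split on a uniform instance; you need heterogeneous valuations to trap the mechanism.

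Relatedly, your summary of the \citet{amanatidis2017truthful} characterization as ``agent~$1$ always receives her $k$ most-preferred items for a fixed $k$'' is not what their result says. The characterization is by \emph{picking-exchange} mechanisms: there is a partition $(N_1,N_2,E_1,E_2)$ of $M$, each agent~$i$ selects a best offer from a fixed family $\mathcal{O}_i\subseteq 2^{N_i}$, and there is an additional exchange component on $E_1,E_2$. The paper's proof (Appendix~\ref{append:impossibility_deterministic}) uses this structure in an essential way: first it bounds the size of any controlled set by $u+v$ (via a profile where both agents concentrate all value on that set), and then it exhibits a specific \emph{non-uniform} profile with values $1+\epsilon,1,\delta,\mu$ on the four blocks $N_1,N_2,E_1,E_2$ under which agent~$1$'s best offer from $\mathcal{O}_1$ is too small relative to what remains in $N_1$ for EF$^{+u}_{-v}$ to hold once $m\ge 6u+6v+2$. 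You will need a construction of this flavor; the uniform instance is simply not adversarial enough.
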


A \emph{randomized mechanism} takes the valuation profile $(v_1,\ldots,v_n)$ as input and outputs a \emph{probability distribution} of allocations $\{(p_k,\calA_k)\}_{k=1,\ldots,K}$ where the allocation $\calA_k$ is output by the mechanism with probability $p_k$ and $\sum_{k=1}^Kp_k=1$.
Given a probability distribution of allocations $\{(p_k,\calA_k)\}_{k=1,\ldots,K}$, we can compute the marginal probability $x_{ig}$ that an item $g$ is allocated to an agent $i$.
By the law of total probability, we have $\sum_{i=1}^nx_{ig}=1$ for each item $g$.
This enables us to view the matrix $\bX=\{x_{ig}\}_{i\in N,g\in M}$ as a \emph{fractional allocation} where each item $g$ becomes \emph{divisible} and $x_{ig}$ is the \emph{fraction} of item $g$ allocated to agent $i$.
Therefore, we can interpret a randomized mechanism by a tuple $(\calF,\calD)$ where
\begin{itemize}
    \item $\calF$ is a function that takes the valuation profile $(v_1,\ldots,v_n)$ as input and outputs a fractional allocation $\bX=\{x_{ig}\}_{i\in N,g\in M}$ (that satisfies $\sum_{i=1}^nx_{ig}=1$ for each $g\in M$), and
    \item $\calD$ is a function that takes a fractional allocation $\bX=\{x_{ig}\}_{i\in N,g\in M}$ and the valuation profile $(v_1,\ldots,v_n)$ as input and outputs a probability distribution of allocations $\{(p_k,\calA_k)\}_{k=1,\ldots,K}$ such that $x_{ig}$ is the marginal probability that item $g$ is allocated to agent $i$ for any $i\in N$ and $g\in M$.
\end{itemize}
Thereafter, we will use this interpretation and use $(\calF,\calD)$ to denote a randomized mechanism.

A randomized mechanism is EF$^{+u}_{-v}$ if every allocation $\calA_k$ possibly output by the mechanism satisfies EF$^{+u}_{-v}$.
Clearly, for a randomized mechanism $(\calF,\calD)$ to be EF$^{+u}_{-v}$, we must be able to decompose the fractional allocation $\bX=\{x_{ig}\}_{i\in N,g\in M}$ output by $\calF$  to a probability distribution of allocations $\{(p_k,\calA_k)\}_{k=1,\ldots,K}$ where each $\calA_k$ is EF$^{+u}_{-v}$.
This is not always possible---considering the example where $x_{1g}=1$ and $x_{ig}=0$ for each $g$ and each $i\neq 1$, in which case agent $1$ deterministically receives all the items.

\begin{definition}
    Given a valuation profile $(v_1,\ldots,v_n)$, a fractional allocation $\bX=\{x_{ig}\}_{i\in N,g\in M}$ is \emph{EF$^{+u}_{-v}$-realizable} if it can be written as a probability distribution of allocations $\{(p_k,\calA_k)\}_{k=1,\ldots,K}$ such that each $\calA_k$ is EF$^{+u}_{-v}$ and $x_{ig}$ is the marginal probability that item $g$ is allocated to agent $i$ for any $i\in N$ and $g\in M$.
\end{definition}

Therefore, an EF$^{+u}_{-v}$ randomized mechanism $(\calF,\calD)$ first applies $\calF$ to output a fractional allocation $\bX$ that is EF$^{+u}_{-v}$-realizable, and then applies $\calD$ to $\bX$ to obtain a probability distribution of EF$^{+u}_{-v}$ allocations.

Given a fractional allocation $\bX=\{x_{ig}\}_{i\in N,g\in M}$, an agent $i$'s \emph{expected utility}, denoted by $v_i(\bX)$, is naturally given by
$$v_i(\bX)=\sum_{g\in M}v_{ig}x_{ig}.$$
We say a randomized mechanism is \emph{truthful} if, for each agent $i$, truthfully reporting her valuation function $v_i$ maximizes her expected utility, formally defined below.

\begin{definition}
    A randomized mechanism $(\calF,\calD)$ is \emph{truthful} if, for any valuation profile $(v_1,\ldots,v_n)$, any agent $i$, and any valuation function $v_i'$, we have $v_i(\bX)\geq v_i(\bX')$, where $\bX=\calF(v_1,\ldots,v_n)$ and $\bX'=\calF(v_1,\ldots,v_{i-1},v_i',v_{i+1},\ldots,v_n)$.
\end{definition}

Obviously, the property of truthfulness depends only on $\calF$ (not on $\calD$), which completely characterizes an agent's expected utility.
Based on this, we will say the fractional allocation rule $\calF$ is truthful if it satisfies the truthful property defined in the definition above.
To successfully design a mechanism $(\calF,\calD)$ that simultaneously guarantees the truthfulness and fairness property EF$^{+u}_{-v}$, we need to design $\calF$ that is truthful and meanwhile guarantee that the fractional allocation output by $\calF$ is always EF$^{+u}_{-v}$-realizable.

One natural truthful fractional allocation rule is the \emph{equal division rule}.

\begin{definition}
    The \emph{equal division rule}, denoted by $\calF_=$, is the function that outputs the fractional allocation $\bX=\{x_{ig}\}_{i\in N,g\in M}$ with $x_{ig}=\frac1n$ for any $i\in N$ and $g\in M$.
\end{definition}

The equal division rule $\calF_=$ is clearly truthful, as it ignores agents' reported valuation functions.
When designing randomized mechanisms based on $\calF_=$, the challenging part falls into EF$^{+u}_{-v}$-realizability.
As we will see later, the equal division fractional allocation $\bX=\{x_{ig}\}_{i\in N,g\in M}$ with $x_{ig}=\frac1n$ fails to be EF1-realizable for some valuation profiles.

Some other fractional division rules $\calF$ have better guarantees on EF$^{+u}_{-v}$-realizability.
For example, \citet{AzizFrSh23} show that the \emph{probabilistic serial rule}~\cite{bogomolnaia2001new} is EF1-realizable and the \emph{maximum Nash welfare rule}~\cite{nash1950bargaining} is EF$^{+1}_{-1}$-realizable.
However, it is well-known that neither of them is truthful.

\subsection{Pareto-Optimality}
Given a valuation profile $(v_1,\ldots,v_n)$, an allocation $\calA=(A_1,\ldots,A_n)$ is \emph{Pareto-optimal} if there does not exist another allocation $\calA'=(A_1',\ldots,A_n')$ such that $v_i(A_i')\geq v_i(A_i)$ for all $i\in N$ and $v_{i^\ast}(A_{i^\ast}')>v_{i^\ast}(A_{i^\ast})$ for some agent $i^\ast$.
If such an allocation $\calA'$ exists, we say that $\calA'$ \emph{Pareto-dominates} $\calA$ or $\calA'$ is a \emph{Pareto-improvement} of $\calA$.
Similarly, we can define Pareto-optimality for fractional allocations.
We say that a fractional allocation $\bX$ is Pareto-optimal if there does not exist another fractional allocation $\bX'$ such that $v_i(\bX')\geq v_i(\bX)$ for all $i\in N$ and $v_{i^\ast}(\bX')>v_{i^\ast}(\bX)$ for some agent $i^\ast$.
The notions of Pareto-domination and Pareto-improvement are defined similarly.

We say that a deterministic mechanism $\calM$ is Pareto-optimal if it always outputs Pareto-optimal allocations.
For randomized mechanisms, we can define Pareto-optimality in two different ways.

\begin{definition}
    A randomized mechanism $(\calF,\calD)$ is $\exan$ \emph{Pareto-optimal} if $\calF$ always outputs Pareto-optimal fractional allocations.
\end{definition}

\begin{definition}
    A randomized mechanism $(\calF,\calD)$ is $\expo$ \emph{Pareto-optimal} if each allocation $\calA_k$ possibly output by the mechanism is Pareto-optimal.
\end{definition}

It is easy to see that $\exan$ Pareto-optimality implies $\expo$ Pareto-optimality: if an allocation $\calA_k$ is output with a positive probability and is not Pareto-optimal, a Pareto-improvement $\calA_k'$ to $\calA_k$ results in an $\exan$ Pareto-improvement.
In particular, given the output distribution $\{(p_k,\calA_k)\}_{k=1,\ldots,K}$, if a particular $\calA_{k^\ast}$ is not Pareto-optimal, replacing $\calA_{k^\ast}$ by $\calA_{k^\ast}'$ that Pareto-dominates $\calA_{k^\ast}$ and leaving the remaining $K-1$ allocations unchanged gives a better distribution where every agent's expected utility is weakly increased and some agent's expected utility is strictly increased.

\subsection{A Technical Lemma}
In this section, we state and prove a technical lemma that is used multiple times in our paper.

\begin{lemma}\label{lem:matching}
    Given a $k$-regular bipartite (multi-)graph $G$, there exists a $k$-coloring of its edges such that the $k$ incident edges of each vertex have distinct colors.
    Moreover, such a $k$-coloring can be found in polynomial time.
\end{lemma}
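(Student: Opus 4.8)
The plan is to prove this classical fact (it is essentially König's edge-coloring theorem for bipartite graphs, in its constructive form) by induction on $k$, extracting one perfect matching at a time. The key structural fact I would invoke is that every $k$-regular bipartite multigraph contains a perfect matching; given such a matching $M$, I color all its edges with color $k$, delete $M$, and observe that $G - M$ is a $(k-1)$-regular bipartite multigraph, to which the inductive hypothesis applies, yielding a proper $(k-1)$-coloring of the remaining edges with colors $1,\ldots,k-1$. Since each vertex loses exactly one incident edge (the one in $M$, now colored $k$) and the rest are colored with distinct colors from $\{1,\ldots,k-1\}$, all $k$ incident edges at every vertex receive distinct colors. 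The base case $k=1$ is trivial: a $1$-regular bipartite graph is itself a perfect matching, colored entirely with color $1$.

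The existence of a perfect matching in a $k$-regular bipartite multigraph follows from Hall's theorem: for any subset $S$ of one side, count edges leaving $S$ in two ways; there are exactly $k|S|$ of them, and they all land in $N(S)$, which can absorb at most $k|N(S)|$ edges, so $|N(S)| \ge |S|$, verifying Hall's condition. (The two sides have equal size since $k|L| = |E| = k|R|$.) For the polynomial-time claim, I would note that a perfect matching in a bipartite graph can be found in polynomial time (e.g.\ by augmenting-path / Hopcroft–Karp-style algorithms, or by reduction to max-flow), and the induction unrolls into at most $k$ such computations, each on a graph with at most the original number of vertices and edges; since $k \le |E|$, the total running time is polynomial in the size of $G$. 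One should be slightly careful that multigraphs are handled correctly by the matching subroutine — parallel edges between the same pair of vertices are simply treated as distinct edges, and the standard algorithms accommodate this without modification.

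The main obstacle, such as it is, is not conceptual but one of bookkeeping: making sure the multigraph case is handled cleanly throughout (Hall's theorem and the matching algorithms all go through for multigraphs, but this deserves an explicit sentence), and making sure the polynomial-time bound is stated against the right size parameter — here it is natural to measure in $|V| + |E|$, and since $G$ is $k$-regular we have $|E| = k|V|/2$, so ``polynomial in the input size'' is unambiguous. If one wanted to avoid invoking Hall's theorem as a black box, one could instead cite the standard result that the edge chromatic number of a bipartite graph equals its maximum degree (Kőnig 1916) together with a known polynomial-time algorithm realizing it, but proving it from scratch via iterated perfect matchings as above is short and self-contained, so that is the route I would take.
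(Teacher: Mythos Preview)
Your proposal is correct and follows essentially the same approach as the paper: iteratively extract a perfect matching (guaranteed by Hall's theorem) from the $k$-regular bipartite multigraph, assign it a color, and recurse on the resulting $(k-1)$-regular graph, noting that each matching can be found in polynomial time. Your write-up is more detailed than the paper's (explicitly verifying Hall's condition, handling multigraphs, and discussing the size parameter), but the argument is the same.
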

\begin{proof}
    We use a well-known corollary of Hall's theorem: a $k$-regular bipartite graph with $k\geq1$ contains a perfect matching.
    If we find and remove a perfect matching $M$ in the $k$-regular bipartite graph, we obtain a $(k-1)$-regular bipartite graph.
    Therefore, by iteratively finding and removing a perfect matching for $k$ times, we can decompose $G$ into $k$ disjoint perfect matchings.
    We can then obtain a valid $k$-coloring.
    In addition, a perfect matching can be found in polynomial time with standard algorithms.
\end{proof}

\section{Truthful and EF1 Mechanism for Two Agents}
\label{sect:twoagents}
In this section, we present a simple randomized mechanism $(\calF,\calD)$ for two agents that simultaneously guarantees the truthfulness and EF1 fairness property.
In addition, we show that using the equal division rule $\calF=\calF_=$ suffices.

\begin{theorem}\label{thm:two}
    There exists a truthful and EF1 randomized mechanism $(\calF_=,\calD)$ for $n=2$. In addition, the mechanism outputs the distribution of EF1 allocations in polynomial time.
\end{theorem}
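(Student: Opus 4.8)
The plan is to dispatch truthfulness trivially and put all the work into decomposing the equal division. Since $\calF_=$ discards the reported valuations, every agent's expected utility is $\tfrac12\sum_g v_{ig}$ regardless of what anyone reports, so $(\calF_=,\calD)$ is truthful for any $\calD$; what remains is to exhibit a polynomial-time $\calD$ witnessing that the fractional allocation $\bX$ with $x_{ig}=\tfrac12$ is EF1-realizable for $n=2$. I would reduce this to the concrete statement: \emph{there is a partition $M=P\sqcup\bar P$ such that both allocations $(P,\bar P)$ and $(\bar P,P)$ are EF1}. Given such a $P$, let $\calD(\bX)$ output $(P,\bar P)$ and $(\bar P,P)$ each with probability $\tfrac12$; only EF1 allocations are produced, and each item goes to each agent with probability exactly $\tfrac12$, matching $\bX$. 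Note that the naive idea ``take one EF1 allocation and its swap'' fails, because the swap of an EF1 allocation need not be EF1 (e.g.\ agent $1$ getting all the items can be EF1 while agent $2$ getting all the items is not), so some real structure on $P$ is required.

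To build $P$, I would use the two orderings by decreasing value. Relabel items $1,\dots,m$ with $v_1(1)\ge\dots\ge v_1(m)$ and let $\sigma$ be an analogous order for $v_2$. Group items into adjacent pairs in each order, $Q^1_i=\{2i-1,2i\}$ and $Q^2_j=\{\sigma(2j-1),\sigma(2j)\}$ (with the $v_1$-cheapest item $m$, resp.\ the $v_2$-cheapest item $\sigma(m)$, left over when $m$ is odd). Form the graph $H$ on $M$ with edge set $\{Q^1_i\}\cup\{Q^2_j\}$: it is a union of two (near-)perfect matchings, hence a disjoint union of even cycles plus, when $m$ is odd, at most one further path (whose endpoints are the two left-over items) or an isolated vertex. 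Every component is bipartite, so a proper $2$-colouring is found in linear time; take $P$ to be one colour class (padded arbitrarily on any isolated vertex). Then $|P\cap Q^1_i|=1$ for all $i$ and $|P\cap Q^2_j|=1$ for all $j$, i.e.\ $P$ contains exactly one item of each adjacent pair in \emph{both} orders.

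The heart of the argument is then the lemma: \emph{if $(A,\bar A)$ contains exactly one item of each adjacent $v_i$-pair (the $v_i$-cheapest item may sit on either side when $m$ is odd), then in each of $(A,\bar A)$ and $(\bar A,A)$ agent $i$ does not envy the other agent up to the removal of one item from the other's bundle.} Applying this with $i=1$ and then $i=2$ to the partition $(P,\bar P)$ above shows both $(P,\bar P)$ and $(\bar P,P)$ are EF1, finishing the proof. I would prove the lemma by summation by parts: with $\epsilon_k=+1$ if rank $k$ (in the $v_i$-order) lies in $A$ and $-1$ otherwise, the ``one per pair'' condition forces the partial sums $S_k=\sum_{\ell\le k}\epsilon_\ell$ to lie in $\{-1,0,1\}$, so $|v_i(A)-v_i(\bar A)|=\bigl|\sum_k S_k\,(v_i(k)-v_i(k+1))\bigr|\le v_i(1)$ (largest item value, with $v_i(m+1):=0$). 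Whoever holds the $v_i$-larger side has no envy; whoever holds the smaller side envies by at most $v_i(1)$, and the $v_i$-maximum item of the larger side cancels this: if the rank-$1$ item is on the larger side this is immediate, and otherwise the rank-$2$ item is on the larger side and a sharper version of the same estimate bounds the envy by $2v_i(2)-v_i(1)\le v_i(2)$. The one delicate point is this last case split on where the two top items land; everything else is bookkeeping. The real conceptual obstacle is that $v_1$-balance and $v_2$-balance pull against each other, and it is overcome by the observation that one cut can simultaneously bisect every adjacent pair of \emph{both} orders — which is precisely a proper $2$-colouring of the union of the two pairings.
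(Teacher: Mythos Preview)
Your proposal is correct and takes essentially the same approach as the paper: both sort by each agent's values, pair adjacent items, and find a partition splitting every pair of both orders---your vertex $2$-colouring of the union of the two matchings is exactly the edge $2$-colouring of the paper's $2$-regular bipartite graph (groups as vertices, items as edges). The only difference is in the EF1 verification: the paper observes directly that $v_i(x_j)\ge v_i(y_{j+1})$ (item from group $j$ dominates any item from group $j+1$) and telescopes to $v_i(X)\ge v_i(Y\setminus\{y_1\})$, which avoids your summation-by-parts case split on where the top item lands.
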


The truthfulness of the mechanism is guaranteed by the property of the equal division rule.
To prove Theorem~\ref{thm:two}, it remains to show that the equal division fractional allocation $\{x_{ig}\}_{i\in N,g\in M}$ with $x_{ig}=\frac12$ is EF1-realizable and there exists a polynomial time decomposition rule $\calD$ that achieves this.
This is implied by the following proposition first proved by~\citet{kyropoulou2020almost}.
We present our proof in Appendix~\ref{append:prop:perfecttwo} for completeness, which contains some ideas for our result in Sect.~\ref{sect:threeagents}.

\begin{proposition}\label{prop:perfecttwo}
    Given a valuation profile of two agents $(v_1,v_2)$, we can compute in polynomial time a partition $(X,Y)$ of $M$ such that both $(A_1,A_2)=(X,Y)$ and $(A_1,A_2)=(Y,X)$ are EF1 allocations.
\end{proposition}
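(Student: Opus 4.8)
The plan is to build the partition in two stages: use agent~$1$'s valuation to pin down a large \emph{family} of candidate partitions, all of which are simultaneously EF1 ``in both directions'' for agent~$1$ (i.e.\ $(X,Y)$ and $(Y,X)$ are both EF1 for agent~$1$), and then use agent~$2$'s valuation to pick out one member of this family that is also EF1 in both directions for agent~$2$. It is convenient to record the following reformulation: a partition $(X,Y)$ makes both $(A_1,A_2)=(X,Y)$ and $(A_1,A_2)=(Y,X)$ EF1 for agent~$i$ exactly when, writing $L$ for the $v_i$-larger and $S$ for the $v_i$-smaller of $X,Y$, we have $v_i(L)-v_i(S)\le\max_{g\in L}v_i(g)$ (the direction in which agent~$i$ holds $L$ is then automatic). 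We may also assume $m$ is even: otherwise add a dummy item valued $0$ by both agents, solve the instance, and delete the dummy afterwards, which changes nobody's value of any bundle. Relabel the items so $v_1(1)\ge v_1(2)\ge\cdots\ge v_1(m)$ and group them into consecutive pairs $P_k=\{2k-1,2k\}$, $k=1,\dots,m/2$. Call $(X,Y)$ a \emph{transversal} if $|X\cap P_k|=|Y\cap P_k|=1$ for every $k$.

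First I would show that \emph{every} transversal is EF1 in both directions for agent~$1$. If $(X,Y)$ is a transversal with $v_1(X)\ge v_1(Y)$, write $\{x_k\}=X\cap P_k$, $\{y_k\}=Y\cap P_k$, so $v_1(X)-v_1(Y)=\sum_k\big(v_1(x_k)-v_1(y_k)\big)$, and each term is at most $v_1(2k-1)-v_1(2k)$. If item~$1\in X$, summing these upper bounds telescopes to at most $v_1(1)=\max_{g\in X}v_1(g)$; if $1\in Y$, then $2\in X$, the $k=1$ term is already $\le 0$, and the remaining bounds sum to at most $v_1(3)\le v_1(2)=\max_{g\in X}v_1(g)$. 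In both cases $v_1(X)-v_1(Y)\le\max_{g\in X}v_1(g)$, which by the reformulation above is exactly what is needed.

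Next I would choose, among transversals, one that is also EF1 in both directions for agent~$2$. For each pair $P_k$ let $g_k^+,g_k^-$ be its items with $v_2(g_k^+)\ge v_2(g_k^-)$, and set $\delta_k=v_2(g_k^+)-v_2(g_k^-)$. Reorder the pairs so $\delta_1\ge\delta_2\ge\cdots$ (harmless, as the previous step holds for all transversals), then build a transversal greedily: put $g_1^+$ in $X$; for $k\ge 2$ put $g_k^+$ on whichever of $X,Y$ currently has the smaller $v_2$-total (breaking ties toward $X$), and $g_k^-$ on the other. Writing $R_k=v_2(X_{\le k})-v_2(Y_{\le k})$, one gets $|R_k|\le\delta_1$ for all $k$ from $|R_k|=\big|\,|R_{k-1}|-\delta_k\,\big|$ and $0\le\delta_k\le\delta_1$. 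Let $R$ be the final value. If $R\ge 0$ then $X$ is the $v_2$-larger side and contains $g_1^+$, so $\max_{g\in X}v_2(g)\ge v_2(g_1^+)\ge\delta_1\ge R$. If $R<0$, let $k_0$ be the largest index whose larger item was placed in $Y$ (which exists, else all $g_k^+\in X$ and $R=\sum_k\delta_k\ge 0$); tracking how $R_k$ evolves after step~$k_0$ shows $|R|<\delta_{k_0}$, and $g_{k_0}^+\in Y$ gives $\max_{g\in Y}v_2(g)\ge v_2(g_{k_0}^+)\ge\delta_{k_0}>|R|$. Either way the reformulation says this transversal is EF1 in both directions for agent~$2$; by the first step it is so for agent~$1$ as well, and it is the partition we want. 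Everything is two sorts plus two linear scans, hence polynomial.

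The hard part, and the place that needs care, is this last step: keeping $|v_2(X)-v_2(Y)|$ small is easy, but EF1-in-both-directions additionally requires the $v_2$-larger side to contain an item at least as valuable (to agent~$2$) as the gap, and proving the greedy delivers this means controlling the running difference $R_k$ — in particular showing it stays within $\delta_{k_0}$ after the last time a pair's larger item lands on the side that ends up larger. It is also worth noting why we cannot just minimize $|v_2(X)-v_2(Y)|$ over transversals: that is a partition/subset-sum-type problem and is $\mathrm{NP}$-hard, so the sorting by $\delta_k$ and the structured greedy are doing genuine work.
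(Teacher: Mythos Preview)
Your proof is correct, but it takes a genuinely different route from the paper's. The paper treats the two agents symmetrically: it sorts the items by \emph{each} agent's values, forms consecutive pairs $G^{(1)}_j$ and $G^{(2)}_j$ for $i=1,2$, and then seeks a partition $(X,Y)$ that is a transversal \emph{simultaneously} for both families of pairs. Your Step~1 argument (which the paper also uses) then gives EF1 in both directions for both agents at once. The existence of such a common transversal is obtained by building a $2$-regular bipartite multigraph on the two families of pairs, with items as edges, and invoking Hall's theorem to $2$-colour the edges (Lemma~\ref{lem:matching}).

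Your approach instead fixes the pairing only for agent~1 and then, within the family of agent-1 transversals, runs a sorted greedy on the pair-gaps $\delta_k$ to control agent~2's imbalance. This is more elementary --- no matching/graph-colouring is needed --- and your analysis of the running difference $R_k$ (in particular the $|R|<\delta_{k_0}$ bound in the $R<0$ case) is clean and correct. What the paper's symmetric approach buys is generalisability: the same ``simultaneous transversal via regular-bipartite colouring'' idea is reused verbatim for the three-agent mechanism in Sect.~\ref{sect:threeagents} (Propositions~\ref{prop:typeIEF1}--\ref{prop:typeI3coloring} and Sect.~\ref{sect:three-2YZ}), where a $3$-colouring of a $3$-regular bipartite graph plays the analogous role. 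Your greedy, being inherently two-sided, does not extend in the same way.
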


\paragraph{Proof of Theorem~\ref{thm:two}.} With the proposition above, Theorem~\ref{thm:two} is now straightforward.
Given the equal division fractional allocation rule $\calF_=$, the decomposition rule $\calD$ produces the allocation $(X,Y)$ with probability $0.5$ and the allocation $(Y,X)$ with probability $0.5$.
Since $(X,Y)$ is a partition, each item is allocated to each agent with probability $0.5$, which matches the equal division fractional allocation.

\subsection{Barriers for Extending to Three Agents}
To design a truthful and EF1 (or EF$^{+u}_{-v}$) mechanism for more than two agents, one natural idea is to use the equal division rule $\calF_=$ that guarantees truthfulness, as we did for two agents.
However, EF$^{+u}_{-v}$-realizability becomes challenging.
It may be natural to believe that the equal division fractional allocation, being envy-free in the fractional sense, has a good EF$^{+u}_{-v}$-realizability.
However, the following theorem suggests that at least this is not true for EF1, which may be surprising to the readers.

\begin{theorem}\label{thm:notEF1realizable}
    For three agents, there exists a valuation profile $(v_1,v_2,v_3)$ such that the equal division fractional allocation $\bX=\{x_{ig}\}_{i=1,2,3;g\in M}$ is not EF1-realizable.
\end{theorem}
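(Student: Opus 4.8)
The plan is to exhibit one concrete valuation profile $(v_1,v_2,v_3)$ and to show directly that the equal-division allocation $\bX=\{x_{ig}=\tfrac13\}_{i,g}$ does not lie in the convex hull of the indicator vectors of the EF1 allocations of that instance. The argument proceeds in three stages. First, characterize the EF1 allocations of the chosen instance: using the pairwise EF1 inequalities $v_i(A_i)\ge v_i(A_j)-\max_{g\in A_j}v_i(g)$ one first constrains the possible bundle-size triples $(|A_1|,|A_2|,|A_3|)$, and then resolves the remaining finitely many cases by hand, being careful to include degenerate allocations in which some agent gets nothing. Second, recall that EF1-realizability of $\bX$ is precisely feasibility of the linear program ``express $\bX$ as $\sum_k p_k$ times the indicator vector of $\calA_k$, over EF1 allocations $\calA_k$, with $p_k\ge0$ and $\sum_k p_k=1$.'' Third, certify infeasibility by Farkas' lemma / LP duality: produce explicit coefficients $\{c_{ig}\}$ for which $\tfrac13\sum_{i,g}c_{ig}>\max_{\text{EF1 }\calA}\sum_{i}\sum_{g\in A_i}c_{ig}$; equivalently, a single linear inequality that every EF1 allocation satisfies but that equal division violates.

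I expect the main difficulty to be the choice of the instance and of the certifying inequality, for the following reason. Any inequality that depends on a single coordinate (a bound $x_{ig}\le\alpha$ or $x_{ig}\ge\alpha$) or on a single agent (a lower bound on $v_\ell(A_\ell)$ or on $|A_\ell|$, or an upper bound on how much agent $\ell$ can receive) is automatically satisfied by equal division, since equal division is exactly the balanced point and the ``up to one item'' slack in EF1 always leaves room at the center. Moreover, if the profile enjoys a rich symmetry group---identical valuations, or a cyclically symmetric profile---then $\bX$ actually is EF1-realizable, obtained by symmetrizing a single balanced EF1 allocation over the symmetry group. Hence the profile must be genuinely asymmetric, and the certifying inequality must be genuinely multidimensional, spanning several item--agent pairs at once: it must record a correlation that EF1 forces among them simultaneously (for instance, a pair of items that no EF1 allocation may place with the same agent, together with size constraints that push the remaining items into a lopsided pattern), so that $\mathrm{conv}(\{\text{EF1 allocations}\})$ is skewed enough to miss its centroid-like point $\bX$.

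Once a candidate instance is in hand, the remaining verification is routine: confirm that an EF1 allocation exists (automatic), confirm that the listed EF1 allocations are exhaustive, and check the arithmetic of the dual certificate. Conceptually, this is the place where randomness over EF1 is strictly weaker than randomness over PROP1: equal division is PROP1-realizable because the $n$ proportionality constraints are ``easy to surround'' at the center, whereas the $n(n-1)$ pairwise envy constraints need not be, and the instance is engineered precisely to witness that failure for $n=3$.
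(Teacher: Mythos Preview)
Your general strategy via Farkas / LP duality is sound in principle: EF1-realizability is precisely convex-hull membership, so a separating linear functional certifies non-membership whenever the theorem is true. Your meta-analysis of why single-coordinate bounds and fully symmetric profiles cannot work is also correct and shows good intuition for where the obstruction must live.

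The gap is that you never actually exhibit the instance or the certificate. You write that ``once a candidate instance is in hand, the remaining verification is routine,'' but producing that instance \emph{is} the content of the theorem; everything you have written applies verbatim to any $n\ge 3$ and does not distinguish a world in which the theorem is true from one in which equal division is always EF1-realizable. A proof proposal here needs, at minimum, a concrete $(v_1,v_2,v_3)$ together with either the separating inequality or the exhaustive list of EF1 allocations. As written, you have a methodology but not a proof.

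For comparison, the paper's argument is more elementary and avoids both full enumeration and Farkas. It fixes a concrete $3$-agent, $4$-item instance and uses a probabilistic pigeonhole: for each agent $i$ there is a three-element set $T_i\subseteq M$ such that any EF1 allocation must give agent $i$ at least one item of $T_i$; but under equal division the expected number she gets from $T_i$ is $\tfrac13\cdot 3=1$, so in \emph{every} realized allocation she gets \emph{exactly} one item of $T_i$. Imposing these three ``exactly one'' constraints simultaneously and then tracing what happens in the realizations where a particular item goes to a particular agent yields a contradiction after a short case check. This ``expectation equals the pointwise lower bound, hence equality almost surely'' trick is exactly the kind of multidimensional correlation you were looking for, and it sidesteps the search for dual coefficients that your plan would still require.
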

\begin{proof}
        Consider the example of three agents and four items with valuations defined in the table below.
\begin{center}
    \begin{tabular}{ccccc}
    \hline
         & $g_1$ & $g_2$ & $g_3$ & $g_4$ \\
         \hline
      $v_1$ & 2 & 1 & 4 & 4\\
      $v_2$ & 1 & 2 & 4 & 4\\
      $v_3$ & 4 & 4 & 2 & 1\\
      \hline
    \end{tabular}
\end{center}

    We will show that the equal division fractional allocation is not EF1-realizable for this instance. Suppose for the sake of contradiction it is.

    Firstly, in all realized allocations, agent $1$ must receive at least one item from $\{g_1,g_3,g_4\}$, for otherwise, agent $1$ will at most receive $g_2$ which has value only $1$, and one of agent $2$ or $3$ will receive at least two items from $\{g_1,g_3,g_4\}$, which violates the EF1 property.
    Moreover, in all realized allocations, agent $1$ must receive exactly one item from $\{g_1,g_3,g_4\}$: since the expected number of items agent $1$ receives from $\{g_1,g_3,g_4\}$ is $\frac13+\frac13+\frac13=1$, if agent $1$ receives at least two items from $\{g_1,g_3,g_4\}$ in an allocation with a positive probability, then she will receive no item from $\{g_1,g_3,g_4\}$ in another allocation with a positive probability, and we have seen that this violates the EF1 property.
    For a similar reason, agent $2$ must receive exactly one item from $\{g_2,g_3,g_4\}$ in all allocations, and agent $3$ must receive exactly one item from $\{g_1,g_2,g_3\}$ in all allocations.

    Consider one realized allocation where $g_3$ is allocated to agent $3$.
    Agent $3$ cannot further take $g_4$ in this allocation, for otherwise the EF1 property from agent $1$/$2$ to agent $3$ is violated.
    By symmetry of agents $1$ and $2$, assume without loss of generality that agent $1$ takes $g_4$.
    Since we have shown that agent $1$ cannot take more than one item in $\{g_1,g_3,g_4\}$ and agent $3$ cannot take more than one item in $\{g_1,g_2,g_3\}$, item $g_1$ must then be allocated to agent $2$.

    Finally, we show that the remaining item $g_2$ cannot be properly allocated to ensure the EF1 property.
    If $g_2$ is allocated to agent $1$, the EF1 property from agent $2$ to agent $1$ is violated.
    If $g_2$ is allocated to agent $2$, then the EF1 property from agent $3$ to agent $2$ is violated.
    If $g_2$ is allocated to agent $3$, then agent $3$ receives two items from $\{g_1,g_2,g_3\}$, and we have seen that this will eventually violate the EF1 property.
\end{proof}

We have seen that the equal division fractional allocation is not EF1-realizable.
On the other hand, we will show in Sect.~\ref{sect:nagents} that it is not too bad: the equal division fractional allocation is EF$^{+0}_{O(-\sqrt{n})}$-realizable.
However, even with three agents, we do not know the minimum values for $u$ and $v$ where the equal division fractional allocation is EF$^{+u}_{-v}$-realizable, and we believe this is an interesting open problem.
In the next section, we will use a more sophisticated fractional division rule $\calF$ to design a truthful and EF$^{+1}_{-1}$ mechanism.

\section{Truthful and EF$_{-1}^{+1}$ Mechanism for Three Agents}
\label{sect:threeagents}
In this section, we present a randomized mechanism $(\calF,\calD)$ for three agents that is truthful and EF$^{+1}_{-1}$.
We will prove the following theorem in this section.

\begin{theorem}\label{thm:three}
    There exists a truthful and EF$^{+1}_{-1}$ randomized mechanism $(\calF,\calD)$ for $n=3$. In addition, the mechanism outputs the distribution of EF$^{+1}_{-1}$ allocations in polynomial time.
\end{theorem}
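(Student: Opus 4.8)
The plan is to design the fractional allocation rule $\calF$ so that it is truthful, and then construct the decomposition rule $\calD$ that certifies EF$^{+1}_{-1}$-realizability. Since truthfulness depends only on $\calF$, I would first look for a rule that is truthful but still ``responsive'' enough to the valuations that its output lies in the convex hull of EF$^{+1}_{-1}$ allocations. Given that Theorem~\ref{thm:notEF1realizable} rules out the equal division rule even for the weaker-looking target, I would not use $\calF_=$; instead I would build $\calF$ around pairwise comparisons. A natural candidate: for each pair of agents, run (a fractional version of) the two-agent procedure of Proposition~\ref{prop:perfecttwo} on the items, and combine the three pairwise outcomes in a symmetric way (e.g.\ averaging the three ``agent $i$ vs.\ agent $j$'' half-and-half splits), so that each agent still receives each item with a probability that is a fixed convex combination independent of her own report on the relevant comparisons. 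The key design constraint is that whatever $\calF$ outputs must be truthful, so I would keep the fractions $x_{ig}$ as explicit, report-monotone (in fact, locally report-independent) quantities: either $x_{ig}$ takes finitely many values determined by order comparisons, or it is a quota-type rule. I expect $\calF$ to output, for each item, a vector of fractions drawn from a small set such as $\{\tfrac16,\tfrac13,\tfrac12\}$ depending on how the three agents rank that item relative to the others.

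Second, I would set up the decomposition as a support-finding problem: I want to write $\bX$ as $\sum_k p_k \mathbbm{1}_{\calA_k}$ with each $\calA_k$ being EF$^{+1}_{-1}$. Using Proposition~\ref{prop:EFvw}, it suffices to produce allocations that are EF$^{+0}_{-2}$ (EF2), which is much more flexible, or even a mixture of EF1-for-some-pairs and EF2-for-others. The tool I would reach for is Lemma~\ref{lem:matching}: I would first argue that, for the specific structure of $\bX$ output by $\calF$, one can find a $k$-regular bipartite multigraph $H$ (agents on one side, items on the other, with edge multiplicities proportional to $k x_{ig}$ for a suitable common denominator $k$) whose proper $k$-edge-colouring, by the lemma, yields $k$ integral allocations each realizing the marginals $x_{ig}$. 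The remaining — and main — task is to show these $k$ allocations, or a refinement of them, can be chosen EF$^{+1}_{-1}$. Where the straightforward colouring does not give fairness directly, I would solve a linear program whose feasible region encodes ``fractional allocation with the prescribed marginals, supported near EF$^{+1}_{-1}$ allocations'' and round a vertex solution: with three agents the envy constraints, while not totally unimodular in general, involve only $n(n-1)=6$ inequalities, and I would exploit the sparsity of each vertex (most $x_{ig}\in\{0,1\}$) together with a case analysis on the few fractional items to repair fairness by local swaps.

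Concretely the steps, in order: (1) define $\calF$ via pairwise/ordinal comparisons and verify truthfulness by checking that each $x_{ig}$, as a function of agent $i$'s report, is a step function whose value weakly increases as $v_{ig}$ increases relative to agent $i$'s other item values (so misreporting cannot help); (2) prove an ``at most one fractional item of each type per agent'' structural lemma about $\bX$; (3) write $\bX$ as a convex combination via the regular-bipartite-multigraph/edge-colouring argument of Lemma~\ref{lem:matching}; (4) show each piece is EF$^{+1}_{-1}$ — handling the generic case by a counting argument (each agent's bundle and each competitor's bundle differ by at most one ``heavy'' item after adding one and removing one) and the residual cases by LP rounding of vertex solutions followed by bounded local exchanges; (5) observe all steps are polynomial-time (perfect matchings, LP solving, and the bounded case analysis), giving the ``polynomial time'' clause.

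The hard part will be step (4): guaranteeing the EF$^{+1}_{-1}$ property simultaneously for all three ordered pairs of agents in every realized allocation. Unlike the two-agent case, fixing envy from $i$ to $j$ by moving an item can create envy from $k$ to $i$, so the repair argument is not local in an obvious way; I anticipate needing a careful combinatorial invariant — for instance, tracking for each agent the single ``enviable'' item in each competitor's bundle and arguing that the colouring/LP construction never forces two such items into one bundle — and a somewhat intricate case analysis driven by the small number ($\le 3$) of items on which the three agents' orderings disagree. The coloured-multigraph decomposition and the reduction to EF2 via Proposition~\ref{prop:EFvw} are the levers I would use to keep that analysis finite.
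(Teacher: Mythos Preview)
Your proposal has a genuine gap at the most critical point: you do not actually specify a working truthful fractional rule $\calF$. Your concrete suggestion---averaging the three pairwise half-and-half splits from Proposition~\ref{prop:perfecttwo}---collapses to the equal division rule $\calF_=$ (each agent participates in two of the three pairs and receives $\tfrac12$ there, receiving $0$ in the third pair, so the average is $\tfrac13$ for every item), which you have already correctly ruled out via Theorem~\ref{thm:notEF1realizable}. The speculation that $\calF$ should output fractions in $\{\tfrac16,\tfrac13,\tfrac12\}$ ``depending on how the three agents rank that item relative to the others'' is not a rule, and it is hard to see how a fully symmetric rule of this kind could be truthful: an agent could misreport to alter the three-way ranking and hence her fraction. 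Without a concrete $\calF$ whose outputs have exploitable structure, steps (2)--(4) of your plan cannot get off the ground; the generic edge-colouring of a $k$-regular multigraph with multiplicities $k x_{ig}$ only reproduces the marginals, it gives no handle on envy.

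The paper's solution hinges on an insight your proposal does not contain: the rule is deliberately \emph{asymmetric} across agents. Agent~$3$ is singled out to sort the items and partition $M$ into $m/3$ triples $G_j^{(3)}$ of consecutive items; she always receives fraction $\tfrac13$ of every item, so her truthfulness is immediate. Within each triple, agents~$1$ and~$2$'s fractions are determined solely by their rankings of the three items in that triple (Type~I if their favourites differ, Type~II otherwise), so truthfulness for them becomes a local check on a single triple with a carefully designed tie-breaking rule. This triple structure is then the engine of the decomposition: ``regular'' allocations that pick one item per triple per agent are automatically EF$1$ for agent~$3$ (Proposition~\ref{prop:regular}); for Type~I triples a $3$-colouring of a $3$-regular bipartite graph (groups vs.\ groups under two different $\Delta$-orderings) yields three EF$1$ allocations (Propositions~\ref{prop:typeIEF1}--\ref{prop:typeI3coloring}); for Type~II triples one finds a three-way partition $(X,Y,Z)$ via LP vertex rounding for $X$ (with a delicate Case~2(c) requiring a pivot to an adjacent vertex) and a $2$-colouring for $Y,Z$; finally, the Type~I and Type~II pieces are paired using Proposition~\ref{prop:typeIEF} so that each realized allocation is EF$^{+1}_{-1}$. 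Your toolkit (colourings, LP vertices, local repair) matches the paper's, but the argument is not a generic case analysis on a few fractional items---it is driven throughout by the triple-grouping induced by agent~$3$, and that idea is missing from your plan.
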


The remaining part of this section is for the proof of Theorem~\ref{thm:three}, and it is organized as follows.
In Sect.~\ref{sect:three-fractional}, we define the fractional division rule $\calF$ and show that it is truthful.
From Sect.~\ref{sect:three-agent3} to Sect.~\ref{sect:three-comb}, we describe the decomposition rule $\calD$ and show that it is EF$^{+1}_{-1}$.

\paragraph{Technical Overview.}
Our fractional division rule $\calF$ in Sect.~\ref{sect:three-fractional} requires partitioning the items into $m/3$ groups based on the descending order of agent $3$'s value to each item, each of which consists of three items.
Depending on the valuations of agents $1$ and $2$, we classify these groups into Type I and Type II.
For a three-item group, it belongs to Type I if the favorite items for agents $1$ and $2$ are different, and it belongs to Type II otherwise.
Specifically, a group $G=\{a,b,c\}$ belongs to Type I if we can find two items $a$ and $b$ such that agent $1$ and agent $2$ respectively believe that $a$ and $b$ have the highest values, where ties are allowed but the tie-breaking rule should be carefully designed to ensure truthfulness.
For such a group, each of the agents $1, 2$, and $3$ will receive $2/3, 0$, and $1/3$ fraction of item $a$ respectively, $0, 2/3$, and $1/3$ fraction of $b$ respectively, and $c$ will be allocated to the three agents evenly.
A group that does not belong to Type I belongs to Type II, where each of the three items will be allocated evenly among the three agents.
We show in Lemma~\ref{lem:three_truthful} that $\calF$ is truthful.

The next four sections \ref{sect:three-agent3}-\ref{sect:three-comb} describe the decomposition rule $\calD$.
For an integral allocation, if each bundle contains exactly one item from each three-item group, we will show it is EF$1$ to agent $3$ in Sect.~\ref{sect:three-agent3}, and such an allocation is referred to as a \emph{regular} allocation.
In Sect.~\ref{sect:three-1} and Sect.~\ref{sect:three-2}, we will restrict ourselves to regular allocations.

In Sect.~\ref{sect:three-1}, we will handle items that fall into groups of Type I.
We show that the fractional allocation of Type I groups can be decomposed into a distribution over EF$1$ allocations.
Specifically, we first introduce three allocation rules, each of which generates a possible integral allocation for a three-item group, such that the marginal probabilities match the fractional allocation if each allocation rule is applied to each group with a specific probability.
We then find a three-partition of Type I groups through a $3$-coloring of a $3$-regular bipartite graph as in Lemma~\ref{lem:matching}.
This partition guarantees that, when groups in the same set in the partition follow the same allocation rule while groups in different sets follow different allocation rules, the resulting integral allocation of Type I items satisfies EF$1$.
Additionally, for further combination, we will show in Sect.~\ref{sect:three-comb} that the fairness guarantee between some specific pairs of bundles to an agent is stronger than EF$1$ through a more careful analysis.

In Sect.~\ref{sect:three-2}, we will handle items that fall into groups of Type II.
We will find a distribution of EF$^{+1}_{-1}$ allocations such that the marginal probabilities match the fractional allocation.
Specifically, we will construct a three-partition $(X,Y,Z)$ of the items in Type II groups such that any permutation of $(X,Y,Z)$ gives an EF$^{+1}_{-1}$ allocation.
To satisfy the envy-based fairness guarantee, we would like each bundle to be about the average value for agents $1$ and $2$, with the added requirement that any deviation of a bundle from this average can be adjusted by adding or removing one item.
The construction of bundle $X$ involves a careful rounding of the fractional solution of a linear program, which relies on the fact that both agents have the same favorite item in each Type II group.
Bundles $Y$ and $Z$ are then constructed using a combinatorial approach similar to that for Type I groups.
Similarly, for further combination, we remark that the fairness guarantee of our partition between some specific pairs of bundles to an agent is stronger than EF$_{-1}^{+1}$.

Combining the results above, we can obtain a truthful and EF$_{-2}^{+1}$ mechanism.
In Sect.~\ref{sect:three-comb}, we further improve the fairness guarantee to EF$_{-1}^{+1}$, which requires a more advanced choosing and combining method of the two types of allocations.
Note that in this section, the regularity of Type II allocations may be slightly violated, yet the marginal probability of each item allocated to each agent is still satisfied so that truthfulness is preserved.
As a result of this slight violation, the EF$1$ property for agent $3$ will be relaxed to EF$_{-1}^{+1}$, but it introduces more flexibility which makes it possible to improve the fairness guarantee from EF$_{-2}^{+1}$ to EF$_{-1}^{+1}$ for agents $1$ and $2$.
We then conclude that the fractional allocation output by $\calF$ is EF$^{+1}_{-1}$-realizable, which concludes Theorem~\ref{thm:three}.

\subsection{The Fractional Division Rule $\calF$ and Its Truthfulness}
\label{sect:three-fractional}
In this section, we define $\calF$ and prove that it is truthful.

First of all, we assume without loss of generality that $m$ is a multiple of $3$, for otherwise we can add one or two dummy items where agents have value $0$.
Let agent $3$ sort the items $g^{(3)}_1,\ldots,g^{(3)}_m$ by the descending values: $v_3(g^{(3)}_1)\geq v_3(g^{(3)}_2)\geq\cdots \geq v_3(g^{(3)}_m)$.
Ties are broken arbitrarily.
Based on agent $3$'s valuation, define the partition $(G^{(3)}_1,\ldots,G^{(3)}_{m/3})$ of $M$ where $G^{(3)}_j=\{g_{3j-2}^{(3)},g_{3j-1}^{(3)},g_{3j}^{(3)}\}$.

The rule $\calF$ decides how the three items in each group are (fractionally) allocated to the three agents.
For agent $3$, she always receives a fraction of $1/3$ for each item.
The allocation for agents $1$ and $2$ depends on their value rankings over the three items, which is specified as follows.

Let $a,b,c$ be the three items in a group $G^{(3)}_j$.
At a high level, the group falls into one of the following two cases, and we will call the first case \emph{Type I} and the second case \emph{Type II}.
\begin{itemize}
    \item \textbf{Type I}: If it is possible to find two items $a$ and $b$ such that agent $1$ believes $a$ has the highest value (ties are allowed) and agent $2$ believes $b$ has the highest value (ties are allowed), then $a$ is allocated such that agents $1$, $2$, and $3$ receive fractions of $2/3, 0$, and $1/3$ respectively, $b$ is allocated such that agent $1$, $2$, and $3$ receive fractions of $0, 2/3$, and $1/3$ respectively, and $c$ is divided equally among the three agents.
    \item \textbf{Type II}: If both agents $1$ and $2$ believe an item $a$ has values strictly higher than $b$ and $c$, then all three items are divided equally among the three agents.
\end{itemize}

For those Type I groups, there may be ties, i.e., the selections of $a$ and $b$ such that agent $1$ favors $a$ and agent $2$ favors $b$ may not be unique. This happens when an agent values two or three items equally.
In this case, ties need to be handled \emph{properly} to guarantee truthfulness.
In particular, breaking the ties by a consistent item index order, which is commonly used in other mechanisms, fails here, as illustrated in Appendix~\ref{append:subtlety}.
Intuitively, when an agent has a tie on the item with the highest value, the tie should be broken in a way that favors the other agent.

Let $F_1\subseteq G^{(3)}_j$ be the set of items that agent $1$ favors. Formally, $F_1$ is the set of items that agent $1$ equally prefers, and they have values strictly higher than items in $G^{(3)}_j\setminus F_1$.
We have $1\leq |F_1|\leq 3$.
Let $F_2$ be the set of items favored by agent $2$ which is defined similarly.
Algorithm~\ref{alg:tiebreaking} selects one ``favorite'' item for each of the two agents $1$ and $2$ given $F_1$ and $F_2$.
We have a total of ten cases.
The first case corresponds to Type II group.
The remaining nine cases correspond to Type I group.
For Case 2 and Case 3, the selection of favorite items is unique provided that the favorite items for both agents are different.
For Case 6, both agents believe two of the three items are equally valuable and more valuable than the third, in which case the selection of favorite items is also more or less unique.
For the remaining six cases, we always select the favorite items such that one agent's favorite item is least preferred (or equally least preferred) by the other agent.

\begin{algorithm}[h]
\caption{Selecting favorite items for agents $1$ and $2$ from each group $G^{(3)}_j$.}
\label{alg:tiebreaking}
\KwInput{$F_1,F_2\subseteq G^{(3)}_j$}
\KwOutput{A favorite item for agent $1$ and a favorite item for agent $2$}
If $|F_1\cup F_2|=1$, let the unique item in $F_1$ or $F_2$ be the favorite item for both agents\;
If $|F_1|=1$ and $|F_2\setminus F_1|=1$, let the item in $F_1$ be agent $1$'s favorite and the item in $F_2\setminus F_1$ be agent $2$'s favorite\;
If $|F_2|=1$ and $|F_1\setminus F_2|=1$, this is symmetric to the previous case and is handled similarly\;
If $|F_1|=1$ and $|F_2\setminus F_1|=2$, let the item in $F_1$ be agent $1$'s favorite and the item in $F_2\setminus F_1$ where agent $1$ has a less value be agent $2$'s favorite; if agent $1$ values both items in $F_2\setminus F_1$ equally, select an arbitrary item be agent $2$'s favorite\;
If $|F_2|=1$ and $|F_1\setminus F_2|=2$, this is symmetric to the previous case and is handled similarly\;
If $|F_1|=|F_2|=2$ and $F_1=F_2$, arbitrarily select an item from $F_1=F_2$ for agent $1$'s favorite, and the other item is agent $2$'s favorite\;
If $|F_1|=|F_2|=2$ and $F_1\neq F_2$, let the item in $F_1\setminus F_2$ be agent $1$'s favorite and the item in $F_2\setminus F_1$ be agent~$2$'s favorite\;
If $|F_1|=2$ and $|F_2|=3$, agent $1$ selects an arbitrary item from $F_1$ as her favorite, and agent $2$ selects the item in $F_2\setminus F_1$ as her favorite\;
If $|F_2|=2$ and $|F_1|=3$, this is symmetric to the previous case and is handled similarly\;
If $|F_1|=|F_2|=3$, select the favorite items for the two agents arbitrarily, provided that different items are selected for both agents\;
\end{algorithm}

Given the rule for selecting the favorite items determined, the allocation rule is given in Table~\ref{tab:ruleF}.

\begin{table}[h]
    \centering
    \begin{tabular}{ll}
    \hline
        Type I: & $a$ is agent $1$'s favorite item and $b$ is agent $2$'s favorite item  \\
        Allocation: & $(x_{1a},x_{2a},x_{3a})=(\frac23,0,\frac13)$, $(x_{1b},x_{2b},x_{3b})=(0,\frac23,\frac13)$, $(x_{1c},x_{2c},x_{3c})=(\frac13,\frac13,\frac13)$\\
    \hline
        Type II: & $a$ is the favorite item for both agents $1$ and $2$\\
        Allocation: & $(x_{1a},x_{2a},x_{3a})=(\frac13,\frac13,\frac13)$, $(x_{1b},x_{2b},x_{3b})=(\frac13,\frac13,\frac13)$, $(x_{1c},x_{2c},x_{3c})=(\frac13,\frac13,\frac13)$\\
    \hline
    \end{tabular}
    \caption{The allocation rule for each group $G^{(3)}_j=\{a,b,c\}$ with two different types.}
    \label{tab:ruleF}
\end{table}

We have then completely defined the allocation rule for each group $G^{(3)}_j$, which concludes the definition of the fractional division rule $\calF$.

\begin{lemma}\label{lem:three_truthful}
    The fractional division rule $\calF$ defined in this section is truthful.
\end{lemma}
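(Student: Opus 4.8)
The plan is to exploit the remark made right after the definition of truthful randomized mechanisms — that truthfulness is a property of the fractional rule $\calF$ alone — and to reduce the lemma to a family of tiny, group‑by‑group optimization problems. Agent $3$'s report enters $\calF$ only through the sorted order $g^{(3)}_1,\dots,g^{(3)}_m$, hence only through the partition $(G^{(3)}_1,\dots,G^{(3)}_{m/3})$; whatever this partition is, agent $3$ receives a fraction $\tfrac13$ of \emph{every} item, so her expected utility is the constant $\tfrac13\sum_{g\in M}v_3(g)$, independent of her report, and she is (weakly) truthful. Agents $1$ and $2$ do not influence the partition at all, so we may fix the partition (determined by agent $3$'s report) and fix agent $2$'s report; agent $1$'s expected utility is then the sum over $j$ of her utility from the three items of $G^{(3)}_j$, and the $j$‑th term depends on agent $1$'s report only through her declared values on $G^{(3)}_j$. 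Hence it suffices to prove: for a single group $G=\{a,b,c\}$ and any fixed report of agent $2$ (equivalently, a fixed favorite set $F_2\subseteq G$), agent $1$ cannot gain by misreporting her values on $G$. Since Table~\ref{tab:ruleF} and every branch of Algorithm~\ref{alg:tiebreaking} are symmetric under swapping agents $1$ and $2$, the same argument settles agent $2$, and the lemma follows.

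For the single‑group problem I would first rewrite the payoff: if $G$ is Type~I with agent $1$'s selected favorite $\alpha$ and agent $2$'s selected favorite $\beta$ (so $\alpha\neq\beta$), then Table~\ref{tab:ruleF} gives agent $1$ the utility $\tfrac23 v_1(\alpha)+\tfrac13 v_1(G\setminus\{\alpha,\beta\})=\tfrac13 v_1(G)+\tfrac13\bigl(v_1(\alpha)-v_1(\beta)\bigr)$, while if $G$ is Type~II she gets $\tfrac13 v_1(G)$ — the same formula with $\alpha=\beta$ and ``gain'' $0$. Scanning the ten cases of Algorithm~\ref{alg:tiebreaking} records two structural facts: the item it returns for agent $1$ always lies in $F_1$ and the item it returns for agent $2$ always lies in $F_2$; and a Type~II outcome occurs only when $|F_2|=1$. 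Because agent $1$ can declare values realizing any nonempty $F_1\subseteq G$, any report she submits yields a gain $v_1(\alpha)-v_1(\beta)$ with $\beta\in F_2$, $\alpha\in G\setminus\{\beta\}$ (Type~I), or the gain $0$ (Type~II, only if $|F_2|=1$). Writing $M_1:=\max\{v_1(\alpha)-v_1(\beta):\beta\in F_2,\ \alpha\in G\setminus\{\beta\}\}$, it then remains to verify that truthful reporting attains $M_1$ when $|F_2|\ge 2$ and $\max(M_1,0)$ when $|F_2|=1$ (noting $M_1\ge 0$ whenever $|F_2|\ge 2$).

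That last verification is the bulk of the work and I would carry it out by an exhaustive check: split on $|F_2|\in\{1,2,3\}$ and, within each, on the shape of agent $1$'s true favorite set $F_1^{\ast}$ (with $|F_1^{\ast}|\in\{1,2,3\}$); for each combination, identify the branch of Algorithm~\ref{alg:tiebreaking} triggered by the truthful profile, read off $(\alpha,\beta)$, and compare $v_1(\alpha)-v_1(\beta)$ with the target. The tie‑breaking rule is exactly what makes this go through: when agent $1$ reports a genuine top‑tie, the returned $\alpha$ is still a $v_1$‑maximizer over $G$; and whenever the rule must break a tie inside agent $2$'s favorite set, it hands agent $2$ the item agent $1$ values \emph{least}, which is precisely what pushes the truthful gain up to $M_1$. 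The Type~II sub‑case arises exactly when $F_1^{\ast}=F_2=\{t\}$: then $t$ is agent $1$'s unique best item, every Type~I alternative has strictly negative gain, and the Type~II gain $0=\max(M_1,0)$ is indeed optimal. The few branches carrying an ``arbitrary'' choice never cause trouble: on a truthful profile they only break ties between items agent $1$ values equally, and on a misreport the worst (or best) such choice still leaves her gain $\le M_1$.

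The main obstacle is not any single hard step but getting the tie‑breaking right across all ten branches: the naive ``break ties by item index'' rule, which already suffices to guarantee that a fair integral allocation exists, is \emph{not} enough for truthfulness (as the paper flags in an appendix), so the proof genuinely rests on the ``break a tie in favor of the opponent'' principle — which, from the reporting agent's standpoint, is exactly what keeps her assigned favorite a valuation‑maximizer while forcing her opponent's assigned favorite toward the items she values least.
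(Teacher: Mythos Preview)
Your proposal is correct and shares its backbone with the paper's argument: both handle agent~$3$ trivially, reduce the problem for agents~$1$ and~$2$ to a single three-item group $G$ (since the partition is fixed by agent~$3$), and then perform a case analysis over the branches of Algorithm~\ref{alg:tiebreaking}. The paper organizes this case analysis by the \emph{type transition}: it splits on whether $G$ is Type~I or Type~II under truth-telling and under the misreport, and in the hardest (I,\,I) sub-case it argues directly that whenever agent~$1$'s least-preferred item $c$ lies in $F_2$, the algorithm already selects $c$ as agent~$2$'s favorite, so no further gain is possible. Your organization is slightly different and arguably tighter: you rewrite the payoff as $\tfrac13 v_1(G)+\tfrac13\bigl(v_1(\alpha)-v_1(\beta)\bigr)$, extract the structural upper bound $v_1(\alpha)-v_1(\beta)\le M_1$ from the observation that $\beta\in F_2$ always, and then verify case-by-case on $(|F_2|,|F_1^{\ast}|)$ that truth attains $M_1$ (or $\max(M_1,0)$ when $|F_2|=1$). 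The paper's route dispatches three of the four type transitions in one line each and concentrates the work in the remaining one; your route makes the optimality target explicit and uniform across all branches, which has the pleasant side effect of making the role of the ``break ties in favor of the opponent'' principle completely transparent.
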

\begin{proof}
    The truthfulness for agent $3$ is trivial, as she receives each item with the fraction $1/3$ regardless of the valuation function she reports.
    For agents $1$ and $2$, it suffices to analyze the truthfulness within each group $G^{(3)}_j=\{a,b,c\}$, as the partition of the groups depends solely on agent $3$'s valuation function.
    We will analyze agent $1$ without loss of generality.

    If the group is of Type II when agent $1$ reports her valuation truthfully, we have $v_1(a)>\max\{v_1(b),v_1(c)\}$ and $v_2(a)>\max\{v_2(b),v_2(c)\}$.
    If agent $1$ reports a valuation function such that the group is still of Type II, the allocation does not change, and agent $1$'s expected utility does not change.
    Suppose agent $1$ reports a valuation function that makes the group of Type I.
    Then agent $2$ will take $2/3$ fraction of item $a$, and agent $1$ will lose a fraction $1/3$ from item $a$ compared with truth-telling.
    Agent $1$ will gain a fraction $1/3$ from either item $b$ or item $c$.
    Since agent $1$ values item $a$ strictly higher than $b$ or $c$, the misreporting of the valuation function is harmful to agent $1$.

    Now, suppose the group is of Type I when agent $1$ reports truthfully.
    If she misreports her valuation function such that the group becomes a Type II group, then she loses a fraction $1/3$ from the item she prefers the most and gains a fraction $1/3$ from another item for which she may or may not prefer the most.
    It is clear that the misreporting is not beneficial.
    It remains to discuss the case where the group is still of Type I after agent $1$'s misreporting.

    Assume without loss of generality that $v_1(a)\geq v_1(b)\geq v_1(c)$.
    If $v_1(b)=v_1(c)$, it is easy to see that agent $1$ will always receive a value of $\frac23\cdot v_1(a)+\frac13\cdot v_1(b)$ by truth-telling, which already maximizes agent $1$'s utility subject to our division rule.
    The only case agent $1$ can possibly benefit is when $v_1(b)>v_1(c)$ and agent $1$ receives a value of $\frac23\cdot v_1(a)+\frac13\cdot v_1(c)$.
    In this case, $c$ is not the favorite item for agent $2$, and agent $1$ can only possibly benefit by misreporting such that $c$ becomes the favorite item for agent $2$.
    We will show that this is impossible.
    
    If $c\notin F_2$, item $c$ will never be the favorite item for agent $2$, and agent $1$ cannot change this fact by misreporting.
    If $c\in F_2$, by checking all the relevant cases in Algorithm~\ref{alg:tiebreaking}, we can see that $c$ is always selected as the favorite item for agent $2$.
    Indeed, cases 2, 3, 4, 5, 7, and 8 are the relevant cases, where it is easy to check that agent $2$ chooses item $c$ as her favorite.
    Case 1 is about Type I group and is thus irrelevant.
    We have $c\notin F_2$ for Case 6 (in particular, $c\notin F_1$ as $v_1(b)> v_1(c)$, so $c\notin F_2$ as $F_1=F_2$), so Case 6 is irrelevant.
    Case 9 and 10 are irrelevant as we have assumed $v_1(b)>v_1(c)$ (so $|F_1|=3$ is impossible).
\end{proof}

\subsection{Guaranteeing EF$_{-1}^{+1}$ Property for Agent 3}
\label{sect:three-agent3}
From this section on, we will describe the decomposition rule $\calD$ and show the fairness property EF$^{+1}_{-1}$.

Firstly, if each group $G^{(3)}_j$ is allocated such that each agent is allocated exactly one item, we show that the EF1 property for agent $3$ is satisfied.

\begin{proposition}\label{prop:regular}
    Let $(A_1,A_2,A_3)$ be an allocation where $|A_i\cap G^{(3)}_j|=1$ for each $i=1,2,3$ and each $j=1,\ldots,m/3$. Then the EF1 property is satisfied for agent $3$.
\end{proposition}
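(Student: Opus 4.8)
The plan is to lean on the one structural fact that drives the grouping. Since agent~$3$ orders the items $g^{(3)}_1,\dots,g^{(3)}_m$ in weakly decreasing order of her own value and the groups $G^{(3)}_1,\dots,G^{(3)}_{m/3}$ are the consecutive blocks of three in that order, every item of $G^{(3)}_j$ is worth at least as much to agent~$3$ as every item of $G^{(3)}_{j'}$ whenever $j<j'$. This ``cross-group monotonicity'' is all that is needed; the internal Type~I / Type~II structure of a group plays no role in this proposition.

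Fix the other agent, say agent $\ell\in\{1,2\}$; the argument is symmetric in $\ell$, so I would carry it out once and then apply it to both $\ell=1$ and $\ell=2$. For each $j$, let $c_j$ be the unique item of $A_3\cap G^{(3)}_j$ and $d_j$ the unique item of $A_\ell\cap G^{(3)}_j$ (these exist and are unique by the hypothesis $|A_i\cap G^{(3)}_j|=1$). I would remove from $A_\ell$ the item $d_1$, namely agent~$\ell$'s item from the very first, highest-valued group (which is in fact a highest-$v_3$-value item of $A_\ell$), and prove $v_3(A_3)\ge v_3(A_\ell\setminus\{d_1\})$. The key step is a shifted pairing: for every $j=2,\dots,m/3$ we have $v_3(c_{j-1})\ge v_3(d_j)$, because $c_{j-1}\in G^{(3)}_{j-1}$, $d_j\in G^{(3)}_j$, and $j-1<j$. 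Summing and using additivity of $v_3$ together with $v_3(c_{m/3})\ge 0$ gives
$$v_3(A_\ell\setminus\{d_1\})=\sum_{j=2}^{m/3}v_3(d_j)\le\sum_{j=2}^{m/3}v_3(c_{j-1})=\sum_{j=1}^{m/3-1}v_3(c_j)\le\sum_{j=1}^{m/3}v_3(c_j)=v_3(A_3).$$
Taking $S_i=\emptyset$ and $S_\ell=\{d_1\}$ (so $|S_i|\le 0$, $|S_\ell|\le 1$) in the definition of EF$^{+0}_{-1}$, the EF1 condition holds from $A_3$ to $A_\ell$; doing this for $\ell=1$ and $\ell=2$ yields EF1 for agent~$3$.

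There is no substantial obstacle: the only points requiring care are getting the index shift in the pairing right (match agent $\ell$'s item from group $j$ with agent~$3$'s item from group $j-1$ after discarding agent~$\ell$'s group-$1$ item) and noting the boundary contribution $v_3(c_{m/3})\ge 0$; the degenerate case $m/3=1$ is handled automatically since then $A_\ell\setminus\{d_1\}=\emptyset$.
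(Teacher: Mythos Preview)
Your proof is correct and follows essentially the same approach as the paper: both use the cross-group monotonicity $v_3(g)\ge v_3(g')$ for $g\in G^{(3)}_{j-1}$, $g'\in G^{(3)}_{j}$, remove the other agent's item from the first group, and apply the shifted pairing to conclude $v_3(A_3)\ge v_3(A_\ell\setminus\{d_1\})$. Your write-up is slightly more explicit about the summation and the boundary term $v_3(c_{m/3})\ge 0$, but the argument is the same.
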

\begin{proof}
    For each $i=1,2,3$ and each $j=1,\ldots,m/3$, let $g_{ij}$ be the unique item in the set $A_i\cap G_j^{(3)}$.
    Then $A_1=\{g_{1j}\}_{j=1,\ldots,m/3}$, $A_2=\{g_{2j}\}_{j=1,\ldots,m/3}$, and $A_3=\{g_{3j}\}_{1,\ldots,m/3}$.
    By the way the groups are defined, we have $v_3(g_{i_1j})\geq v_3(g_{i_2(j+1)})$ for any $i_1,i_2\in\{1,2,3\}$ and any $j=1,\ldots,m/3-1$.
    Therefore, we have $v_3(A_3)\geq v_3(A_1\setminus\{g_{11}\})$ and $v_3(A_3)\geq v_3(A_2\setminus\{g_{21}\})$, which concludes the proposition.
\end{proof}

In Sect.~\ref{sect:three-1} and Sect.~\ref{sect:three-2}, we will make sure every allocation generated by $\calD$ satisfies the property in the proposition above.
We will call such allocations \emph{regular}.
By the proposition above, the property EF1 is guaranteed for agent $3$.
Moreover, given a regular allocation $(A_1,A_2,A_3)$, all its permutations (e.g., $(A_1,A_3,A_2)$, $(A_2,A_1,A_3)$, etc.) are also regular by definition.
Thereafter, we will also use the word ``regular'' to describe a three-partition of $M$.

In Sect.~\ref{sect:three-comb}, we will violate the regularity requirement by a little bit.
Specifically, it may have the form $(A_1\setminus\{g\}, A_2\cup\{g\}, A_3)$ for an item $g\in A_1$, or $(A_1\cup\{g'\}, A_2\setminus\{g'\}, A_3)$ for an item $g'\in A_2$, where $(A_1, A_2, A_3)$ is a regular allocation.
Therefore, EF$_{-1}^{+1}$ for agent $3$ is ensured, and we will only discuss agents $1$ and $2$ thereafter.

\subsection{EF1 Allocations for Type I}
\label{sect:three-1}
Let $M^I$ be the set of items that belong to groups of Type I.
In this section, we will find three regular (satisfying the property in Proposition~\ref{prop:regular}) EF1 allocations $\calA_1^I,\ldots,\calA_{3}^I$ of $M^I$, each of which is sampled with probability $1/3$, such that the marginal probability that each item is allocated to each agent follows the probability specified by Table~\ref{tab:ruleF}.

We first specify three possible allocations of the three items in a single group $\{a_1,a_2,b\}$, where $a_1$ is agent $1$'s favorite item and $a_2$ is agent $2$'s favorite item.
\begin{itemize}
    \item Allocation $\chi_s$: agent $1$ receives $a_1$, agent $2$ receives $a_2$, and agent $3$ receives $b$.
    \item Allocation $\chi_1$: agent $1$ receives $a_1$, agent $2$ receives $b$, and agent $3$ receives $a_2$.
    \item Allocation $\chi_2$: agent $1$ receives $b$, agent $2$ receives $a_2$, and agent $3$ receives $a_1$.
\end{itemize}
If each of $\chi_s$, $\chi_1$, and $\chi_2$ is sampled with probability $1/3$, it is straightforward to check that $a_1$ is allocated to agent $1$ with probability $2/3$, $a_2$ is allocated to agent $2$ with probability $2/3$, and each item is allocated to agent $3$ with probability $1/3$.
This agrees with the probabilities specified in Table~\ref{tab:ruleF}.
We will make sure that, for each group $G^{(3)}_j$, each of $\chi_s$, $\chi_1$, and $\chi_2$ appears exactly once in the three allocations $\calA_1^I, \calA_2^I, \calA_3^I$.

Given a group $G=\{a_1,a_2,b\}$, let $\Delta_1(G)=v_1(a_1)-v_1(b)$ and $\Delta_2(G)=v_2(a_2)-v_2(b)$.
Let both agents sort the groups by the descending values of $\Delta_1(\cdot)$ and $\Delta_2(\cdot)$ respectively.
Let $G^{(1)}_1,\ldots,G^{(1)}_{k}$ and $G^{(2)}_1,\ldots,G^{(2)}_{k}$ be the results of the sorting.
Notice that $(G^{(1)}_1,\ldots,G^{(1)}_{k})$ is a permutation of $(G^{(2)}_1,\ldots,G^{(2)}_{k})$ and $k$ is the number of Type I groups.
We will specify an allocation of $M^I$ by specifying one of $\chi_s,\chi_1,\chi_2$ for each of the $k$ groups.
We will assume $k$ is a multiple of $3$ without loss of generality: if not, we add dummy items that form dummy groups.
Next, define $H^{(1)}_{j}=\{G^{(1)}_{3j-2},G^{(1)}_{3j-2},G^{(1)}_{3j}\}$ for each $j=1,\ldots,k/3$.
Define $H^{(2)}_j$ similarly for each $j=1,\ldots,k/3$.
The proposition below gives a sufficient condition for an allocation to be EF1.

\begin{proposition}\label{prop:typeIEF1}
    Suppose an allocation $\calA=(A_1,A_2,A_3)$ satisfies that, for each $H^{(i)}_j$ with $i=1,2$ and $j=1,\ldots,k/3$, exactly one group in $H^{(i)}_j$ is allocated using rule $\chi_s$, exactly one group in $H^{(i)}_j$ is allocated using rule $\chi_1$, and exactly one group in $H^{(i)}_j$ is allocated using rule $\chi_2$.
    Then $\calA$ is EF1.
\end{proposition}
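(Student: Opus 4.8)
The plan is to verify the EF1 condition for each pair of agents separately, exploiting the structure imposed by the two sortings. Fix the allocation $\calA=(A_1,A_2,A_3)$ satisfying the hypothesis, and consider first the envy of agent $1$. For each Type I group $G=\{a_1,a_2,b\}$, agent $1$ receives her favorite item $a_1$ under $\chi_s$ and $\chi_1$, and receives $b$ (losing $\Delta_1(G)$ relative to $a_1$) only under $\chi_2$. So I would track, group by group in the order $G^{(1)}_1,\dots,G^{(1)}_k$, which groups are ``bad'' for agent $1$ (those allocated $\chi_2$). The key combinatorial fact is that within each consecutive block $H^{(1)}_j$ of three groups, exactly one is $\chi_s$, one is $\chi_1$, one is $\chi_2$; in particular exactly one of every three consecutive groups (in agent $1$'s order) is bad for agent $1$, and within each block the bad group could be any of the three.

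First I would bound agent $1$'s envy toward agent $2$. Agent $2$'s bundle $A_2$, restricted to a single group $G$, is $a_2$ under $\chi_s$ and $\chi_2$, and $b$ under $\chi_1$; from agent $1$'s viewpoint $v_1(a_2)\le v_1(a_1)$ and $v_1(b)\le v_1(a_1)$ since $a_1$ is agent $1$'s favorite in $G$. So $v_1(A_2)\le\sum_{G\text{ Type I}}v_1(a_1^G)$, whereas $v_1(A_1)=\sum_{G}v_1(a_1^G)-\sum_{G\text{ bad for }1}\Delta_1(G)$. Thus the envy of agent $1$ toward agent $2$ is at most $\sum_{G\text{ bad for }1}\Delta_1(G)$, and I need to remove a single item from $A_2$ to kill this. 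The natural candidate is the highest-$v_1$-value item in $A_2$; a cleaner route is to argue that if $G^\star$ is the first (largest $\Delta_1$) group that is bad for agent $1$, then $\Delta_1(G^\star)$ dominates the total deficit. This is where the block structure does the work: because in each block $H^{(1)}_j$ exactly one group is bad for agent $1$, the bad groups, listed in agent $1$'s descending-$\Delta_1$ order, are interleaved so that the $\ell$-th bad group lies in block $\ell$, hence has $\Delta_1$ value at most $\Delta_1$ of the third group of block $1$, which is at most $\Delta_1(G^{(1)}_1)$. Summing a telescoping/monotonicity bound, one gets $\sum_{G\text{ bad for }1}\Delta_1(G)\le \Delta_1(G^{(1)}_1)+\Delta_1(G^{(1)}_4)+\cdots$, and more carefully each bad group in block $j$ has $\Delta_1$ at most that of every group in block $j-1$; I would then match the deficit from block $j$ against the item agent $1$ forgoes in some group of block $j-1$, except for block $1$ whose single bad group's deficit $\Delta_1(G^{(1)}_1)\le v_1(b^{G^{(1)}_1})$-removal, i.e., is absorbed by deleting from $A_2$ the item $a_1^{G^{(1)}_1}$ if agent $2$ holds it, or otherwise a comparably valuable item. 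The upshot: removing one suitably chosen item from $A_2$ reduces $v_1(A_2)$ by at least $\sum_{G\text{ bad for }1}\Delta_1(G)$, giving $v_1(A_1)\ge v_1(A_2\setminus\{\text{that item}\})$.

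Next I would repeat the argument verbatim with roles swapped to bound agent $1$'s envy toward agent $3$: agent $3$'s per-group holding is $b$ under $\chi_s$, $a_2$ under $\chi_1$, $a_1$ under $\chi_2$, and exactly once per block each case occurs; from agent $1$'s view $v_1(a_1^G)$ is the only ``large'' value agent $3$ can hold in group $G$, and that happens only for the $\chi_2$ groups, i.e., exactly the bad-for-$1$ groups. The same telescoping bound across blocks $H^{(1)}_j$ shows that deleting one item from $A_3$ (the largest-$v_1$ item, which is $a_1^{G^{(1)}_1}$ if that group is the $\chi_2$ one, else handled by block matching) suffices. By the symmetric roles of agents $1$ and $2$ in the rules $\chi_1,\chi_2$ and the use of agent $2$'s own sorting $H^{(2)}_j$, the identical reasoning bounds agent $2$'s envy toward agents $1$ and $3$. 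Agent $3$'s EF1 guarantee is not needed here since $\calA$ is regular and Proposition~\ref{prop:regular} already handles it.

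The main obstacle I anticipate is making the block-matching bound fully rigorous: showing that the sum of $\Delta_1$-deficits over all bad-for-agent-$1$ groups is at most the value (to agent $1$) of a single removable item from the other agent's bundle. The subtlety is that the bad group inside a block can be any of its three members, so the ``$\ell$-th bad group sits in block $\ell$'' statement needs the precise claim that the bad group of block $j$ has $\Delta_1$ no larger than that of any group in block $j-1$ (true since all of block $j-1$ precedes all of block $j$ in the descending order), and then a careful injection from blocks $\ge 2$ into items agent $1$ actually holds, leaving only the block-$1$ deficit $\Delta_1(G^{(1)}_1)$ to be absorbed by the single item removal. I expect this accounting — essentially the same kind of ``shift by one block'' argument used in Proposition~\ref{prop:regular} — to be the technical heart of the proof, while everything else is bookkeeping over the three allocation patterns $\chi_s,\chi_1,\chi_2$.
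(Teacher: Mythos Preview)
Your overall plan --- use the block structure $H^{(1)}_j$, match bad groups in one block against good groups in an adjacent block, leave only one block's deficit for the single-item removal --- has the right shape, and agent $3$ via regularity is handled correctly. But the execution has a genuine gap at exactly the step you flag as ``the main obstacle.''

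The problem is your opening inequality $v_1(A_2)\le\sum_G v_1(a_1^G)$. This bound discards precisely the positive contribution you need for the offset: in every $\chi_1$ group agent~$2$ receives $b$, the item agent~$1$ values least, so that group's contribution to $v_1(A_1)-v_1(A_2)$ is $v_1(a_1)-v_1(b)=\Delta_1(G)$, not merely $\ge 0$. Having thrown this away, your stated ``upshot'' becomes: find a single item in $A_2$ worth at least $\sum_{G\text{ bad}}\Delta_1(G)$ to agent~$1$ --- and that is in general impossible. Take $k=9$ groups all with $v_1(a_1)=v_1(a_2)=10$ and $v_1(b)=0$; then $\sum_{\text{bad}}\Delta_1=30$ while every item in $A_2$ is worth at most $10$ to agent~$1$ (and the true envy is $0$, which your crude bound hides). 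Your later phrases ``match the deficit from block $j$ against the item agent~$1$ forgoes in some group of block $j-1$'' and ``injection from blocks $\ge 2$ into items agent~$1$ actually holds'' do not specify a concrete injection consistent with the bound you started from, and the ``upshot'' sentence is simply false.

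The fix is to skip the crude upper bound on $v_1(A_2)$ and instead compute $v_1(A_1)-v_1(A_2)$ group by group: $\chi_s$ contributes $v_1(a_1)-v_1(a_2)\ge 0$, $\chi_1$ contributes $v_1(a_1)-v_1(b)=\Delta_1(G)$, and $\chi_2$ contributes $v_1(b)-v_1(a_2)\ge -\Delta_1(G)$. Now the shift-by-one-block argument works cleanly: the $\chi_1$ group in $H^{(1)}_j$ contributes at least as much positively as the $\chi_2$ group in $H^{(1)}_{j+1}$ contributes negatively (every $\Delta_1$ in block $j$ dominates every $\Delta_1$ in block $j+1$). Only the $\chi_2$ group of block $1$ is left unpaired; removing from $A_2$ the single item $a_2$ of that group eliminates its entire negative contribution. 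For agent~$1$ versus agent~$3$ the same accounting works with $\chi_s$ (contribution $+\Delta_1(G)$) playing the offsetting role in place of $\chi_1$.
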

\begin{proof}
    The allocation $\calA$ is regular, so the EF1 property for agent $3$ is guaranteed by Proposition~\ref{prop:regular}.
    We will show that the EF1 property is also guaranteed for agent $1$, and the analysis for agent $2$ is similar.

    We first show that agent $1$ will not envy agent $2$ if one item were removed from agent $2$'s bundle.
    We find a lower bound to $v_1(A_1)-v_1(A_2)$ by considering the contribution of the allocation of each three-item group.
    Given a group $G=\{a_1,a_2,b\}$, if the allocation is $\chi_s$, its contribution to $v_1(A_1)-v_1(A_2)$ is $v_1(a_1)-v_1(a_2)\geq0$; if the allocation is $\chi_1$, its contribution is $v_1(a_1)-v_1(b)=\Delta_1(G)$; if the allocation is $\chi_2$, its contribution is $v_1(b)-v_1(a_2)\geq v_1(b)-v_1(a_1)=-\Delta_1(G)$.
    Consider two adjacent three-group sets $H^{(1)}_j=\{G^{(1)}_{3j-2},G^{(1)}_{3j-1},G^{(1)}_{3j}\}$ and $H^{(1)}_{j+1}=\{G^{(1)}_{3j+1},G^{(1)}_{3j+2},G^{(1)}_{3j+3}\}$.
    The value $\Delta_1$ for any of the three groups in $H^{(1)}_j$ is weakly higher than the value of $\Delta_1$ for any of the three groups in $H^{(1)}_{j+1}$.
    If each of $\chi_s,\chi_1,\chi_2$ appears exactly once in each of $H^{(1)}_j$ and $H^{(1)}_{j+1}$, the positive contribution to $v_1(A_1)-v_1(A_2)$ from the allocation $\chi_1$ in the group in $H^{(1)}_j$ is sufficient to offset the negative contribution to $v_1(A_1)-v_1(A_2)$ from the allocation $\chi_2$ in the group in $H^{(1)}_{j+1}$.
    In addition, $\chi_s$ and $\chi_1$ allocations never give negative contribution to $v_1(A_1)-v_1(A_2)$.
    As a result, $v_1(A_1)-v_1(A_2)$ is lower-bounded by the negative contribution of allocation $\chi_2$ in the first three-group set $H^{(1)}_1$.
    Let $\{a_1^\ast,a_2^\ast,b^\ast\}$ be the group in $H^{(1)}_1$ where allocation rule $\chi_2$ is applied.
    Then item $a_2^\ast$ is allocated to agent $2$.
    If we remove $a_2^\ast$ from agent $2$'s bundle, the negative contribution of allocation $\chi_2$ to $v_1(A_1)-v_1(A_2)$ is eliminated, and agent $1$ no longer envies agent $2$.

    The analysis for agent $1$ and agent $3$ is similar.
    For each group $G=\{a_1,a_2,b\}$, the allocation $\chi_s$ contributes $v_1(a_1)-v_1(b)=\Delta_1(G)$ to $v_1(A_1)-v_1(A_3)$, the allocation $\chi_1$ contributes $v_1(a_1)-v_1(a_2)\geq0$, and the allocation $\chi_2$ contributes $v_1(b)-v_1(a_1)=-\Delta_1(G)$.
    Again, $\chi_2$ is the only rule that can possibly give a negative contribution to $v_1(A_1)-v_1(A_3)$, and the effect for $\chi_2$ in $H^{(1)}_{j+1}$ is offset by the effect of $\chi_s$ in $H^{(1)}_j$.
    The remaining part of the analysis is similar.
\end{proof}

The next proposition is crucial for the existence of allocations that satisfy the condition in Proposition~\ref{prop:typeIEF1}.

\begin{proposition}\label{prop:typeI3coloring}
    It is possible to assign each of the $k$ Type I groups one of the three colors such that the three groups in each $H_j^{(i)}$ (where $i=1,2$ and $j=1,\ldots,k/3$) have distinct colors. In addition, such a $3$-coloring can be found in polynomial time.
\end{proposition}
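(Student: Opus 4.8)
The plan is to reduce the statement to the edge-coloring fact already established in Lemma~\ref{lem:matching} by building an auxiliary bipartite multigraph whose edges are the Type~I groups. Recall that agent~$1$'s sorting partitions the $k$ Type~I groups into the sets $H^{(1)}_1,\ldots,H^{(1)}_{k/3}$, where $H^{(1)}_j=\{G^{(1)}_{3j-2},G^{(1)}_{3j-1},G^{(1)}_{3j}\}$, and agent~$2$'s sorting partitions the same $k$ groups into $H^{(2)}_1,\ldots,H^{(2)}_{k/3}$; since $(G^{(1)}_1,\ldots,G^{(1)}_k)$ and $(G^{(2)}_1,\ldots,G^{(2)}_k)$ are permutations of one another, each Type~I group $G$ lies in exactly one set $H^{(1)}_{j}$ and exactly one set $H^{(2)}_{j'}$. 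I would then form a bipartite multigraph whose left vertices are the $k/3$ sets $H^{(1)}_j$, whose right vertices are the $k/3$ sets $H^{(2)}_{j'}$, and which contains one edge $e_G$ joining $H^{(1)}_{j}$ to $H^{(2)}_{j'}$ for each group $G$ (with $j,j'$ the indices of the blocks containing $G$). This graph has exactly $k$ edges.

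Next I would observe that this multigraph is $3$-regular: each block $H^{(1)}_j$ contains exactly three of the Type~I groups, so the left vertex $H^{(1)}_j$ is incident to exactly three edges, and the same holds for every right vertex $H^{(2)}_{j'}$. Applying Lemma~\ref{lem:matching} with ``$k$'' there equal to $3$, the edges admit a proper $3$-coloring---the three edges incident to each vertex get pairwise distinct colors---and such a coloring is computable in polynomial time. Transferring the coloring back, I would color each Type~I group $G$ with the color of its edge $e_G$. For a fixed $j$, the three groups in $H^{(1)}_j$ are exactly the three edges incident to the vertex $H^{(1)}_j$, so they receive distinct colors; the symmetric argument on the right side shows the three groups in each $H^{(2)}_{j}$ also receive distinct colors. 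This is precisely the coloring asked for, and since the only nontrivial computational step is the $3$-edge-coloring from Lemma~\ref{lem:matching}, the whole procedure is polynomial-time.

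I do not anticipate a genuine obstacle; the single point requiring care is that the auxiliary graph must be allowed to have parallel edges—two distinct groups may well share both their $H^{(1)}$-block and their $H^{(2)}$-block—but Lemma~\ref{lem:matching} is stated for bipartite multigraphs, so this case is already covered. (One should also recall that $k$ was assumed to be a multiple of $3$ via dummy groups, which is what makes the blocks $H^{(i)}_j$ well-defined and of size exactly three.)
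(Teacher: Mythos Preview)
Your proposal is correct and follows essentially the same approach as the paper: it constructs the same $3$-regular bipartite (multi)graph on the blocks $H^{(1)}_j$ and $H^{(2)}_{j'}$ with one edge per Type~I group, and then invokes Lemma~\ref{lem:matching} to obtain a proper $3$-edge-coloring. Your additional remark that parallel edges may occur (and that Lemma~\ref{lem:matching} covers multigraphs) is a nice point the paper leaves implicit.
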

\begin{proof}
    We construct a bipartite graph $G=(V_1,V_2,E)$ where $V_1$ is the set of $k/3$ vertices representing $H_1^{(1)},\ldots,H_{k/3}^{(1)}$, $V_2$ is the set of $k/3$ vertices representing $H^{(2)}_1,\ldots,H_{k/3}^{(2)}$, and each edge in $E$ represents a Type I group $G$ such that it is incident to the vertex $H^{(i)}_j$ if $G\in H^{(i)}_j$.
    Then $G$ is a $3$-regular bipartite graph.
    By Lemma~\ref{lem:matching}, we can find a $3$-coloring of the edges such that each vertex is incident to three edges with distinct colors.
    This corresponds to a valid $3$-coloring of the $k$ Type I groups.
    Moreover, such a $3$-coloring can be found in polynomial time.
\end{proof}

Finally, consider a $3$-coloring of the $k$ groups that satisfies the description in Proposition~\ref{prop:typeI3coloring}.
We may define a partition $(P,Q,R)$ of the groups from the $3$-coloring such that each set in the partition contains the groups with the same color.
There are six ways to match the three sets of groups $P,Q$, and $R$ to the three allocation rules $\chi_s,\chi_1$, and $\chi_2$.
For simplicity, we use $S_{\chi}$ to denote that the set $S$ is matched with the allocation rule $\chi$, where $S\in\{P, Q, R\}$ and $\chi\in\{\chi_s,\chi_1,\chi_2\}$.
Each matching specifies an allocation.
By Proposition~\ref{prop:typeIEF1}, all six matchings correspond to EF1 allocations.
We further choose three allocations out of the above six allocations, such that each bundle is matched to each allocation rule exactly once among the three allocations.
For example, the three allocations that satisfy this requirement can be $(P_{\chi_1}, Q_{\chi_s}, R_{\chi_2}), (P_{\chi_s}, Q_{\chi_2}, R_{\chi_1})$, and $(P_{\chi_2}, Q_{\chi_1}, R_{\chi_s})$; on the other hand, we cannot choose $(P_{\chi_s}, Q_{\chi_1}, R_{\chi_2}), (P_{\chi_s}, Q_{\chi_2}, R_{\chi_1})$, and $(P_{\chi_2}, Q_{\chi_1}, R_{\chi_s})$, as $P$ is matched with $\chi_s$ twice (also $Q$ is matched with $\chi_1$ twice).
For any three allocations satisfying the above requirement, if each of the three allocations is sampled with probability $1/3$, the marginal probabilities for the item assignments agree with the probabilities specified in Table~\ref{tab:ruleF}.

To achieve EF$^{+1}_{-2}$, we can arbitrarily choose three allocations (with the marginal probabilities for the item assignments agreeing with Table~\ref{tab:ruleF}), this, combined with the result in the next section, guarantees EF$^{+1}_{-2}$ fairness.
To further achieve EF$^{+1}_{-1}$, the three allocations must be carefully chosen; this is discussed in Sect.~\ref{sect:three-comb}.

\subsection{EF$^{+1}_{-1}$ Allocations for Type II}
\label{sect:three-2}
Let $M^{II}$ be the set of items that belong to groups of Type II.
In this section, we will find three regular EF$_{-1}^{+1}$ allocations $\calA_1^{II},\calA_2^{II},\calA_{3}^{II}$ of $M^{II}$, each of which is sampled with probability $1/3$, such that each item is allocated to each agent with probability $1/3$, which agrees with Table~\ref{tab:ruleF}.
To achieve this, we will find a partition $(X,Y,Z)$ of $M^{II}$ such that any permutation of $(X,Y,Z)$ gives a regular EF$^{+1}_{-1}$ allocation, i.e., the six allocations $(X,Y,Z),(X,Z,Y),(Y,X,Z),(Y,Z,X),(Z,X,Y)$, and $(Z,Y,X)$ are all regular and EF$^{+1}_{-1}$.
Among the six allocations, we further choose three allocations such that each bundle is allocated to each agent exactly once among the three allocations.
Since $(X,Y,Z)$ is a partition, by assigning probability $1/3$ to each of the three allocations we choose, it is ensured that each item is allocated to each agent with probability $1/3$.
In the remaining part of this section, we will describe how to find $X,Y$, and $Z$.

By Proposition~\ref{prop:regular}, the EF1 property for agent $3$ is always guaranteed if $(X,Y,Z)$ is regular.
To guarantee the EF$^{+1}_{-1}$ property for agents $1$ and $2$, the high-level idea is to find $X$, $Y$, and $Z$ such that both agents believe the three bundles have similar values.

We rescale the valuations of agents $1$ and $2$ such that $v_1(M^{II})=v_2(M^{II})=1$.
We will first find $X$ such that $X$ contains exactly one item from each Type II group and, for each agent $i\in\{1,2\}$, we have
\begin{itemize}
    \item if $v_i(X)<\frac13$, there exists an item $g_i^+\in M^{II}\setminus X$ such that $v_i(X)+\frac23\cdot v_i(g_i^+)\geq\frac13$;
    \item if $v_i(X)>\frac13$, there exists an item $g_i^-\in X$ such that $v_i(X)-\frac23\cdot v_i(g_i^-)\leq\frac13$.
\end{itemize}
In addition, fix an arbitrary agent $i\in\{1,2\}$, we will show that we can correspondingly construct $Y$ and $Z$ such that:
\begin{itemize}
    \item each of $Y$ and $Z$ contains exactly one item from each Type II group;
    \item the EF1 conditions from $Y$ to $Z$ and from $Z$ to $Y$ are both satisfied for agent $i$, i.e., if $v_i(Y)\le v_i(Z)$, there exists an item $h_i\in Z$ such that $v_i(Y)\ge v_i(Z\setminus \{h_i\})$, and vice versa;
    \item the EF$^{+0}_{-2}$ conditions from $Y$ to $Z$ and from $Z$ to $Y$ are both satisfied for the other agent $3-i$, i.e., if $v_{3-i}(Y)\le v_{3-i}(Z)$, there exists two items $h_{3-i}, h'_{3-i}\in Z$ such that $v_{3-i}(Y)\ge v_{3-i}(Z\setminus \{h_{3-i}, h'_{3-i}\})$, and vice versa.
\end{itemize}

Intuitively, we require the value of $X$ to be about the average, up to adding a $2/3$ fraction of an item or removing a $2/3$ fraction of an item. 
After $X$ is determined, our requirements guarantee the values of both $Y$ and $Z$ to be about the average of the value of the remaining items, up to adding or removing one item: $h_i$ for agent $i$ and the item with a higher value between $h_{3-i}$ and $h'_{3-i}$ for agent $3-i$.
These should be true from the perspective of both agents $1$ and $2$.

Before describing how to find $X$, $Y$, and $Z$, we first prove that any permutation of $(X,Y,Z)$ gives an EF$^{+1}_{-1}$ allocation.
Suppose we fix agent $i=2$ when constructing $X$ and $Y$.
Then the EF$_{-2}^{+0}$ condition from $Y$ to $Z$ and from $Z$ to $Y$ are both satisfied for agent $1$, and the EF1 condition from $Y$ to $Z$ and from $Z$ to $Y$ are both satisfied for agent $1$.
We will only analyze agent $1$, as the analysis of agent $2$ is the same since EF1 implies EF$_{-2}^{+0}$.

We let $T_1=\frac12\cdot v_1(M^{II}\setminus X)$ be the average values of $Y$ and $Z$ for agent $1$. Assume $v_1(Y)\le v_1(Z)$ without loss of generality.
To show that the EF$^{+1}_{-1}$ condition is satisfied for agent $1$ for any permutation of $(X,Y,Z)$, we discuss the following three cases with respect to the bundle agent $1$ receives.

Suppose agent $1$ receives $X$.
If $v_1(X)\geq\frac13$, then $T_1\leq\frac13$.
By our requirements for $Y$ and $Z$, there is one bundle with value at most $T_1$, and removing at most one item from the other bundle makes its value at most $T_1$.
After removal, agent $1$ will not envy the agent who receives $Y$ or $Z$.
The EF1 property is satisfied.
If $v_1(X)<\frac13$, by our requirement, there exists $g_1^+\in M^{II}\setminus X$ such that $v_1(X\cup\{g_1^+\})\geq\frac13+\frac13\cdot v_1(g_1^+)$.
Moreover, $T_1=\frac12(1-v_1(X))\leq\frac12(1-(\frac13-\frac23\cdot v_1(g_1^+)))=\frac13+\frac13\cdot v_1(g_1^+)$.
Thus, $v_1(X\cup\{g_1^+\})\geq T_1$.
On the one hand, adding $g_1^+$ to $X$ makes the bundle's value weakly higher than $T_1$.
On the other hand, by our requirements for $Y$ and $Z$, removing at most one item from the bundles makes the values of the bundles at most $T_1$.
The EF$^{+1}_{-1}$ property is met.
 
Suppose agent $1$ receives $Y$.
The EF$^{+0}_{-2}$ property, which is stronger than EF$_{-1}^{+1}$ (Proposition~\ref{prop:EFvw}), is satisfied trivially from agent $1$ to the agent receiving $Z$ by our requirements for $Y$ and $Z$.
After adding at most one item, the value of the bundle $Y$ is at least $T_1$.
If $v_1(X)\leq\frac13$, then $T_1\geq\frac13$.
The EF$^{+1}_{-0}$ property, which is stronger than EF$^{+1}_{-1}$, is satisfied from agent $1$ to the agent receiving $X$.
If $v_1(X)>\frac13$, by our requirement, there exists $g_1^-\in X$ such that $v_1(X\setminus\{g_1^-\})\leq\frac13-\frac13\cdot v_1(g_1^-)$.
On the other hand, $T_1=\frac12(1-v_1(X))\geq\frac12(1-(\frac13+\frac23\cdot v_1(g_1^-)))=\frac13-\frac13\cdot v_1(g_1^-)$, so $v_1(X\setminus\{g_1^-\})\leq T_1$.
Since adding at most one item to $Y$ makes $Y$'s value at least $T_1$, the EF$^{+1}_{-1}$ property from agent $1$ to the agent receiving $X$ is satisfied.

Suppose agent $1$ receives $Z$, where $v_1(Z)\ge T_1$.
The envy-freeness from agent $1$ to the agent receiving $Y$ holds trivially, and the EF$1$ property from agent $1$ to the agent receiving $X$ holds as removing at most one item from $X$ makes the value of the bundle at most $T_1$ as we have seen above.

Now, it remains to construct $X$, $Y$, and $Z$ that satisfy the requirements.
In the following, we let $k$ be the number of Type II groups.

\subsubsection{Constructing $X$}
\label{sect:three-2X}
We first consider a fractional subset $X'$ of $M^{II}$ such that $v_1(X')=v_2(X')=\frac13$.
Let $X'=\{x_{j\ell}\}_{j=1,\ldots,k;\ell=1,2,3}$ where $x_{j\ell}$ denotes the fraction of the $\ell$-th item in the $j$-th group.
We formulate the requirement $v_1(X')=v_2(X')=\frac13$ by the following linear constraints.
\begin{align*}
    \sum_{j=1}^k\sum_{\ell=1}^3v_1(x_{j\ell})&=\frac13\\
    \sum_{j=1}^k\sum_{\ell=1}^3v_2(x_{j\ell})&=\frac13\\
    x_{j1}+x_{j2}+x_{j3}&=1\tag{for each $j=1,\ldots,k$}\\
    x_{j\ell}&\geq0\tag{for each $j=1,\ldots,k$ and each $\ell=1,2,3$}
\end{align*}
The first two constraints enforce $v_1(X')=v_2(X')=\frac13$.
The third set of constraints enforces that exactly one unit of item is included to $X'$ in each group.
Notice that if an integral solution exists, we have found a set $X$ that meets our requirement.
However, this is not always possible.
We will find a fractional solution $X'$ and then round it to an integral solution such that the first two constraints are ``slightly'' violated within an error of at most a $\frac23$ fraction of an item.

Firstly, notice that the solution space is non-empty, as setting $x_{j\ell}=\frac13$ for each $j=1,\ldots,k$ and each $\ell=1,2,3$ gives a valid solution.

Secondly, notice that the set of linear constraints defines a polytope in $\mathbb{R}^{3k}$.
We find a vertex of this polytope, which corresponds to a solution $X'$ where $3k$ constraints are tight.
We already know that the first three types of constraints, which have a total of $k+2$ constraints, are tight.
Thus, at least $2k-2$ constraints of the type $x_{j\ell}\geq0$ are tight.
This implies at least $2k-2$ items are not included in $X'$ at all.
Furthermore, the third set of constraints ensures that at least one item is (partially) included.
This implies $X'$ contains one integral item for at least $k-2$ groups.

We will use the fractional allocation $X'$ that corresponds to a vertex of the polytope and perform rounding in each of the following cases:
\begin{itemize}
    \item Case 1: $X'$ contains one integral item for $k-1$ groups. In this case, only one group contains fractional items.
    \item Case 2: $X'$ contains one integral item for $k-2$ groups. Since we have seen that at least $2k-2$ items are not included and at least one item from each group is included, in each of the two groups where fractional allocation occurs, exactly two (out of three) items are (partially) allocated. Let $a,b$ be the two items in one group, and $c,d$ be the two items in the other group. Assume without loss of generality that $v_1(a)\geq v_1(b)$ and $v_1(c)\geq v_1(d)$. We further consider the following three subcases:
    \begin{itemize}
        \item Case 2(a): $v_2(a)\geq v_2(b)$ and $v_2(c)\geq v_2(d)$;
        \item Case 2(b): $v_2(a)\leq v_2(b)$ and $v_2(c)\leq v_2(d)$;
        \item Case 2(c): $v_2(a)> v_2(b)$ and $v_2(c)<v_2(d)$; moreover, $v_1(a)>v_1(b)$ and $v_1(c)>v_1(d)$.
    \end{itemize}
\end{itemize}
Notice that Case 2(a), 2(b), and 2(c) cover all the possible cases (although 2(a) and 2(b) overlaps).
In particular, for Case 2(c), if $v_1(a)=v_1(b)$, renaming items $a$ and $b$ gives us Case 2(b); if $v_1(c)=v_1(d)$, renaming items $c$ and $d$ gives us Case 2(a).

Now we discuss the rounding for each case.

\paragraph{Rounding for Case 1.}
Let $\{a,b,c\}$ be the group where items are fractionally included in $X'$, and let $x_a,x_b,x_c$ be the corresponding fractions.
Assume $x_a\geq x_b\geq x_c$ without loss of generality.
The rounding is given by keeping all the integral items in $X'$ and including only $a$ from the group $\{a,b,c\}$.
We obtain a bundle $X$ where $v_1(X)=\frac13+(1-x_a)v_1(a)-x_bv_1(b)-x_cv_1(c)$.
Moreover, since $x_a\geq x_b\geq x_c$ and $x_a+x_b+x_c=1$, we have $x_a\geq\frac13$ and $x_b+x_c\leq\frac23$.
Therefore, we have $v_1(X)\geq\frac13-x_bv_1(b)-x_cv_1(c)\geq\frac13-(x_b+x_c)\cdot\max\{v_1(b),v_1(c)\}\geq\frac13-\frac23\cdot\max\{v_1(b),v_1(c)\}$, and $v_1(X)\leq\frac13+(1-x_a)v_1(a)\leq\frac13+\frac23\cdot v_1(a)$.
The two conditions for $X$ are satisfied for agent $1$, with $g_1^-=a$ and $g_1^+$ being the item in $\{b,c\}$ that has a higher value to agent $1$.
Similarly, the two conditions for $X$ are also satisfied for agent $2$.

\paragraph{Rounding for Case 2(a).}
Let $x_a$ and $x_b$ be the fractions of items $a$ and $b$ (with $x_a+x_b=1$), and let $x_c$ and $x_d$ be the fractions of items $c$ and $d$ (with $x_c+x_d=1$).
We keep all the integral items in $X'$, and the rounding rule for the four fractional items is specified below:
\begin{enumerate}
    \item if $x_a\geq\frac23$, keep item $a$ in $X$; if $x_b\geq\frac23$, keep item $b$ in $X$;
    \item if $x_c\geq\frac23$, keep item $c$ in $X$; if $x_d\geq\frac23$, keep item $d$ in $X$;
    \item if $x_a,x_b\in(\frac13,\frac23)$ and $x_c,x_d\notin(\frac13,\frac23)$, we include $c$ or $d$ based on (2); if $c$ is included, we include $b$; if $d$ is included, we include $a$;
    \item if $x_a,x_b\notin(\frac13,\frac23)$ and $x_c,x_d\in(\frac13,\frac23)$, we include $a$ or $b$ based on (1); if $a$ is included, we include $d$; if $b$ is included, we include $c$;
    \item if $x_a,x_b,x_c,x_d\in(\frac13,\frac23)$, we include $a$ and $d$ (in fact, including $b$ and $c$ also works here).
\end{enumerate}

Let us first consider $a$ and $b$.
If $a$ is included and $b$ is discarded, by the Case 2(a) assumption on the values of $a$ and $b$, the value change for agent $i\in\{1,2\}$ is $(1-x_a)v_i(a)-x_bv_i(b)\geq (1-x_a)v_i(a)-x_bv_i(a)=0$, which is an increment, and the value is increased by at most a $(1-x_a)$ fraction of the value of item $a$.
If $b$ is included and $a$ is discarded, the value change becomes $-x_av_i(a)+(1-x_b)v_i(b)\leq-x_av_i(a)+(1-x_b)v_i(a)=0$, which is a decrement, and the value is decreased by at most an $x_a$ fraction of the value of the item $a$.
The same observation can be made for items $c$ and $d$.

If both (1) and (2) are executed, the value change is bounded by a $1/3$ fraction of the value of item $a$ or $b$, plus or minus a $1/3$ fraction of the value of item $c$ or $d$. It is easy to check that the value of $X$ is about $1/3$ with the addition or removal of at most a $2/3$ fraction of the value of an item.
In the remaining case where at least one of (1) and (2) is not executed, the signs of the value changes for rounding $\{a,b\}$ and $\{c,d\}$ are opposite.
By our rounding rule, the adjustments in the fractions are all bounded by $2/3$. 

\paragraph{Rounding for Case 2(b).}
The rounding rule is exactly the same as it is in Case 2(a).
If both (1) and (2) are executed, the value change is again bounded by a $2/3$ fraction of the value of an item.
Otherwise, the same rule for Case 2(a) also guarantees that the signs of the value change for $\{a,b\}$ and $\{c,d\}$ are opposite for Case 2(b).

\paragraph{Rounding for Case 2(c).}
This case is difficult to handle.
Consider the following example:
\begin{itemize}
    \item $v_1(a)=v_1(c)=1$, $v_1(b)=v_1(d)=0$;
    \item $v_2(a)=v_2(d)=1$, $v_2(b)=v_2(c)=0$;
    \item $x_a=x_b=x_c=x_d=0.5$.
\end{itemize}
It can be easily checked that, in each possible rounding of $\{a,b\}$ and $\{c,d\}$, the value for one of the two agents is either increased or decreased by $1$.
Fortunately, we will show that, by moving from one vertex $X'$ of the polytope to one of its adjacent vertices $X''$, we must fall into one of the previous cases.
The property for Type II groups ensures the possibility of this.
Let $e$ be the third item in the group containing $c$ and $d$.
Since $v_1(c)>v_1(d)$ and $v_2(c)<v_2(d)$, by the fact that both agents have the same favorite item for Type II groups, we have $v_1(e)>v_1(c)>v_1(d)$ and $v_2(e)>v_2(d)>v_2(c)$.
We can move to the adjacent vertex with $x_e=0$ relaxed and with $x_a,x_b,x_c$, and $x_d$ adjusted.

We consider a slight adjustment $\delta\in (-\varepsilon,\varepsilon)$ in $x_a$ and $x_b$: $x_a+\delta$ and $x_b-\delta$.
Let $t_1$ be the change of utility for agent $1$ and $t_2$ be the change of utility for agent $2$ due to this adjustment.
Since $v_1(a)>v_1(b)$ and $v_2(a)>v_2(b)$, the changes $t_1$ and $t_2$ have the same sign.

Next, we adjust the values of $x_c,x_d$, and $x_e$ such that the value change for both agents is $-t_1$ and $-t_2$ respectively.
Let $\delta_c,\delta_d,\delta_e$ be the adjustments for $x_c,x_d$, and $x_e$ respectively.
To make sure a total of $1$ unit of item is allocated for the group $\{c,d,e\}$ and both agents believe the fractional bundle is still worth $1/3$, we must have
$$\left[\begin{array}{ccc}
    1 & 1 & 1\\
    v_1(c) & v_1(d) & v_1(e)\\
    v_2(c) & v_2(d) & v_2(e)
\end{array}\right]\cdot
\left[
\begin{array}{c}
     \delta_c  \\
     \delta_d \\
     \delta_e
\end{array}\right]=\left[
\begin{array}{c}
     0  \\
     -t_1 \\
     -t_2
\end{array}\right].$$
To ensure that we are moving to another vertex of the polytope, we must ensure
\begin{enumerate}
    \item The system of linear equations has a valid solution;
    \item The solution must satisfy $\delta_e>0$.
\end{enumerate}
For (1), we can check that the determinant of the $3\times 3$ matrix is non-zero.
This is guaranteed by $v_1(e)>v_1(c)>v_1(d)$ and $v_2(e)>v_2(d)>v_2(c)$, and the detailed calculations are left to the readers.
For (2), we apply Cramer's Rule,
$$\delta_e=\frac{\det\left(\left[\begin{array}{ccc}
    1 & 1 & 0\\
    v_1(c) & v_1(d) & -t_1\\
    v_2(c) & v_2(d) & -t_2
\end{array}\right]\right)}{\det\left(\left[\begin{array}{ccc}
    1 & 1 & 1\\
    v_1(c) & v_1(d) & v_1(e)\\
    v_2(c) & v_2(d) & v_2(e)
\end{array}\right]\right)}.$$
By the fact that $t_1$ and $t_2$ have the same sign, it can be checked that we can properly choose the sign of $\delta$ (the adjustments for $x_a$ and $x_b$), which decides the sign of $t_1$ and $t_2$, such that $\delta_e>0$.

As $|\delta|$ increases, one of the constraints $x_a\geq 0, x_b\geq 0,x_c\geq 0$, or $x_d\geq0$ becomes tight.
If the constraint for $x_a$ or $x_b$ first becomes tight, we move to a vertex corresponding to $X''$ that belongs to Case 1.
If the constraint for $x_c$ or $x_d$ first becomes tight, we move to a vertex corresponding to $X''$ that belongs to Case 2(a) (notice that both agents value $e$ strictly higher than any of $c$ or $d$).
We then apply the corresponding rounding rule.

This concludes the construction of $X$.
Notice that all the above operations can be done in polynomial time by standard linear programming methods.

\subsubsection{Constructing $Y$ and $Z$}\label{sect:three-2YZ}

We apply a similar method as in Sect.~\ref{sect:three-1} where we consider the allocations with two bundles.
Notice that $X$ contains exactly one item from each group.
Each group then contains exactly two remaining items.
For each group $G=\{a, b\}$, assume without loss of generality that $v_1(a)\ge v_1(b)$.

We first consider all the groups such that $v_2(a)\ge v_2(b)$, i.e., groups where both agents have the same preference order over the two items.
Given a group $G=\{a, b\}$, let $\Delta_1(G)=v_1(a)-v_1(b)$ and $\Delta_2(G)=v_2(a)-v_2(b)$.
Let both agents $i=1,2$ sort the groups by the descending order of $\Delta_i(\cdot)$, resulting in $G_1^{(i)}, \ldots, G_{k'}^{(i)}$ where $k'$ is the number of the groups.
Define $H_j^{(i)}=\{G_{2j-1}^{(i)}, G_{2j}^{(i)}\}$ for each $j=1,\ldots, k'/2$.
Consider a partition $(A_1, A_2)$, where for each $H_j^{(i)}$ containing two groups, item $a$ from exactly one group and item $b$ from the other group are allocated to $A_1$, and the other items are allocated to $A_2$.
Then both allocations $(A_1, A_2)$ and $(A_2, A_1)$ are EF1 to both agents.
The analysis is similar to Proposition~\ref{prop:typeIEF1}.
Such a partition exists similar to the analysis of Proposition~\ref{prop:typeI3coloring}.

For the groups such that $v_2(a)<v_2(b)$ (those groups where both agents have the opposite preference orders over the two items), define $\Delta'_1(G)=v_1(a)-v_1(b)$ and $\Delta'_2(G)=v_2(b)-v_2(a)$.
Similar to the above, we may obtain a partition $(A'_1, A'_2)$ such that both allocations $(A'_1, A'_2)$ and $(A'_2, A'_1)$ are EF1 to both agents.
The details are left to the readers.

As both partitions are EF1, each of the four allocations $(A_1\cup A'_1, A_2\cup A'_2), (A_1\cup A'_2, A_2\cup A'_1), (A_2\cup A'_1, A_1\cup A'_2),$ and $(A_2\cup A'_2, A_1\cup A'_1)$ is EF$_{-2}^{+0}$ to both agents.
Finally, for the fixed agent $i\in\{1,2\}$ where the stronger EF1 requirement must be met, assume without loss of generality that $v_i(A_1)\ge v_i(A_2)$ and $v_i(A'_1)\ge v_i(A'_2)$.
Let $Y=A_1\cup A'_2$ and $Z=A_2\cup A'_1$.
For agent $i$, $v_i(Y)\ge v_i(Z\setminus \{g'\})$ where $g'\in A'_1$ is the item such that $v_i(A'_2)\ge v_i(A'_1\setminus\{g'\})$, and $v_i(Z)\ge v_i(Y\setminus \{g\})$ where $g\in A_1$ is the item such that $v_i(A_2)\ge v_i(A_1\setminus\{g\})$.
Therefore, we obtain a partition $(Y, Z)$ such that both allocations $(Y, Z)$ and $(Z, Y)$ are EF1 to agent $i$ and EF$_{-2}^{+0}$ to agent $3-i$.

In Sect.~\ref{sect:three-comb}, we will specify the agent $i$ we choose to ensure the EF1 property between $Y$ and $Z$.
As a remark, if our eventual goal is EF$^{+1}_{-2}$, the choice of $i$ can be arbitrary; moreover, the readers can verify that ensuring both $(X,Y)$ and $(Y,X)$ are EF$^{+0}_{-2}$ for both agent $1$ and $2$ would suffice to guarantee that any permutation of $(X,Y,Z)$ is EF$^{+1}_{-1}$.

\subsection{Achieving EF$^{+1}_{-1}$: A More Careful Combination of Type I and Type II Allocations}\label{sect:three-comb}
By now, we have obtained a truthful and EF$_{-2}^{+1}$ mechanism.
We can first arbitrarily choose three allocations of Type I that satisfy our requirement in Sect.~\ref{sect:three-1} from the three-partition of groups $(P,Q,R)$.
Each allocation satisfies regularity and EF$1$.
We then arbitrarily choose three regular allocations of Type II from the three-partition of items $(X,Y,Z)$ in Sect.~\ref{sect:three-2}, where we can guarantee the EF$1$ conditions from $Y$ to $Z$ and from $Z$ to $Y$ hold either for agent $1$ or for agent $2$.
Each allocation above is assigned probability $1/3$.
This has already guaranteed a truthful and EF$_{-2}^{+1}$ mechanism, as any combination of these allocations that preserves the marginal probabilities satisfies regular and EF$_{-2}^{+1}$.

In this section, we will make further improvements to achieve EF$^{+1}_{-1}$. To achieve this, we will combine the three allocations for Type I groups and the three allocations for Type II groups in a more careful way.
We will violate the regularity constraint (defined in Sect.~\ref{sect:three-agent3}) by a little bit, yet the marginal probability that each agent receives each item still follows Table~\ref{tab:ruleF}.
Such violations will weaken the fairness guarantee for agent $3$: instead of EF1, now we have EF$_{-1}^{+1}$.
However, allowing such violations enables more flexibility so that better fairness guarantees can be obtained for agents $1$ and $2$.

For high-level intuitions, if the allocation of Type II is EF$1$ or EF$_{-0}^{+1}$ to an agent, then the allocation of Type I can also be EF$1$ to her, and the combination of the two allocations will satisfy EF$_{-1}^{+1}$.
On the other hand, if the allocation of Type II is neither EF$1$ nor EF$_{-0}^{+1}$, but EF$_{-1}^{+1}$ to an agent, then the allocation of Type I should be envy-free to her so that the combination maintains EF$_{-1}^{+1}$.
This can be seen from Proposition~\ref{prop:combine} whose proof is trivial.

\begin{proposition}\label{prop:combine}
    Let $(A_1,A_2,A_3)$ and $(A_1',A_2',A_3')$ be two allocations of two disjoint sets of items.
    Then the allocation $(A_1\cup A_1', A_2\cup A_2', A_3\cup A_3')$ is EF$_{-1}^{+1}$ when either condition holds:
    \begin{itemize}
        \item $(A_1,A_2,A_3)$ is EF$1$ and $(A_1',A_2',A_3')$ is EF$_{-0}^{+1}$;
        \item $(A_1,A_2,A_3)$ is envy-free and $(A_1',A_2',A_3')$ is EF$_{-1}^{+1}$.
    \end{itemize}
    In addition, the allocation $(A_1\cup A_1', A_2\cup A_2', A_3\cup A_3')$ is EF$_{-1}^{+1}$ for a particular agent $i$ if one of the above two conditions holds for agent $i$.
\end{proposition}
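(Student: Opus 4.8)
The plan is to check the EF$^{+1}_{-1}$ condition for each ordered pair of agents $(i,j)$ separately, using that $v_i$ is additive and that the two allocations distribute two \emph{disjoint} item sets. Write $M_1$ for the items of $(A_1,A_2,A_3)$ and $M_2$ for the items of $(A_1',A_2',A_3')$, so $v_i(A_i\cup A_i')=v_i(A_i)+v_i(A_i')$ and similarly for agent $j$. The one structural remark I would make up front is that a witness set certifying EF$^{+u}_{-v}$ for the allocation of $M_t$ may be taken inside $M_t$ --- this is how every sub-allocation built in the previous sections realises its ``$+1$'' --- so the witness sets coming from the two allocations are automatically disjoint from one another, and their union still respects the required cardinality bounds.

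For the first bullet, fix $(i,j)$. EF$1$ of $(A_1,A_2,A_3)$ gives $T\subseteq A_j$ with $|T|\le 1$ and $v_i(A_i)\ge v_i(A_j\setminus T)$; EF$^{+1}_{-0}$ of $(A_1',A_2',A_3')$ gives $S\subseteq M_2\setminus A_i'$ with $|S|\le 1$ and $v_i(A_i'\cup S)\ge v_i(A_j')$. Since $S\cap(A_i\cup A_i')=\emptyset$ and $T\subseteq A_j\cup A_j'$, adding the two inequalities and using additivity yields
\[
v_i\bigl((A_i\cup A_i')\cup S\bigr)=v_i(A_i)+v_i(A_i'\cup S)\ge v_i(A_j\setminus T)+v_i(A_j')=v_i\bigl((A_j\cup A_j')\setminus T\bigr),
\]
which is precisely the EF$^{+1}_{-1}$ relation from $i$'s combined bundle to $j$'s combined bundle. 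For the second bullet the computation is the same with the roles of the two allocations exchanged: envy-freeness of $(A_1,A_2,A_3)$ contributes $v_i(A_i)\ge v_i(A_j)$ with no relaxation, and EF$^{+1}_{-1}$ of $(A_1',A_2',A_3')$ supplies $S\subseteq M_2\setminus A_i'$ and $T'\subseteq A_j'$ with $|S|,|T'|\le 1$ and $v_i(A_i'\cup S)\ge v_i(A_j'\setminus T')$, so that $v_i((A_i\cup A_i')\cup S)\ge v_i((A_j\cup A_j')\setminus T')$.

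Because each derivation invokes only the hypotheses for the single pair $(i,j)$, the ``in addition'' clause --- EF$^{+1}_{-1}$ for a particular agent $i$ --- follows by running the same two lines for every $j$ while assuming the relevant bullet holds just for that $i$. I do not expect a genuine obstacle here: the only thing to be careful about is the up-front remark that the ``$+1$'' witness item from the second allocation can be taken disjoint from $A_i$, which is what makes adding it to the combined bundle legitimate; everything else is a one-line additivity check per pair.
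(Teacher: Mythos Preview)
Your argument is correct and is exactly the straightforward additivity check the paper has in mind; the paper itself declares the proof ``trivial'' and omits it. Your one flagged subtlety---that the ``$+1$'' witness for the $M_2$-allocation lies in $M_2$ and hence is automatically disjoint from $A_i\subseteq M_1$---is the only point that needs a word, and you handle it correctly (it holds by definition, since EF$^{+u}_{-v}$ for an allocation of $M_2$ is a statement about the item set $M_2$).
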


The analysis begins with the Type I allocations.
In Sect.~\ref{sect:three-1}, we partition Type I groups into three sets $(P,Q,R)$ such that each set of groups will be allocated using a different allocation rule among $\chi_s, \chi_1$, and $\chi_2$.
In other words, the three sets correspond to the groups with three colors in Proposition~\ref{prop:typeI3coloring}.
For each set $S\in\{P,Q,R\}$, let $\gamma_1(S)=\sum_{G\in S} (v_1(a_1^G)-v_1(b^G))$ and $\gamma_2(S)=\sum_{G\in S} (v_2(a_2^G)-v_2(b^G))$, where $a_1^G$ is agent $1$'s favorite item in group $G$, $a_2^G$ is agent $2$'s favorite item, and $b^G$ is the remaining item.
Proposition~\ref{prop:typeIEF} gives a sufficient condition for an agent not to envy another agent.
\begin{proposition}\label{prop:typeIEF}
    Let agent $i\in\{1,2\}$ sort the three sets $P,Q$, and $R$ by the descending values of $\gamma_i(\cdot)$.
    If the set allocated using $\chi_s$ is before the set allocated using $\chi_{3-i}$ along the order, then agent $i$ will not envy agent $3$.
    If the set allocated using $\chi_i$ is before the set allocated using $\chi_{3-i}$, then agent $i$ will not envy agent $3-i$.
\end{proposition}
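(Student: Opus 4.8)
The plan is to bound, for the fixed agent $i\in\{1,2\}$, the two differences $v_i(A_i)-v_i(A_3)$ and $v_i(A_i)-v_i(A_{3-i})$ by summing the contribution of each Type I group, in the same style as the proof of Proposition~\ref{prop:typeIEF1}. Fix a group $G=\{a_1^G,a_2^G,b^G\}$, write $\Delta_i(G)=v_i(a_i^G)-v_i(b^G)\geq0$ (so that $\gamma_i(S)=\sum_{G\in S}\Delta_i(G)$ matches the definition given above), and tabulate the contribution of $G$ to each difference under each rule $\chi_s,\chi_1,\chi_2$. Inspecting the three rules, I expect to obtain: for $v_i(A_i)-v_i(A_3)$, the group contributes exactly $\Delta_i(G)$ under $\chi_s$, a nonnegative amount (the gap $v_i(a_i^G)-v_i(a_{3-i}^G)$ between the two agents' favorites) under $\chi_i$, and exactly $-\Delta_i(G)$ under $\chi_{3-i}$; and for $v_i(A_i)-v_i(A_{3-i})$, it contributes exactly $\Delta_i(G)$ under $\chi_i$, a nonnegative amount under $\chi_s$, and at least $-\Delta_i(G)$ under $\chi_{3-i}$, using that $a_i^G$ is agent $i$'s favorite so $v_i(b^G)-v_i(a_{3-i}^G)\geq v_i(b^G)-v_i(a_i^G)=-\Delta_i(G)$. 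The asymmetry between agents $1$ and $2$ is exactly what the $\chi_{3-i}$ notation absorbs: for $i=1$ the harmful rule is $\chi_2$, and for $i=2$ it is $\chi_1$.

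Summing over all Type I groups, and writing $S_s,S_i,S_{3-i}$ for the sets among $P,Q,R$ assigned the rules $\chi_s,\chi_i,\chi_{3-i}$ respectively, the per-group bounds telescope into
\[
v_i(A_i)-v_i(A_3)\ \geq\ \gamma_i(S_s)-\gamma_i(S_{3-i}),\qquad
v_i(A_i)-v_i(A_{3-i})\ \geq\ \gamma_i(S_i)-\gamma_i(S_{3-i}),
\]
because in each case the only possibly-negative per-group terms come from the $\chi_{3-i}$ set and sum to $-\gamma_i(S_{3-i})$, one designated set ($S_s$ or $S_i$) contributes exactly $\gamma_i$ of that set, and the remaining set contributes nonnegatively. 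The hypothesis that $S_s$ (resp.\ $S_i$) precedes $S_{3-i}$ in the descending-$\gamma_i$ order is precisely $\gamma_i(S_s)\geq\gamma_i(S_{3-i})$ (resp.\ $\gamma_i(S_i)\geq\gamma_i(S_{3-i})$), so the corresponding right-hand side is nonnegative; hence agent $i$ does not envy agent $3$ (resp.\ agent $3-i$).

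I do not anticipate a genuine obstacle: this is a refinement of the accounting already carried out in Proposition~\ref{prop:typeIEF1}, now tracking which of the three sets $P,Q,R$ each rule is assigned to, rather than only comparing adjacent three-group blocks. The only point that needs care is the rule-by-rule case check together with the sign bookkeeping---in particular verifying that $\chi_s$ and $\chi_i$ never decrease $v_i(A_i)$ relative to $A_3$ or $A_{3-i}$, that the single source of negative contribution is $\chi_{3-i}$ and is bounded below by $-\Delta_i(G)$ per group, and that $\gamma_i$ is exactly the sum of the per-group $\Delta_i$ values so that the per-group estimates aggregate cleanly into the two displayed inequalities.
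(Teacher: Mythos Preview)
Your proposal is correct and follows essentially the same approach as the paper: both compute per-group contributions of each rule $\chi_s,\chi_i,\chi_{3-i}$ to $v_i(A_i)-v_i(A_3)$ and $v_i(A_i)-v_i(A_{3-i})$, identify $\chi_{3-i}$ as the sole source of negative contribution bounded by $-\Delta_i(G)$, and then sum over the three sets to reduce the envy-freeness claims to $\gamma_i(S_s)\ge\gamma_i(S_{3-i})$ and $\gamma_i(S_i)\ge\gamma_i(S_{3-i})$. The only difference is cosmetic: the paper fixes $i=1$ and argues symmetry, while you parametrize by $i$ throughout.
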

\begin{proof}
    We will prove the proposition for $i=1$ only, as the proof for $i=2$ is symmetric.
    If the set of groups $S\in\{P,Q,R\}$ is allocated using the allocation rule $\chi_2$, agent $1$ will envy agent $3$ by an amount of $\gamma_1(S)$, and agent $1$ will envy agent $2$ by an amount of $\sum_{G\in S}(v_1(a_2^G)-v_1(b^G))$, which is at most $\gamma_1(S)$ as $v_1(a_1^G)\geq v_1(a_2^G)$.
    If the set $S'$ is allocated using $\chi_s$, agent $1$ will receive a bundle with a value higher than that of agent $3$ by an amount of $\gamma_1(S')$, and agent $1$ will also receive a bundle with a value higher than that of agent $2$ (as agent $1$ receives her favorite item from each group).
    If the set $S''$ is allocated using $\chi_1$, agent $1$ receives a bundle with a value higher than the bundle of agent $2$ by an amount of $\gamma_1(S'')$, and agent $1$ will not envy agent $3$ (as agent $1$ will receive her favorite item from each group).
    If the set $S'$ comes before $S$ in the descending order of $\gamma_1(\cdot)$, we have $\gamma_1(S')\ge \gamma_1(S)$, which implies the positive contribution of $\chi_s$ will offset the negative contribution of $\chi_2$.
    Therefore, the envy-freeness from agent $1$ to agent $3$ is guaranteed.
    Similarly, if the set $S''$ comes before $S$, agent $1$ will not envy agent $2$.
\end{proof}

We next consider the partition $(X,Y,Z)$ of Type II in Sect.~\ref{sect:three-2}.
Recall that in a Type II allocation (which is a permutation of the partition), EF$_{-1}^{+1}$ relations between two of the three bundles, instead of EF$1$ or EF$_{-0}^{+1}$, may occur for agent $i\in\{1,2\}$. 
For an agent $i\in\{1,2,3\}$ and a bundle $S\in\{X,Y,Z\}$, if there exists another bundle $S'\in\{X,Y,Z\}$ such that the EF$_{-1}^{+1}$ property from $S$ to $S'$ holds for agent $i$ while both the EF1 and EF$_{-0}^{+1}$ properties fail, we say that $S$ is an \emph{unwanted} bundle for agent $i$.
For a partition specified in Sect.~\ref{sect:three-2}, there are at most two unwanted bundles and two EF$_{-1}^{+1}$ (except for EF$1$ and EF$_{-0}^{+1}$) relations for an agent $i\in\{1,2\}$, at most one unwanted bundle and one EF$_{-1}^{+1}$ relation for the other agent $3-i$, and any permutation is EF$1$ to agent $3$ (see the EF$_{-1}^{+1}$ analysis of the partition $(X,Y,Z)$ for more details).

We are now ready to combine the Type I and Type II allocations.
Recall in Sect.~\ref{sect:three-2} that we construct $(X,Y,Z)$ for Type II allocation by first constructing $X$, then choosing an agent $i$ such that $(Y,Z)$ and $(Z,Y)$ are EF1, and finally constructing $Y$ and $Z$.
Suppose now we have obtained the Type I partition $(P,Q,R)$ from Sect.~\ref{sect:three-1} and Type II bundle $X$ from Sect.~\ref{sect:three-2X}.
We will demonstrate how to choose agent $i$ (which affects the construction of $Y$ and $Z$) and how to choose a proper Type I allocation combined with the specific Type II allocation while maintaining the marginal probability of each item.

We consider the following two cases.
\begin{itemize}
    \item Case i: There is at least one agent $i\in\{1,2\}$ whose value to $X$ is no less than $\frac13$.
    \item Case ii: Both agent $1$'s and agent $2$'s values to $X$ are less than $\frac13$.
\end{itemize}

\subsubsection{Combination of Case i} 
By symmetry, assume that $v_1(X)\ge\frac13$.
When constructing $Y$ and $Z$ in Sect.~\ref{sect:three-2YZ}, we guarantee that the EF$1$ conditions from $Y$ to $Z$ and from $Z$ to $Y$ are satisfied for agent $2$.
In such a partition $(X,Y,Z)$, there is at most one unwanted bundle for agent $1$, which is the bundle with a smaller value between $Y$ and $Z$.
For agent $2$, if $v_2(X)\ge\frac13$, the only possible unwanted bundle is the one with a smaller value between $Y$ and $Z$.
If $v_2(X)<\frac13$, the only possible unwanted bundle is $X$.
Thus, each of the agents $1$ and $2$ has at most one unwanted bundle. 

Assume that both agent $1$ and agent $2$ have one unwanted bundle.
We choose three allocations out of the six permutations of $(X,Y,Z)$, where we ensure that only agent $1$ receives her unwanted bundle in the first allocation, only agent $2$ receives her unwanted bundle in the second allocation, and each bundle is allocated to each agent once among the three allocations.
This can be achieved as each agent only has one unwanted bundle.
For example, if the unwanted bundle is $Y$ for agent $1$ and $Z$ for agent $2$, the three allocations can be $(Y,X,Z),(X,Z,Y)$, and $(Z,Y,X)$; if bundle $Y$ is unwanted for both agents, the three allocations can be $(Y,X,Z),(Z,Y,X)$, and $(X,Z,Y)$.

Let the two agents $i\in\{1,2\}$ sort the three sets of groups $(P,Q,R)$ of Type $I$ according to descending order of $\gamma_i(\cdot)$.
For the first allocation of Type II where only agent $1$ receives her unwanted bundle, let the set of Type I with the smallest $\gamma_1(\cdot)$ be allocated using $\chi_2$.
By Proposition~\ref{prop:typeIEF}, the allocation of Type I is envy-free to agent $1$ no matter which allocation rule each of the remaining two sets is allocated using.
By Proposition~\ref{prop:combine}, the combination of the two allocations is EF$_{-1}^{+1}$ to the three agents.
The second allocation of Type II is handled similarly, where we let the set of Type I with the smallest $\gamma_2(\cdot)$ be allocated using $\chi_1$.
The sets for which we do not specify the allocation rules can be allocated arbitrarily, as long as the three allocations of Type I satisfy: each allocation rule is used exactly once in each Type I allocation, and each set is allocated using each allocation rule exactly once among the three allocations.
It can be easily verified that such requirements of assigning the allocation rules can always be met, and each combination we obtain is EF$_{-1}^{+1}$.

In addition, if we assign probability $1/3$ to each combination, the marginal probabilities for Type I and Type II allocations are satisfied respectively.

Note that EF$_{-1}^{+1}$ combination is easier to achieve when the fairness guarantee for the Type II partition is stronger (for example, only one agent has up to one unwanted bundle).
Therefore, the above analysis is enough to guarantee EF$_{-1}^{+1}$ combination for Case i.

\subsubsection{Combination of Case ii} 
In this case, we guarantee that the EF$1$ conditions from $Y$ to $Z$ and from $Z$ to $Y$ are satisfied for agent $2$, so agent $2$ will have at most one unwanted bundle---which can only be $X$---with the EF$_{-1}^{+1}$ relation from $X$ to the bundle with a higher value between $Y$ and $Z$.
If agent $1$ also has at most one unwanted bundle, it can be handled the same as in Case i.
Thus, we will only focus on the scenario where agent $1$ has two unwanted bundles.

Assume without loss of generality that $v_1(Y)\le v_1(Z)$.
Then, both $X$ and $Y$ are agent $1$'s unwanted bundles.
It is guaranteed by the Type II partition that the EF$_{-1}^{+1}$ relations for agent $1$ are only from $X$ to $Z$ and from $Y$ to $Z$.
In the following, for Type I partition $(P,Q,R)$, we will assume that $\gamma_1(P)\ge \gamma_1(Q)\ge \gamma_1(R)$ without loss of generality.
We will discuss the following three cases.

\paragraph{Agent $2$ has no unwanted bundle.}
In this case, when arbitrarily choosing three Type I allocations and three Type II allocations such that the requirements for the two types of allocations are satisfied, any combination is EF$_{-1}^{+1}$ for agent $2$ according to Proposition~\ref{prop:combine}, and EF$1$ for agent $3$ by regularity.

To guarantee EF$_{-1}^{+1}$ for agent $1$, we let the three Type II allocations be $(X,Y,Z),(Y,Z,X)$, and $(Z,X,Y)$, which are combined with $(P_{\chi_s},Q_{\chi_1},R_{\chi_2}), (P_{\chi_1},Q_{\chi_2},R_{\chi_s})$, and $(P_{\chi_2},Q_{\chi_s},R_{\chi_1})$ respectively,
where the allocation $(P_{\chi_s},Q_{\chi_1},R_{\chi_2})$ denotes that the three sets of groups $P,Q$, and $R$ of Type I are allocated using $\chi_s,\chi_1$, and $\chi_2$ respectively.
Each of the combinations is assigned probability $1/3$, so that the marginal probability of each item is satisfied.
Among the three allocations, whenever there is EF$_{-1}^{+1}$ relation instead of EF$1$ or EF$_{-0}^{+1}$ from agent $1$'s bundle to another agent's bundle in the Type II allocation, agent $1$ will not envy that agent in the Type I allocation. The following paragraph shows this.

In the allocation $(X,Y,Z)$ for Type II, agent $1$ receives the unwanted bundle $X$, and the EF$_{-1}^{+1}$ condition holds from agent $1$'s bundle $X$ to agent $3$'s bundle $Z$; in the matched allocation $(P_{\chi_s},Q_{\chi_1},R_{\chi_2})$ for Type I, Proposition~\ref{prop:typeIEF} implies that agent $1$ does not envy agent $3$, so Proposition~\ref{prop:combine} implies the allocation combined by $(X,Y,Z)$ and $(P_{\chi_s},Q_{\chi_1},R_{\chi_2})$ is EF$_{-1}^{+1}$ for agent $1$.
Similarly, in the allocation $(Y,Z,X)$ for Type II, agent $1$ receives the unwanted bundle $Y$, and the EF$_{-1}^{+1}$ condition holds from agent $1$'s bundle $Y$ to agent $2$'s bundle $Z$; the matched allocation $(P_{\chi_1},Q_{\chi_2},R_{\chi_s})$ for Type I ensures agent $1$ does not envy agent $2$, and Proposition~\ref{prop:combine} implies the allocation combined by $(Y,Z,X)$ and $(P_{\chi_1},Q_{\chi_2},R_{\chi_s})$ is EF$_{-1}^{+1}$ for agent $1$.
Finally, agent $1$ does not receive an unwanted bundle in the Type II allocation $(Z,X,Y)$, so the allocation $(Z,X,Y)$ is EF$_{-0}^{+1}$ for agent $1$, since $(P_{\chi_2},Q_{\chi_s},R_{\chi_1})$ is EF1, the allocation combined by $(Z,X,Y)$ and $(P_{\chi_2},Q_{\chi_s},R_{\chi_1})$ is EF$_{-1}^{+1}$ for agent $1$ by Proposition~\ref{prop:combine}.

\paragraph{Agent $2$ has unwanted bundle $X$ and $v_2(Y)\ge v_2(Z)$.}
The EF$_{-1}^{+1}$ relation for agent $2$ is from $X$ to $Y$.
The combination is shown in Table~\ref{tab:three-comb1}, which is interpreted as: we obtain three combinations where each column of the table denotes a combination, which is assigned probability $\frac13$.
Taking the first column as an example, when Type II allocation is $(Y,X,Z)$ and when $\gamma_2(Q)\ge\gamma_2(R)$, the three sets of groups $P,Q$, and $R$ of Type I will be allocated using $\chi_s,\chi_2$, and $\chi_1$ respectively.
The marginal probability of each item is satisfied.
In the allocation $(Y,X,Z)$, the relation for agent $1$ from $Y$ to $Z$ is not EF$1$ or EF$_{-0}^{+1}$, but EF$_{-1}^{+1}$, and the relation for agent $2$ from $X$ to $Y$ is EF$_{-1}^{+1}$.
In the corresponding Type I allocation, agent $1$ will not envy agent $3$, and agent $2$ will not envy agent $1$ according to Proposition~\ref{prop:typeIEF}.
Similarly, for the second column, in $(X,Z,Y)$, the relation for agent $1$ from $X$ to $Z$ is EF$_{-1}^{+1}$, and in the corresponding Type I allocation, agent $1$ will not envy agent $2$.
For the third column, agent $1$ and $2$ do not receive any of their unwanted bundles, thus the allocation $(Z,Y,X)$ is EF$_{-0}^{+1}$ (or EF1, which is stronger than EF$_{-0}^{+1}$ by Proposition~\ref{prop:EFvw}); for the Type I part, the two allocations are EF1 (Proposition~\ref{prop:typeIEF1}).
Therefore, the combination in each column satisfies EF$_{-1}^{+1}$ according to Proposition~\ref{prop:combine}.

\begin{table}[h]
    \centering
    \begin{tabular}{cccc}
    \hline
        Type II allocation & $(Y,X,Z)$ & $(X,Z,Y)$ & $(Z,Y,X)$ \\
    \hline 
        Type I allocation when $\gamma_2(Q)\ge\gamma_2(R)$ & $(P_{\chi_s}, Q_{\chi_2}, R_{\chi_1})$ & $(P_{\chi_1}, Q_{\chi_s}, R_{\chi_2})$ & $(P_{\chi_2}, Q_{\chi_1}, R_{\chi_s})$ \\[0.5ex]
        Type I allocation when $\gamma_2(Q)<\gamma_2(R)$ & $(P_{\chi_s}, Q_{\chi_1}, R_{\chi_2})$ & $(P_{\chi_1}, Q_{\chi_2}, R_{\chi_s})$ & $(P_{\chi_2}, Q_{\chi_s}, R_{\chi_1})$ \\[0.3ex]
    \hline 
    \end{tabular}
    \caption{Combinations when $X$ is agent $2$'s unwanted bundle and $v_2(Y)\ge v_2(Z)$.}
    \label{tab:three-comb1}
\end{table}

\paragraph{Agent $2$ has unwanted bundle $X$ and $v_2(Y)< v_2(Z)$.}
The EF$_{-1}^{+1}$ relation for agent $2$ is from $X$ to $Z$.
The combination is shown in Table~\ref{tab:three-comb2}, where $g\in Z$ is the item in bundle $Z$ with the highest value to agent $1$.
Each combination is assigned probability $\frac13$.
Different from the above allocations which are regular, regularity for Type II allocations is slightly violated.
However, it can be easily checked that each item of Type II is still allocated to each agent with probability $1/3$, which still matches the marginal probability of the fractional allocation in Sect.~\ref{sect:three-fractional}.

For agent $3$, the fairness guarantee is weakened from EF$1$ to EF$_{-1}^{+1}$ due to the violation of regularity.
Take the allocation combined by $(X\cup\{g\}, Z\setminus\{g\}, Y)$ and $(P_{\chi_s}, Q_{\chi_2}, R_{\chi_1})$ for an example.
As each of the two allocations $(X,Z,Y)$ and $(P_{\chi_s}, Q_{\chi_2}, R_{\chi_1})$ are regular, the combination $(A_1,A_2,A_3)$ by the two allocations $(X,Z,Y)$ and $(P_{\chi_s}, Q_{\chi_2}, R_{\chi_1})$ is regular and thus satisfies EF$1$ for agent $3$ according to Proposition~\ref{prop:regular}.
Therefore, there exists two items $g_1\in A_1$ and $g_2\in A_2$ such that $v_3(A_3)\ge v_3(A_1\setminus\{g_1\})$ and $v_3(A_3)\ge v_3(A_2\setminus\{g_2\})$.
For the allocation $(A_1',A_2',A_3')=(A_1\cup\{g\},A_2\setminus\{g\},A_3)$ combined by $(X\cup\{g\}, Z\setminus\{g\}, Y)$ and $(P_{\chi_s}, Q_{\chi_2}, R_{\chi_1})$, we have $v_3(A'_3)\ge v_3(A'_1\setminus\{g_1,g\})$ and $v_3(A'_3)\ge v_3(A'_2\setminus\{g_2\})$, which guarantees EF$_{-1}^{+1}$.
In the following, we will only consider the fairness guarantee for agent $1$ and agent $2$.

In the Type II allocation $(Y,X,Z)$, both the relations are EF$_{-1}^{+1}$ for agent $1$ from $Y$ to $Z$ and for agent $2$ from $X$ to $Z$.
In $(X\cup\{g\},Z\setminus\{g\},Y)$, EF$_{-0}^{+1}$ relation for agent $1$ is satisfied from $X\cup\{g\}$ to $Y$ guaranteed by the partition, and EF$_{-0}^{+1}$ to $Z\setminus\{g\}$ as the item with the highest value in $Z$ has already been removed.
EF$_{-0}^{+1}$ relation is satisfied for agent $2$ from $Z\setminus\{g\}$ to $Y$ as adding item $g$ back will result in $v_2(Z)> v_2(Y)$.
As $v_2(Z\setminus\{g'\})>v_2(X)$ for any item $g'\in Z$ (otherwise $X$ is no longer agent $2$'s unwanted bundle), we have $v_2(X\cup\{g\})<v_2(Z)$, thus the EF$_{-0}^{+1}$ relation for agent $2$ is satisfied from $Z\setminus\{g\}$ to $X\cup\{g\}$.
In $(Z\setminus\{g\},Y\cup\{g\},X)$, agent $1$ does not envy agent $3$ as the relation from $X$ to $Z$ is not EF$1$, and will not envy agent $2$ by removing $g$ from $Y\cup\{g\}$.
The allocation is EF$_{-0}^{+1}$ to agent $2$.

We can apply similar analyses in the previous case by using Proposition~\ref{prop:combine} and Proposition~\ref{prop:typeIEF} to verify that the combination in each column satisfies EF$_{-1}^{+1}$ for both agent $1$ and agent $2$.
We have also shown earlier that breaking regularity by moving one item $g$ still makes the allocation EF$_{-1}^{+1}$ for agent $3$.

\begin{table}[h]
    \centering
    \begin{tabular}{cccc}
    \hline
        Type II allocation & $(Y,X,Z)$ & $(X\cup\{g\},Z\setminus\{g\},Y)$ & $(Z\setminus\{g\},Y\cup\{g\},X)$ \\
    \hline 
        Type I allocation when $\min\gamma_2(\cdot)=\gamma_2(P)$ & $(P_{\chi_1}, Q_{\chi_s}, R_{\chi_2})$ & $(P_{\chi_s}, Q_{\chi_2}, R_{\chi_1})$ & $(P_{\chi_2}, Q_{\chi_1}, R_{\chi_s})$ \\[0.5ex]
        Type I allocation when $\min\gamma_2(\cdot)=\gamma_2(Q)$ & $(P_{\chi_s}, Q_{\chi_1}, R_{\chi_2})$ & $(P_{\chi_2}, Q_{\chi_s}, R_{\chi_1})$ & $(P_{\chi_1}, Q_{\chi_2}, R_{\chi_s})$ \\[0.5ex]
        Type I allocation when $\min\gamma_2(\cdot)=\gamma_2(R)$ & $(P_{\chi_s}, Q_{\chi_2}, R_{\chi_1})$ & $(P_{\chi_2}, Q_{\chi_1}, R_{\chi_s})$ & $(P_{\chi_1}, Q_{\chi_s}, R_{\chi_2})$ \\[0.3ex]
    \hline 
    \end{tabular}
    \caption{Combinations when $X$ is agent $2$'s unwanted bundle and $v_2(Y)<v_2(Z)$. $g=\argmax_{g'\in Z}v_1(g')$.}
    \label{tab:three-comb2}
\end{table}

\section{Discussions on Mechanisms for $n$ Agents}
\label{sect:nagents}
Designing truthful and almost envy-free mechanisms for a general numbers of agents seems to be significantly more challenging.
We first state a preliminary result for $n$ agents based on~\citet{manurangsi2022almost}: for $n$ agents, there exists a polynomial time randomized truthful mechanism that achieves the EF$^{+0}_{-O(\sqrt{n})}$.
This already demonstrates that randomized mechanisms can do significantly better than deterministic mechanisms (comparing the following theorem with Theorem~\ref{thm:impossibility_deterministic}).
The equal division rule is sufficient for the following result.

\begin{theorem}
    There exists a truthful and EF$^{+0}_{-O(\sqrt{n})}$ randomized mechanism $(\calF_=,\calD)$ for $n$ agents. 
    In addition, the mechanism samples an allocation in polynomial time.
\end{theorem}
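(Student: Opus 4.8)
The plan is to isolate the only substantive ingredient — the decomposition rule — and build it around a single, carefully balanced partition of $M$. Truthfulness is immediate: $\calF_=$ outputs the all‑$\tfrac1n$ fractional allocation regardless of what is reported, so $(\calF_=,\calD)$ is truthful for every $\calD$. For the remainder it suffices to show that the all‑$\tfrac1n$ fractional allocation is EF$^{+(n-1)^2}_{-(n-1)}$‑realizable by a polynomial‑time $\calD$. First I would reduce to the case $n\mid m$ by appending $n\lceil m/n\rceil-m$ dummy items valued $0$ by everyone: deleting $0$‑valued items from any bundle preserves the EF$^{+u}_{-v}$ relation (one just intersects the witness sets $S_i,S_j$ with the real items) and does not change the marginals of the real items, so a decomposition of the padded instance restricts to one of the original. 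Write $B=m/n$. I then claim it is enough to exhibit, for each profile, a partition $\Pi=(X_1,\dots,X_n)$ of $M$, computable in polynomial time, such that \emph{every} assignment of its parts to the agents is an EF$^{+(n-1)^2}_{-(n-1)}$ allocation: given such a $\Pi$, let $\calD$ draw a uniformly random permutation $\sigma$ of $N$ and output $(X_{\sigma(1)},\dots,X_{\sigma(n)})$; each realized allocation is EF$^{+(n-1)^2}_{-(n-1)}$ by hypothesis, the marginal probability that an item $g\in X_k$ is given to agent $i$ is $\Pr[\sigma(i)=k]=(n-1)!/n!=\tfrac1n$, matching $\calF_=$, and a random permutation is sampled in polynomial time.

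To construct $\Pi$, for each agent $i$ I would have her sort $M$ by $v_i$ in weakly decreasing order and cut the list into $B$ consecutive \emph{windows} $W^{(i)}_1,\dots,W^{(i)}_B$ of $n$ items each. If a single partition were a common transversal of all $n$ window systems — one item of each $W^{(i)}_t$ in each part — then for every agent the $n$ parts would be pairwise EF1 by exactly the telescoping argument of Proposition~\ref{prop:regular} and Proposition~\ref{prop:typeIEF1}. Such a common transversal need not exist for $n\ge 3$ systems, so I would relax: process the agents $1,\dots,n$ one at a time, maintaining a partition into $n$ parts, and at agent $i$'s stage re‑balance the current partition against $W^{(i)}_1,\dots,W^{(i)}_B$ using the edge‑coloring of regular bipartite (multi‑)graphs from Lemma~\ref{lem:matching} — precisely as Proposition~\ref{prop:typeI3coloring} does in the case $n=3$ — arranging that this re‑balancing disturbs the windows of the already‑processed agents by at most one item per window. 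After all $n$ stages I would have the invariant that for every agent $i$, every part $X_k$, and every $t$, the count $\bigl|X_k\cap(W^{(i)}_1\cup\cdots\cup W^{(i)}_t)\bigr|$ differs from $t$ by at most $n-1$. All steps are polynomial (perfect matchings / edge colorings of polynomially sized graphs).

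With $\Pi$ in hand, I would bound the envy. Fix an agent $i$, a part $X_k$ she holds, and a part $X_l$ she covets. Summing by parts over her sorted item list, $v_i(X_l)-v_i(X_k)$ is controlled by the prefix discrepancies $\bigl|X_l\cap\mathrm{top}_i(s)\bigr|-\bigl|X_k\cap\mathrm{top}_i(s)\bigr|$, each of which is $O(n)$ by the balance invariant above; grouping these contributions window by window shows that $v_i(X_l)-v_i(X_k)$ is at most the total value of some set of at most $n-1$ items of $X_l$ together with some set of at most $(n-1)^2$ items lying outside $X_k$. Hence removing those $\le n-1$ items from $X_l$ and adding those $\le (n-1)^2$ items to $X_k$ makes agent $i$ not envy the holder of $X_l$, i.e. the allocation is EF$^{+(n-1)^2}_{-(n-1)}$ for agent $i$. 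Since this holds for every agent and every assignment of the parts, the theorem follows.

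I expect the main obstacle to be the construction of $\Pi$ in the second step: producing a \emph{single} partition that is simultaneously near‑balanced against all $n$ window systems through iterated bipartite edge‑coloring, and carrying out the accounting in the envy bound so that the accumulated per‑window discrepancy really stays within $n-1$ and the residual envy is coverable by exactly $n-1$ removals and $(n-1)^2$ additions; the telescoping in Propositions~\ref{prop:regular} and~\ref{prop:typeIEF1} is the model, but here the bookkeeping is over $n$ interacting orderings rather than one.
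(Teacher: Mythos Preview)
Your outline has a genuine gap at its heart: the construction of the partition $\Pi$. You correctly identify that a common transversal of all $n$ window systems need not exist, but your proposed fix --- process agents one at a time and ``re-balance the current partition against $W^{(i)}_1,\dots,W^{(i)}_B$ \dots arranging that this re-balancing disturbs the windows of the already-processed agents by at most one item per window'' --- is not a method, it is a hope. Lemma~\ref{lem:matching} and Proposition~\ref{prop:typeI3coloring} handle \emph{two} window systems simultaneously via a single bipartite edge-coloring; they do not provide an iterative rebalancing primitive with bounded collateral damage, and there is no evident reason why rebalancing for agent $i$ can be done while perturbing each of agent $j$'s windows by at most one item. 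You yourself flag this as the main obstacle, and indeed without it the proof does not go through. The subsequent envy accounting (``grouping these contributions window by window shows that $v_i(X_l)-v_i(X_k)$ is at most the total value of some set of at most $n-1$ items of $X_l$ together with some set of at most $(n-1)^2$ items lying outside $X_k$'') is also asserted rather than derived; even granting your prefix-discrepancy invariant, it is not clear the residual envy decomposes with exactly those parameters.

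The paper sidesteps the combinatorial difficulty entirely with a \emph{linear-programming} argument. It finds, in polynomial time, a \emph{fractional} $n$-partition $(X_1',\dots,X_n')$ of $M$ that is perfectly balanced for every agent, i.e.\ $v_i(X_j')=\tfrac1n v_i(M)$ for all $i,j$, and has at most $n(n-1)$ fractionally assigned items (iteratively peel off one part at a time as a vertex of a polytope with $n$ equality constraints; each peel introduces at most $n$ new fractional items). Let $F$ be this set of $\le n(n-1)$ fractional items, let $\bar X_j$ be the integral items in $X_j'$, partition $F$ arbitrarily into $F_1,\dots,F_n$ with $|F_k|\le n-1$, and set $X_j=\bar X_j\cup F_j$. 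Then for any agent $i$ and any $j,j'$: adding $\bigcup_{k\neq j}F_k$ (at most $(n-1)^2$ items) to $X_j$ gives $\bar X_j\cup F\supseteq X_j'$, hence value $\ge\tfrac1n v_i(M)$; removing $F_{j'}$ (at most $n-1$ items) from $X_{j'}$ gives $\bar X_{j'}\subseteq X_{j'}'$, hence value $\le\tfrac1n v_i(M)$. Thus every permutation of $(X_1,\dots,X_n)$ is EF$^{+(n-1)^2}_{-(n-1)}$, and the uniform-permutation decomposition realizes $\calF_=$. This is a one-line envy bound once the balanced fractional partition is in hand; your window-system route, even if it could be completed, would be considerably more intricate.
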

\begin{proof}
    The result follows from~\citet{manurangsi2022almost}, which demonstrates, through discrepancy theory, that there exists a consensus $1/n$ division up to $O(\sqrt{n})$ goods and can be computed in polynomial time.
    Specifically, this refers to a partition $(X_1,\ldots,X_n)$ of $M$ such that for each agent $i$ and each pair of bundles $X_j, X_{j'}$, the condition $v_i(X_j)\ge v_i(X_{j'}\setminus S)$ holds, where $S\subseteq X_{j'}$ and $|S|=O(\sqrt{n})$.
    By assigning each bundle to each agent with probability $1/n$, the fractional allocation rule becomes the equal division rule, thus ensuring truthfulness.
\end{proof}

\subsection*{Truthful Mechanisms with Share-Based Fairness Criteria}
\label{sec:sharebased}
Other than envy-freeness and its relaxations, another line of fairness criteria is \emph{share-based}, which defines a threshold for each agent and requires each agent's utility to be weakly larger than the threshold.
Examples include \emph{proportionality} and \emph{maximin share}.
With indivisible items, the exact versions of these two criteria are not always satisfiable.
Common relaxations include \emph{proportionality up to one item} (denoted by PROP1) and an approximation version of maximin share (denoted by $\alpha$-MMS).
\begin{definition}
    Given a valuation profile $(v_1,\ldots,v_n)$, an allocation $\calA=(A_1,\ldots,A_n)$ is \emph{proportional up to one item} (PROP1) if, for each $i\in N$, there exists $g\notin A_i$ such that $v_i(A_i\cup\{g\})\geq\frac1n\cdot v_i(M)$.
\end{definition}

\begin{definition}
    Given a valuation profile $(v_1,\ldots,v_n)$, the \emph{maximin share} of an agent $i$, denoted by $\text{MMS}_i$ is the value of the least-preferred bundle in the optimal $n$-partition of the item set:
    $$\text{MMS}_i=\max_{\mathcal{X}=(X_1,\ldots,X_n)}\min_{t=1,\ldots,n}v_i(X_t).$$
    Given $\alpha\in(0,1)$, an allocation $\calA=(A_1,\ldots,A_n)$ satisfies \emph{$\alpha$-MMS} if $v_i(A_i)\geq\alpha\cdot\text{MMS}_i$ for each agent $i$.
\end{definition}

PROP1 allocations always exist.
For maximin share, the current state-of-art is that $\alpha$-MMS allocations always exist for some $\alpha$ that is slightly larger than $3/4$~\cite{akrami2024breaking}.
However, when enforcing truthfulness, PROP1 cannot be achieved by deterministic mechanisms, and the best achievable approximation ratio for MMS is $\frac1{\lfloor m/2\rfloor}$ even for two agents~\cite{amanatidis2016truthful,amanatidis2017truthful}.

For randomized mechanisms, \citet{AzizFrSh23} show that the equal division rule is PROP1-realizable.
Since the equal division rule is truthful, we can conclude that PROP1 can be achieved by randomized truthful mechanisms, as remarked by~\citet{babaioff2022best}.
We show that we can simultaneously achieve PROP1 and $\frac1n$-MMS with truthful randomized mechanisms.
The proof applies some similar techniques used in~\citet{AzizFrSh23}, and the equal division rule is sufficient.

\begin{theorem}
    There exists a randomized truthful mechanism $(\calF_=,\calD)$ that simultaneously achieves PROP1 and $\frac1n$-MMS.
\end{theorem}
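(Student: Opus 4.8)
The plan is to keep $\calF=\calF_=$, so that truthfulness is free (the rule ignores the reports), and to put all the work into the decomposition rule $\calD$. Concretely, it suffices to show that the equal‑division fractional allocation $x_{ig}=\frac1n$ can be written as a distribution over integral allocations, each of which is simultaneously PROP1 and $\frac1n$-MMS, such that the induced marginal probability of every item for every agent is exactly $\frac1n$. Since \citet{AzizFrSh23} already establish that $\calF_=$ is PROP1-realizable, the new content is to carry out a decomposition of this type while also certifying the $\frac1n$-MMS guarantee for every realized allocation.

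The first step I would take is to reduce the $\frac1n$-MMS requirement to a plain per-agent lower bound. For any agent $i$ and any $n$-partition of $M$, the least-valued part has value at most $\frac1n v_i(M)$, so $\text{MMS}_i\le\frac1n v_i(M)$; hence any allocation with $v_i(A_i)\ge\frac{1}{n^2}v_i(M)$ automatically satisfies $v_i(A_i)\ge\frac1n\text{MMS}_i$. Thus it is enough to produce a decomposition of $x_{ig}=\frac1n$ into integral allocations that are PROP1 and additionally give each agent $i$ value at least $\frac{1}{n^2}v_i(M)$. This extra constraint is genuinely needed — a PROP1 allocation can leave an agent with the empty bundle while $\text{MMS}_i>0$ — but it is a factor of $n$ below the $\frac1n v_i(M)$ that equal division grants each agent in expectation, so there is substantial slack to exploit.

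For the construction I would combine the two ingredients already used in the paper. As in Theorem~\ref{thm:nagents-EF}, apply the consensus-splitting result of \citet{goldberg2022consensus} to obtain a fractional partition $(X_1',\dots,X_n')$ with $v_i(X_j')=\frac1n v_i(M)$ for all $i,j$ and at most $n(n-1)$ fractionally allocated items; then integralize by distributing the split items among the bundles and output a uniformly random assignment of the resulting bundles to the agents, which yields the $\frac1n$ marginals automatically. Because each bundle is worth exactly $\frac1n v_i(M)$ to agent $i$ in the fractional partition, I would argue the integralization can be arranged so that each agent retains at least a $\frac1n$-fraction of that value, giving the $\frac{1}{n^2}v_i(M)$ lower bound; and, in parallel, I would invoke the finer peeling/rounding decomposition of \citet{AzizFrSh23} to guarantee PROP1 for every realized allocation, rather than only the weak ``proportional up to $(n-1)^2$ items'' guarantee implicit in the naive rounding of Theorem~\ref{thm:nagents-EF}.

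The hard part will be reconciling these two guarantees: the naive consensus-splitting rounding does not deliver PROP1, whereas the AzizFrSh23 decomposition into integral PROP1 allocations is more delicate and must be shown to remain feasible once one also insists that every agent clear the threshold $\frac{1}{n^2}v_i(M)$ in every realized allocation. I expect this to go through because the added requirement is a single linear inequality per agent with a generous right-hand side, and equal division lies comfortably above all these thresholds in expectation, so the point $x_{ig}=\frac1n$ should still be a convex combination of integral allocations meeting all the constraints simultaneously. Making this precise — re-running the decomposition scheme with the extra per-agent constraints and verifying that a suitable integral allocation exists at each peeling step — is where the real work of the proof lies.
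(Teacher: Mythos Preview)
Your high-level framing is right: truthfulness comes for free from $\calF_=$, and the observation $\text{MMS}_i\le \frac1n v_i(M)$ is a valid (if somewhat lossy) route to certifying $\frac1n$-MMS. But the proposal stops short of an actual construction. You want to simultaneously (i) use the consensus-splitting bundles from Theorem~\ref{thm:nagents-EF} to get a value floor of $\frac{1}{n^2}v_i(M)$, and (ii) use the \citet{AzizFrSh23} peeling decomposition to get PROP1. These are \emph{different} decompositions producing different supports of integral allocations; you cannot run them ``in parallel'' on the same realization. Your last paragraph proposes adding the extra per-agent inequalities to the AzizFrSh23 scheme and hoping an integral point survives at each peeling step, but this is precisely the step that needs a proof, and you give none. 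Note also that PROP1 alone does \emph{not} imply $v_i(A_i)\ge\frac{1}{n^2}v_i(M)$ (e.g., with $n=2$ and items of values $3,1,1,1$ for agent~$i$, the bundle $\{1\}$ is PROP1 but has value $1<\frac32$), so the reconciliation is not automatic.

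The paper avoids this difficulty entirely with a single explicit construction that delivers both guarantees at once. For each agent $i$, sort the items by $v_i$ and partition them into $m/n$ consecutive blocks $G^{(i)}_j$ of size $n$. Build the bipartite graph whose left vertices are all $m$ blocks $\{G^{(i)}_j\}_{i,j}$, whose right vertices are the $m$ items, and whose edges record membership; this graph is $n$-regular, so by Lemma~\ref{lem:matching} its edges decompose into $n$ perfect matchings. Each matching is an allocation in which every agent receives exactly one item from each of her own blocks $G^{(i)}_j$; outputting a uniformly random matching gives the $\frac1n$ marginals. The ``one item per own block'' property immediately yields PROP1 (the item from block $j$ dominates the average of block $j{+}1$, and adding $g_1^{(i)}$ covers the first block). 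For $\frac1n$-MMS the paper uses a sharper bound than yours, namely $\text{MMS}_i\le v_i(T_i)$ with $T_i=\{g_n^{(i)},\dots,g_m^{(i)}\}$, and shows directly that $v_i(A_i)\ge\frac1n v_i(T_i)$ from the same block structure. The whole argument is a page; what your proposal is missing is precisely this unified combinatorial object.
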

\begin{proof}
    We begin by defining the decomposition rule $\calD$.

Assume without loss of generality that $m$ is a multiple of $n$ (by adding dummy items with value $0$ to all agents).
For each agent $i$, sort the items in the descending order of $v_i(\cdot)$: $g_1^{(i)},\ldots,g_m^{(i)}$.
Let $G^{(i)}_j=\{g_{nj-n+1}^{(i)},g_{nj-n+2}^{(i)},\ldots,g_{nj}^{(i)}\}$ for each $j=1,\ldots,m/n$.
Construct an $m\times m$ bipartite graph $G=(V_1,V_2,E)$ where $V_1$ corresponds to the $m$ groups $\{G^{(i)}_j\}_{i=1,\ldots,n;j=1,\ldots,m/n}$, $V_2$ corresponds to the $m$ items, and there is an edge between a vertex in $V_1$ representing group $G^{(i)}_j$ and a vertex in $V_2$ representing item $g$ if and only if $g\in G^{(i)}_j$.
This is an $n$-regular bipartite graph.
By Lemma~\ref{lem:matching}, $G$ can be decomposed to $n$ disjoint matchings $\mathcal{M}_1,\ldots,\mathcal{M}_n$.
Each matching $\mathcal{M}_k$ defines an allocation $\calA_k$: if an item $g$ is matched with a vertex $G^{(i)}_j$ in the graph, then item $g$ is allocated to agent $i$.
The rule $\calD$ outputs each of the $n$ allocations with probability $1/n$.
It is straightforward to check that each item is allocated to each agent with probability $1/n$, so the marginal probabilities match the equal division rule $\calF_=$.

It remains to show that each allocation $\calA_k$ is PROP1 and $\frac1n$-MMS.
For an arbitrary $k$, let $\calA_k=(A_1,\ldots,A_n)$.
By our rule $\calD$, for each $i$ and $j$, we have $|A_i\cap G^{(i)}_j|=1$.

To show that $(A_1,\ldots,A_n)$ is PROP1, we consider adding item $g^{(i)}_1$ to $A_i$ if $g^{(i)}_1\notin A_i$ or adding item $g^{(i)}_2$ to $A_i$ otherwise.
Let $A_i^+$ be the bundle after this addition, and we need to prove $v_i(A_i^+)\geq\frac1n\cdot v_i(M)$.
Notice that $g^{(i)}_1\in A_i^+$, and $A_i^+\setminus\{g_1^{(i)}\}$ contains exactly one item from each group $G^{(i)}_j$.
By our definition of groups, for each $j=1,\ldots,m/n-1$, any single item in the $j$-th group $G^{(i)}_j$ has a weakly larger value than any item in the $(j+1)$-th group $G^{(i)}_{j+1}$.
Thus, any single item in the $j$-th group has value at least $\frac1n\cdot v_i(G^{(i)}_{j+1})$.
Summing up the $n$ items in $A_i^+\setminus\{g_1^{(i)}\}$, we have
$$v_i(A_i^+\setminus\{g_1^{(i)}\})\geq\frac1n\cdot\sum_{j=2}^{m/n}v_i(G^{(i)}_j).$$
In addition, $g^{(i)}_1$ has a value weakly larger than any item in the first group $G^{(i)}_1$, so
$$v_i(A_i^+)\geq \frac1n\cdot\sum_{j=1}^{m/n}v_i(G^{(i)}_j)=\frac1n\cdot v_i(M).$$
This proves that $(A_1,\ldots,A_n)$ is PROP1.

To show that $(A_1,\ldots,A_n)$ is $\frac1n$-MMS, we first find an upper bound to $\text{MMS}_i$.
Let $T_i=\{g_n^{(i)},g_{n+1}^{(i)},\ldots,g_m^{(i)}\}$ be the bundle consists of all but the first $n-1$ items with the largest values to agent $i$.
We will show that $\text{MMS}_i\leq v_i(T_i)$.
To see this, suppose $(X_1,\ldots,X_n)$ be the partition that defines MMS$_i$, i.e., assuming $v_i(X_1)\geq v_i(X_2)\geq\cdots\geq v_i(X_n)$, we have $v_i(X_n)=\text{MMS}_i$.
By the pigeonhole principle, there exists $X_k$ such that $X_k$ does not contain the first $n-1$ items that are not in $T_i$.
This implies $X_k\subseteq T_i$, so $\text{MMS}_i\leq v_i(X_k)\leq v_i(T_i)$.
It now suffices to show $v_i(A_i)\geq\frac1n\cdot v_i(T_i)$.

Let $A_i=\{g_1^\ast,\ldots,g_{m/n}^\ast\}$ where $g_j^\ast$ is the item in the singleton set $A_i\cap G^{(i)}_j$.
By our definition of groups, $g_j^\ast$ has a value weakly higher than the last element in $G^{(i)}_j$, namely, $g^{(i)}_{nj}$, and it has a value weakly higher than any item in the next group $G^{(i)}_{j+1}$.
Therefore, we have
$$v_i(g_j^\ast)\geq\frac1n\cdot\left(g_{nj}^{(i)}+g_{nj+1}^{(i)}+\cdots+g_{nj+n-1}^{(i)}\right),$$
and, by summing up items in $A_i$,
$$v_i(A_i\setminus\{g_{m/n}^\ast\})\geq\frac1n\sum_{j=n}^{m-1}v_i(g_j^{(i)}).$$
Finally, we have $v_i(g_{m/n}^\ast)\geq v_i(g_m^{(i)})$, so
$$v_i(A_i)\geq v_i(g_m^{(i)})+\frac1n\sum_{j=n}^{m-1}v_i(g_j^{(i)})\geq\frac1n v_i(T_i)\geq\text{MMS}_i.\qedhere$$
\end{proof}

\section{Incompatibility between Truthfulness and EF$_{-v}^{+u}$X}\label{sect:incom_efkx}
Another widely considered envy-based fairness notion stronger than EF$1$ is EFX.
Although the existence of EFX allocations remains a major open problem, EFX is known to be achievable in the case of two agents through the I-cut-you-choose protocol. 
One may expect that for two agents, similar to the compatibility between truthfulness in expectation and EF1, there is also a truthful randomized mechanism that outputs EFX allocations.
However, in this section, we will show that even the relaxation of EFX, denoted by EF$_{-v}^{+u}$X, is incompatible with truthfulness even when randomness is allowed.
\begin{definition}\label{def:efx}
    For nonnegative integers $u$ and $v$, an allocation $\calA=(A_1,\ldots,A_n)$ is \emph{envy-free up to adding any $u$ items and removing any $v$ items,} denoted by EF$_{-v}^{+u}$X, if for every pair of agents $i$ and $j$, for \emph{any} sets of items $S_i$ and $S_j$ satisfying $S_i\cap A_i=\emptyset$, $S_j\subseteq A_j$, $|S_i|=\min\{u, |M\setminus A_i|\}$, and $|S_j|=\min\{v, |A_j|\}$, we have $v_i(A_i\cup S_i)\ge v_i(A_j\setminus S_j)$.
\end{definition}

The relationship between EF$_{-v}^{+u}$X and EF$_{-(u+v)}^{+0}$X is opposite to that between EF$_{-v}^{+u}$ and EF$_{-(u+v)}^{+0}$.
\begin{proposition}\label{prop:efuvx}
    If an allocation is EF$_{-v}^{+u}$X, then it is EF$_{-(u+v)}^{+0}$X.
\end{proposition}
\begin{proof}
    Assume $\{A_1,\ldots,A_n\}$ satisfies EF$_{-v}^{+u}$X.
    Consider two agents $i$ and $j$.
    If $|A_j|\le u+v$, then EF$_{-(u+v)}^{+0}$X is trivially satisfied since agent $i$ will not envy agent $j$ after removing all items from $A_j$.
    Otherwise, consider any set of items $S\subset A_j$ such that $|S|=u+v$. 
    We partition $S$ into two sets $S_1, S_2\subset A_j$ such that $|S_1|=u$, $|S_2|=v$, and $S_1\cap S_2=\emptyset$.
    As the allocation is EF$_{-v}^{+u}$X, it holds that $v_i(A_i)+v_i(S_1)=v_i(A_i\cup S_1)\ge v_i(A_j\setminus S_2)$.
    Further, since $S_1\subset A_j$, we have $v_i(A_i)\ge v_i(A_j\setminus S_2)-v_i(S_1)=v_i(A_j\setminus S_2\setminus S_1)=v_i(A_j\setminus S)$.
    Therefore, the allocation is also EF$_{-(u+v)}^{+0}$X.
\end{proof}

In the following, we show the result for two agents.
Combining Proposition~\ref{prop:efuvx}, we obtain that for any constant $u$ and $v$, EF$_{-v}^{+u}$X is incompatible with truthfulness in expectation.

\begin{theorem}\label{thm:incom_efkx}
    For any nonnegative integer $k$, there is no randomized mechanism that is truthful and EF$_{-k}^{+0}$X even for two agents.
\end{theorem}
Before proving the above theorem, we first derive some restrictions on the EF$_{-k}^{+0}$X allocations under a series of instances, which will be adopted in the main proof.
All of the following instances contain two agents and $m+1$ items $M=\{g_0,g_1,\ldots,g_m\}$, and we denote the set $\{g_1,\ldots,g_m\}$ by $M'$.

We first show that in the following instances, each agent needs to receive at least $k$ items from $M'$.
Therefore, when we consider the $k$ removed items in the EF$_{-k}^{+0}$X definition, these items must come from $M'$.
\begin{proposition}\label{prop:incom_efkx_1}
    Consider the following instance where $x>1$ and agent $2$'s valuation is arbitrary.
    When $m>\max\left\{x, \frac{3kx}{x-1}\right\}$, an allocation $(A_1, A_2)$ is not EF$_{-k}^{+0}$X if $A_1$ contains less than $k$ items from $M'$.
\end{proposition}
\begin{center}
    \begin{tabular}{cccccc}
        \hline
        & $g_0$ & $g_1$ & $g_2$ & $\cdots$ & $g_m$ \\
        \hline
        $v_1$ & $1$ & $x/m$ & $x/m$ & $\cdots$ & $x/m$ \\
        \hline
    \end{tabular}
    \label{tab:incom_efkx_p1}
\end{center}
\begin{proof}
    Note that $3kx/(x-1)>3k$, therefore if $|A_1\cap M'|<k$, the number of items received by agent $2$ will be $|A_2\cap M'|>2k$.
    To prove Proposition~\ref{prop:incom_efkx_1}, we only need to show that even if $g_0$ is allocated to agent $1$, agent $1$ will still envy agent $2$ after removing the set $S_2$ of $k$ items from $A_2$ where $S_2\subset A_2\cap M'$.
    This is given by the following inequality,
    $$v_1(A_1)<1+kx/m<(m-k)x/m-kx/m<v_1(A_2\setminus S_2),$$
    where the second inequality holds due to $m>3kx/(x-1)$.
\end{proof}

We next show that item $g_0$ must be allocated to a certain person in any EF$_{-k}^{+0}$X allocation.
\begin{proposition}\label{prop:incom_efkx_2}
    Consider the following instance where $1<x_1<x_2$.
    When $m>\max\left\{x_2, \frac{3kx_1}{x_1-1}, \frac{2x_1x_2k}{x_2-x_1}\right\}$, an allocation $(A_1, A_2)$ is EF$_{-k}^{+0}$X only if $g_0\in A_1$.
\end{proposition}
\begin{center}
    \begin{tabular}{cccccc}
        \hline
        & $g_0$ & $g_1$ & $g_2$ & $\cdots$ & $g_m$ \\
        \hline
        $v_1$ & $1$ & $x_1/m$ & $x_1/m$ & $\cdots$ & $x_1/m$ \\
        $v_2$ & $1$ & $x_2/m$ & $x_2/m$ & $\cdots$ & $x_2/m$ \\
        \hline
    \end{tabular}
    \label{tab:incom_efkx_p2}
\end{center}
\begin{proof}
    For the sake of contradiction, assume that $g_0\in A_2$.
    As $m> 3kx_1/(x_1-1)>3kx_2/(x_2-1)$, according to Proposition~\ref{prop:incom_efkx_1}, agent $2$ will receive at least $k$ items from $M'$.
    Therefore, to guarantee EF$_{-k}^{+0}$X for agent $1$, we consider the case where agent $1$ removes $k$ items in $A_2\cap M'$ from $A_2$.
    To prevent agent $1$ from envying agent $2$ after removing the $k$ items, agent $1$ needs to be allocated at least
    $$\left\lceil \frac{m}{x_1}-k+\left(m-\left(\frac{m}{x_1}-k\right)\right)/2 \right\rceil = \left\lceil \frac{(x_1+1)m}{2x_1}-\frac{k}{2}\right\rceil\ge k$$
    items from $M'$.
    However, agent $2$ will envy agent $1$ even if removing a set $S_1$ of $k$ items from $A_1$ according to the following inequality,
    $$v_2(A_2)\le 1+\frac{x_2}{m}\left(m-\left(\frac{(x_1+1)m}{2x_1}-\frac{k}{2}\right)\right)<\frac{x_2}{m}\left(\frac{(x_1+1)m}{2x_1}-\frac{k}{2}-k\right)\le v_2(A_1\setminus S_1),$$
    where the second inequality holds due to $m>2x_1x_2k/(x_2-x_1)$.
\end{proof}

We are now ready to prove Theorem~\ref{thm:incom_efkx} based on the above two propositions.
\paragraph{Proof of Theorem~\ref{thm:incom_efkx}.}
We begin with the following instance with two agents and $m+1$ items where $m>32k$.
We show that for any fractional allocation rule $\calF$ that guarantees $\expo$ EF$_{-k}^{+0}$X under this instance, there always exist beneficial misreports for either agent $1$ or agent $2$.
\begin{center}
    \begin{tabular}{cccccc}
        \hline
        & $g_0$ & $g_1$ & $g_2$ & $\cdots$ & $g_m$ \\
        \hline
        $v_1$ & $1$ & $2/m$ & $2/m$ & $\cdots$ & $2/m$ \\
        $v_2$ & $1$ & $4/m$ & $4/m$ & $\cdots$ & $4/m$ \\
        \hline
    \end{tabular}
    \label{tab:incom_efkx_i}
\end{center}

Denote $\{g_1,\ldots,g_m\}$ by $M'$.
For any allocation $(A_1,A_2)$ that satisfies EF$_{-k}^{+0}$X, according to Proposition~\ref{prop:incom_efkx_1}, each agent needs to be allocated at least $k$ items from $M'$.
According to Proposition~\ref{prop:incom_efkx_2}, we have $g_0\in A_1$.
Further, to guarantee EF$_{-k}^{+0}$X for agent $2$, agent $2$ needs to be allocated at least
$\left\lceil \frac{m}{4}-k+\frac{m-(m/4-k)}{2}\right\rceil=\left\lceil\frac{5m}{8}-\frac{k}{2}\right\rceil$
items from $M'$.
To guarantee EF$_{-k}^{+0}$X for agent $1$, agent $2$ can be allocated at most
$\left\lfloor \frac{m}{2}+k+\frac{m-(m/2+k)}{2}\right\rfloor=\left\lfloor\frac{3m}{4}+\frac{k}{2}\right\rfloor$
items from $M'$.

Consider the fractional allocation induced by the fractional division rule $\calF$.
Denote by $\alpha_2$ the (possibly fractional) number of items that agent $2$ receives from $M'$ in the fractional allocation, and $\alpha_1=m-\alpha_2$ the (possibly fractional) number of items agent $1$ receives from $M'$.
As the fractional allocation is a probability distribution over the set of all possible \expo EF$_{-k}^{+0}$X allocations, we have $\alpha_2\in\left[\left\lceil\frac{5m}{8}-\frac{k}{2}\right\rceil, \left\lfloor\frac{3m}{4}+\frac{k}{2}\right\rfloor\right]$.
We now consider two cases based on $\alpha_2$ to show that whichever $\alpha_2$ we choose for the fractional division rule, there always exists a deviation for some agent that leads to a higher expected utility.

If $\alpha_2\le \frac{11m}{16}$, agent $2$ may misreport $v_2$ to $v'_2$ to increase her expected utility, as shown in the following instance.
\begin{center}
    \begin{tabular}{cccccc}
        \hline
        & $g_0$ & $g_1$ & $g_2$ & $\cdots$ & $g_m$ \\
        \hline
        $v_1$ & $1$ & $2/m$ & $2/m$ & $\cdots$ & $2/m$ \\
        $v'_2$ & $1$ & $16/(7m)$ & $16/(7m)$ & $\cdots$ & $16/(7m)$ \\
        \hline
    \end{tabular}
    \label{tab:incom_efkx_m1}
\end{center}
From Proposition~\ref{prop:incom_efkx_2}, we know that $g_0$ will be allocated to agent $1$.
Still, to guarantee EF$_{-k}^{+0}$X for agent $1$, agent $2$ can be allocated at most $\left\lfloor\frac{3m}{4}+\frac{k}{2}\right\rfloor$ items from $M'$.
However, to guarantee EF$_{-k}^{+0}$X for agent $2$ under $v'_2$, agent $2$ needs to be allocated at least
$\left\lceil \frac{7m}{16}-k+\frac{m-(7m/16-k)}{2}\right\rceil=\left\lceil\frac{11.5m}{16}-\frac{k}{2}\right\rceil$ items from $M'$.
Denote by $\alpha'_2$ the (possibly fractional) number of items agent $2$ receives under $\calF$ under $v'_2$, then we have $\alpha'_2\in\left[\left\lceil\frac{11.5m}{16}-\frac{k}{2}\right\rceil, \left\lfloor\frac{3m}{4}+\frac{k}{2}\right\rfloor\right]$.
Since $m>16k$, agent $2$ will receive at least $\frac{11.5m}{16}-\frac{k}{2}-\frac{11m}{16}>0$ more items from $M'$ in expectation, leading to a strict increase in her expected utility.

Conversely, if $\alpha_2>\frac{11m}{16}$, then $\alpha_1\le \frac{5m}{16}$, and agent $1$ may misreport $v_1$ to $v_1'$ to increase her utility.
\begin{center}
    \begin{tabular}{cccccc}
        \hline
        & $g_0$ & $g_1$ & $g_2$ & $\cdots$ & $g_m$ \\
        \hline
        $v'_1$ & $1$ & $16/(5m)$ & $16/(5m)$ & $\cdots$ & $16/(5m)$ \\
        $v_2$ & $1$ & $4/m$ & $4/m$ & $\cdots$ & $4/m$ \\
        \hline
    \end{tabular}
    \label{tab:incom_efkx_m2}
\end{center}
The analysis is very similar to the above.
In the new instance, agent $1$ still receives $g_0$ and agent $2$ needs to receive at least $\left\lceil\frac{5m}{8}-\frac{k}{2}\right\rceil$ items from $M'$.
However, to guarantee EF$_{-k}^{+0}$X for agent $1$ under $v'_1$, agent $2$ can be allocated at most
$\left\lfloor \frac{5m}{16}+k+\frac{m-(5m/16+k)}{2}\right\rfloor=\left\lfloor\frac{10.5m}{16}+\frac{k}{2}\right\rfloor$ items from $M'$.
Denote by $\alpha'_1$ the (possibly fractional) number of items agent $1$ receives under $\calF$ under $v'_1$, then we have $\alpha'_1\ge\frac{5.5m}{16}-\frac{k}{2}$, thus she will receive at least $\frac{5.5m}{16}-\frac{k}{2}-\frac{5m}{16}>0$ more items from $M'$ from the misreport.\hfill $\square$

For more than two agents, the existence of EFX allocations remains open.
Nevertheless, even when we restrict our attention to instances in which EFX allocation is guaranteed to exist, the incompatibility result extends to hold.
In particular, we construct $n-2$ additional agents whose values are the same as agent $2$, and modify the number of items accordingly (specifically, ensure that $m$ is sufficiently large).
We can similarly prove that in any EF$_{-k}^{+0}$X allocations, item $g_0$ needs to be allocated to agent $1$, and agent $2$ will receive a fraction of items from $M'$ within a certain range in the allocation output by the fractional allocation rule $\calF$.
However, for each (possibly fractional) number of items received by agent $2$ from $M'$, there is always a beneficial misreport for either agent $1$ or agent $2$.
This gives us the following theorem, whose proof is similar to the case of two agents and thus deferred to Appendix~\ref{append:incom_efkx_n}.

\begin{theorem}\label{thm:incom_efkx_n}
    For any fixed number of agents and any nonnegative integer $k$, there is no randomized mechanism that is truthful and EF$_{-k}^{+0}$X.
\end{theorem}

\section{Truthful, Almost Envy-Free, and Pareto-Optimal Mechanisms}
\label{sect:PO}
Even if we do not care about the decomposition rule $\calD$ of indivisible item allocations and only focus on fractional allocations, the fractional allocation rules $\calF$ that satisfy truthfulness and Pareto-optimality are already quite restrictive.
It is known that any truthful and Pareto-optimal mechanism for divisible item allocations gives an agent all the items where she has positive values~\cite{momi2017efficient,garg2022efficient}.
This type of dictatorship-styled mechanism clearly has a poor performance on fairness.

In this section, we consider restrictive valuation functions.
For binary valuation functions (where $v_i(g)\in\{0,1\}$ for any agent $i$ and any item $g$), the maximum Nash welfare rule provides a deterministic mechanism that satisfies truthfulness and the EF1 property~\cite{halpern2020fair,babaioff2021fair,barman2022truthful}.
We aim to investigate how far this can be generalized.

We will show that the existence of Pareto-optimal, truthful, and almost envy-free mechanisms can be at most generalized to bi-valued valuation functions:
when agents' valuation functions are \emph{bi-valued} (i.e., $v_i(g)\in\{p,q\}$ for some $p>q\geq0$), there exists a randomized truthful EF1 mechanism that satisfies $\exan$ Pareto-optimality; when agents' valuation functions are \emph{tri-valued} (i.e., $v_i(g)\in\{p,q,r\}$ for some $p>q>r\geq 0$), for any $u$ and $v$, even with two agents, there does not exist a mechanism that is truthful, $\expo$ Pareto-optimal, and EF$^{+u}_{-v}$.

\subsection{Positive Result for Bi-Valued Valuation Functions}
\label{sect:PObi}
\citet{halpern2020fair} and \citet{babaioff2021fair} independently show that the maximum Nash welfare rule, defined by finding an allocation $(A_1,\ldots,A_n)$ with maximum Nash welfare $\prod_{i=1}^nv_i(A_i)$, is truthful (with some consistent tie-breaking rule) for binary valuations $v_i(g)\in\{0,1\}$.
Moreover, even for general valuation functions, an allocation with maximum Nash welfare is always EF1~\cite{CKMP+19}.
This gives us a deterministic truthful EF1 mechanism.

However, the maximum Nash welfare rule fails to guarantee truthfulness for bi-valued valuation functions.
Consider the following example with two agents and six items.
\begin{center}
    \begin{tabular}{ccccccc}
    \hline
         & $g_1$ & $g_2$ & $g_3$ & $g_4$ & $g_5$ & $g_6$ \\
    \hline
        $v_1$ & 2 & 2 & 1 & 1 & 1 & 1\\
        $v_2$ & 1 & 1 & 1 & 1 & 1 & 1\\
    \hline
    \end{tabular}
\end{center}
The allocation maximizing the Nash welfare $\prod_{i=1}^nv_i(A_i)$, among integral allocations or fractional allocations, is unique: $A_1=\{g_1,g_2\}$ and $A_2=\{g_3,g_4,g_5,g_6\}$.
However, if agent $1$ misreports her valuation function by changing $v_1(g_3)$ from $1$ to $2$, the allocation maximizing the Nash welfare, among integral allocations or fractional allocations, becomes $A_1=\{g_1,g_2,g_3\}$ and $A_2=\{g_4,g_5,g_6\}$.
This is beneficial for agent $1$.
Therefore, the maximum Nash welfare rule with both the $\exan$ version (a randomized mechanism $(\calF,\calD)$ with $\calF$ being the rule that finds a possibly fractional allocation with the highest Nash welfare among all possibly fractional allocations) and the $\expo$ version (find integral allocations with the highest Nash welfare among all integral allocations) fail to be truthful for bi-valued valuations.

Nevertheless, we will show that the maximum Nash welfare rule can be carefully twisted to achieve truthfulness.
\begin{theorem}\label{thm:PObi}
    There exists a truthful, EF1, and $\exan$ Pareto-optimal randomized mechanism $(\calF,\calD)$ if agents' valuation functions satisfy $v_i(g)\in\{p,q\}$ for every $i\in N$ and $g\in M$.
\end{theorem}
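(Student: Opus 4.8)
The natural ``twist'' of the maximum Nash welfare rule that I would try is to \emph{force every agent to receive the same total mass}, which is exactly what kills the manipulation in the example above (there the deviating agent gains extra \emph{mass}, not merely better items). Concretely, assume WLOG that $n\mid m$ (pad with dummy items valued $q$ by everyone, which keeps the instance bi-valued), and let $\calF$ output the fractional allocation $\bX$ maximizing the Nash welfare $\prod_{i\in N} v_i(\bX)$ over all fractional allocations subject to $\sum_{g\in M} x_{ig}=\frac{m}{n}$ for every $i$, breaking ties by a fixed consistent rule that must be chosen with the same care as Algorithm~\ref{alg:tiebreaking}. For $\calD$ I would decompose the output into integral allocations in each of which every agent receives exactly $m/n$ items. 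Three things then need to be checked: $\calF$ is truthful, $\calF$ is ex-ante Pareto-optimal, and its output is EF1-realizable.

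\textbf{Pareto-optimality.} A Nash-welfare maximizer over any convex set of allocations is automatically Pareto-optimal within that set, so $\calF(v_1,\ldots,v_n)$ is not dominated by any equal-mass allocation (a strict Pareto improvement would strictly increase $\prod_i v_i$). It remains to rule out domination by an \emph{unequal}-mass allocation $\bX'$. Writing $v_i(\bX)=q\cdot s_i+(p-q)\cdot h_i$, where $s_i$ is $i$'s total mass and $h_i$ is the mass of $i$'s $p$-valued items she receives, I would argue that an over-full agent in $\bX'$ cannot be ``short-changed on quality'' (her mass of $p$-valued items is at most $m/n$, otherwise she would already be mass-capped in $\calF(v_1,\ldots,v_n)$), so a sliver of her mass can be traded down to an under-full agent without destroying domination; iterating drives $\bX'$ to the equal-mass slice and contradicts the previous point. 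This is where the bi-valued restriction (every extra unit of mass is worth between $q$ and $p$ to every agent) is used.

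\textbf{Truthfulness --- the main obstacle.} Once masses are fixed at $m/n$, an agent's true utility from any output of $\calF$ equals $q\cdot\frac{m}{n}+(p-q)\cdot(\text{mass of her truly-}p\text{ items that she receives})$, so she cares only about maximizing her \emph{true high-mass}. I would show truth-telling weakly maximizes this, splitting deviations into deflation (reporting a truly-$p$ item as $q$), inflation (reporting a truly-$q$ item as $p$), and mixtures. Deflation is a monotonicity argument: demoting a truly-high item only makes the Nash-welfare optimizer route less of it to the agent, and the gap $p-q$ means she loses more there than she can recoup elsewhere. Inflation is the delicate case (this is exactly what breaks unconstrained MNW), and here the equal-mass constraint is essential: the reported value of the agent's genuine $p$-items is unchanged, while the promoted item also becomes ``reported-$p$,'' so the optimizer becomes indifferent between giving her those genuine items versus the promoted one and resolves this in favour of the other agents' welfare --- typically by handing her the worthless promoted item and freeing up her (contested) genuine items --- so her true high-mass weakly drops. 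Making this rigorous requires an exchange argument on Nash-welfare-optimal equal-mass allocations together with a carefully chosen tie-breaking rule (e.g.\ lexicographically smallest under a fixed item order), so that the ``indifference resolution'' never accidentally favours the deviator; pinning this tie-break down is the hardest part of the whole proof.

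\textbf{EF1-realizability.} The polytope of fractional allocations with each agent-mass equal to $m/n$ and each item fully allocated has a totally unimodular (network/interval) constraint matrix, hence is integral, so $\calF(v_1,\ldots,v_n)$ is a convex combination of integral allocations each giving every agent exactly $m/n$ items. Using Lemma~\ref{lem:matching} I would organize this decomposition so that, in addition, each agent receives in every realization a ``fair share'' of the items she values at $p$ --- which is precisely the structure the Nash-welfare-optimal fractional allocation already exhibits (she gets all her uncontested $p$-items unless mass-capped, and a balanced share of contested ones). A short case check then gives EF1 in every realization: with equal cardinalities, an agent holding fewer $p$-items than her share is not envied up to one item, and an agent holding her full share envies another bundle by at most one $p$-valued item. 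Combining the three parts yields the theorem.
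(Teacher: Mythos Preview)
Your high-level instinct --- force every agent to receive exactly $L=m/n$ units of mass --- is exactly the idea the paper exploits, but the mechanism you propose and the three sub-proofs you sketch all diverge from the paper in ways that leave real gaps.

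\textbf{The mechanism.} The paper does \emph{not} maximize Nash welfare over the equal-mass slice with the original bi-valued utilities. It first replaces $(p,q)$ by $(1,0)$ and runs (unconstrained) fractional MNW on those binary valuations, then truncates bundles exceeding $L$, and finally splits the leftover items \emph{uniformly} among the under-allocated agents. These two rules need not coincide: binary MNW equalizes the $h_i$'s (high-mass), whereas your objective $\prod_i(Lq+(p-q)h_i)$ trades them off differently and depends on the actual values of $p,q$. More importantly, the uniform third phase is not an artifact of tie-breaking; it is what makes the truthfulness reduction go through.

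\textbf{Truthfulness.} Your inflation/deflation sketch is where the argument would have to do real work, and the paper's proof shows why. The paper does \emph{not} argue directly on the bi-valued MNW objective. Instead it (i) defines, for the agent $i$ under scrutiny, a hypothetical rule $\calF_i^g$ that is at least as manipulable as $\calF$ (it routes as many of $i$'s truly-$p$ but reported-$q$ items as possible into the third phase), and (ii) proves $\calF_i^g$ is truthful by a two-step reduction to the known truthfulness of MNW for \emph{binary} valuations (Chen et al.), together with a careful inequality showing that what $i$ can recoup through the uniform third phase never exceeds what she lost in phase one. Your ``the optimizer becomes indifferent and resolves in favour of the others'' is precisely the kind of statement that fails without the uniform-redistribution structure, and you yourself flag the tie-breaking as unresolved.

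\textbf{EF1-realizability.} Total unimodularity of the transportation polytope gives you a decomposition into \emph{equal-cardinality} integral allocations, but equal cardinality is far from EF1 for bi-valued instances: what matters is $|A_j\cap P_i|-|A_i\cap P_i|$, and a Birkhoff-style decomposition gives no control over that gap. The paper sidesteps this entirely by proving that the output of $\calF$ is literally an outcome of the probabilistic serial rule (under some eating order), and then invokes the Aziz--Freeman--Shah result that PS outcomes are EF1-realizable. Your Lemma~\ref{lem:matching} route does not supply the ``fair share of $p$-items in every realization'' that your final case check assumes.

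\textbf{Pareto-optimality.} Your reduction to the equal-mass slice is plausible but not what the paper does; it exhibits explicit Fisher-market prices ($p$ for items touched by non-truncated agents in phase one, $q$ otherwise) and budgets, checks the max-bang-per-buck condition, and invokes the first welfare theorem. This is short and avoids the trading-down argument you outline.
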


as the case with $q=0$ reduces to the binary setting (by rescaling the valuations such that $p=1$) and the maximum Nash welfare rule satisfies EF1 and truthfulness by~\citet{halpern2020fair} and~\citet{babaioff2021fair}.
We use the alternative notation $(X_1,\ldots,X_n)$ for a fractional allocation $\bX=\{x_{ig}\}_{i\in N, g\in M}$, where $(X_1,\ldots,X_n)$ is a partition of $[0,m]$.
In particular, each item $g=1,\ldots,m$ is viewed as an interval $[g-1,g]$, the item set is then the union of $m$ intervals (which is $[0,m]$).
Given a fractional bundle $X$, let $|X|$ be the size of $X$, for which we say \emph{the number of items} in $X$ (although this number may be fractional).
Naturally, $|X|=\sum_{g=1}^m|X\cap [g-1,g]|$, and the notation $\bX=\{x_{ig}\}_{i\in N,g\in M}$ is translated to $x_{ig}=|X_i\cap[g-1,g]|$ in our new notation.
We use $v_i(X)$ to denote agent $i$'s value on $X\subseteq [0,m]=M$.

We first describe the fractional division rule $\calF$.
Let $L=m/n$, and we ensure each agent receives a total of exactly $L$ items.
Notice that $L$ may not be an integer.
We then let $(v_1',\ldots,v_n')$ be the valuation profile where, for each $i\in N$ and $g\in M$, 
$v_i'(g)=\left\{\begin{array}{ll}
    1 & \mbox{if }v_i(g)=p \\
    0 & \mbox{if }v_i(g)=q
\end{array}\right..$

In the first phase, we compute a (possibly fractional and partial) allocation $(X_1',\ldots,X_n')$ that maximizes the Nash welfare with respect to $(v_1',\ldots,v_n')$.
In particular, it first maximizes the number of agents receiving positive utilities, then maximizes the Nash welfare among the set $S$ of agents receiving positive utilities $\prod_{i\in S}v_i'(X_i')$.
Note that in the first phase, we only focus on items with value $1$ under some $v'_i$; if an item has value $0$ under each $v'_i$, it will not be allocated in this phase.
Next, in the second phase, if the total number of items in $X_i'$ is more than $L=m/n$, we truncate $X_i'$ such that its size is exactly $L$.
That is, find an arbitrary subset $X_i''$ of $X_i'$ such that $|X_i''|=L$.
Let $(X_i'',\ldots,X_n'')$ be the allocation after this operation, which may be a partial allocation with unallocated items.
Finally, in the third phase, we allocate the remaining unallocated items (including the unallocated items from the first phase and the truncated items from the second phase) to the agents $i$ with $|X_i''|<L$ in a way such that each agent receives exactly $L$ units of items at the end.
This is done in a way that each unallocated item is allocated ``uniformly''.
Specifically, let $T=m-\sum_{i=1}^n|X_i''|$ be the total amount of unallocated items and $\alpha_i=\frac{L-|X_i''|}{T}$ (notice that $\sum_{i=1}^n\alpha_i=1$).
A fraction $\alpha_i$ of each unallocated item is added to $X_i''$.
We obtain an allocation $(X_1,\ldots,X_n)$ such that $|X_i|=L$ for each agent $i$.
Notice that an item $g$ may have been allocated partially in $(X_1'',\ldots,X_n'')$ so that only a fraction $\beta_g$ of $g$ is unallocated before the third phase.
In this case, we include an $\alpha_i\cdot\beta_g$ fraction of item $g$ to each agent's bundle $X_i''$.
This completes the description of the division rule $\calF$.

The following three propositions prove Theorem~\ref{thm:PObi}.
\begin{proposition}\label{prop:POtruthful}
    The division rule $\calF$ is truthful.
\end{proposition}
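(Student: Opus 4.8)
The plan is to first collapse truthfulness to a statement about a single scalar. Fix an agent $i$ and write $P_i=\{g\in M:v_i(g)=p\}$ for her set of truly good items. Since $\calF$ always returns a bundle $X_i$ with $|X_i|=L=m/n$ no matter what anyone reports, agent $i$'s utility is
$v_i(X_i)=q|X_i|+(p-q)|X_i\cap P_i|=qL+(p-q)\,|X_i\cap P_i|$,
a constant plus a positive multiple of the mass of her truly good items inside her bundle. Because $\calF$ only records \emph{which} items an agent calls ``$p$'', a report of agent $i$ amounts to a choice of a set $S\subseteq M$ (truthful: $S=P_i$), so it suffices to show that $S=P_i$ maximizes $|X_i\cap P_i|$. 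A useful consequence of this reduction is that agent $i$ is indifferent among all items outside $P_i$; in particular the arbitrary choices made in the truncation of phase~2 and in the disposal of items wanted by nobody matter only through their effect on $|X_i\cap P_i|$.

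Next I would pin down the structure of the three phases. Phase~1 is max Nash welfare for the binary profile $(v_1',\dots,v_n')$; using the competitive-equilibrium-with-equal-incomes view, there are item prices and a per-agent price $\pi_j^\ast=\min_{g\in S_j}\pi_g$ so that each agent $j$ receives in $X_j'$ only items of $S_j$ priced at $\pi_j^\ast$, with $|X_j'|=v_j'(X_j')=:d_j=1/\pi_j^\ast$ (items desired by no one can be left unallocated in phase~1 and handled entirely in phase~3, and an agent with $S_j=\emptyset$ is fully indifferent and can be ignored). Call $j$ \emph{rich} if $d_j\ge L$ and \emph{poor} otherwise. If agent $i$ reports truthfully and is rich, phase~2 keeps $L$ units that all lie in $P_i$ and phase~3 gives her $\alpha_i=0$, so $|X_i\cap P_i|=L$, the largest conceivable value, and a truthfully-rich agent has no profitable deviation. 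If agent $i$ reports truthfully and is poor, then $X_i''=X_i'\subseteq P_i$ and phase~3 adds an $\alpha_i=(L-d_i)/T$ fraction of every leftover piece ($T$ being the total truncated mass), giving the exact formula $|X_i\cap P_i|=d_i+\alpha_i\,|\Lambda\cap P_i|$, where $\Lambda$ is the set of truncated (leftover) items.

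It then remains to show that for every misreport $S$ the induced value $|X_i\cap P_i|$ is at most the truthful value computed above, organised by how $S$ relates to $P_i$. When $S\subseteq P_i$, shrinking the demanded set can only (weakly) raise agent $i$'s equilibrium price and lower her phase-1 mass, and it creates no new leftover item she values, so neither the $d_i$ term nor the phase-3 term can increase. When $S\not\subseteq P_i$, phase~1 may hand agent $i$ items outside $P_i$; here I would invoke the truthfulness of (fractional) max Nash welfare for the binary profile together with the fact that $|X_i'\cap P_i|\le v_i'(X_i')$ is already maximised over reports at $S=P_i$, and argue that the uniform redistribution of phase~3 is an anonymous, monotone post-processing that cannot convert phantom phase-1 mass into extra $P_i$-mass. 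Finally one must rule out a truthfully-poor agent misreporting into the rich regime: after truncation she keeps at most $L$ units, of which at most $\min(|P_i|,L)$ can lie in $P_i$, a ceiling already attained (together with the nonnegative phase-3 contribution) under truthful reporting; this is exactly where the cap at the common level $L$ plus uniform redistribution neutralises the manipulations that defeat plain max Nash welfare (as in the six-item example). The main obstacle is the comparative-statics step in the first case above --- understanding how the binary-MNW prices and the identity of $\Lambda$ co-move as agent $i$ perturbs $S$ --- which I expect to require an exchange / augmenting-path argument on the bipartite demand graph in the spirit of matroid-rank truthfulness proofs; a secondary nuisance is making the ``items wanted by nobody'' and ``arbitrary $L$-subset'' choices precise enough that the rich/poor formulas hold for every admissible run of $\calF$.
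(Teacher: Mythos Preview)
Your reduction to maximising $|X_i\cap P_i|$ and the rich/poor split are correct and match the paper's opening moves. The gap is in the understating case $S\subseteq P_i$. The assertion that shrinking the reported set ``creates no new leftover item she values'' is simply false: if agent $i$ drops $g\in P_i\setminus S$ and no other agent (or only eventually-rich agents) wants $g$, then $g$ or a piece of it lands in the leftover pool $\Lambda$, so $|\Lambda\cap P_i|$ can strictly increase; simultaneously $\alpha_i=(L-d_i)/T$ also goes up because $d_i$ shrinks. The entire content of the proposition is that the phase-3 gain $\alpha_i\,|\Lambda\cap P_i|$ never compensates the phase-1 loss, and this is not argued --- you state a one-line claim that is wrong, and then separately flag the same step as ``the main obstacle''. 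The overstating case $S\not\subseteq P_i$ is in similar shape: ``anonymous, monotone post-processing'' is a slogan, not an argument; misreporting there can change which agents are rich and hence the composition of $\Lambda$ in ways that need to be tracked.

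The paper's route is structurally different from the price/exchange comparative statics you propose. It first replaces $\calF$ by a hypothetical rule $\calF^g_i$ that knows $P_i$ and, among all admissible phase-2 truncations, chooses one maximising the $P_i$-mass left for phase 3; this dominates every run of $\calF$ and disposes of your ``arbitrary $L$-subset'' nuisance in one stroke. For overstating, it gives an explicit augmenting-path redistribution (adding the misreported items back piece by piece, using resource monotonicity of MNW) to reduce to the understating case. For understating, it works with the iterative group description of binary MNW --- sets $S_1,\dots,S_K$ with non-decreasing ratios $|C^{R_k}(S_k)|/|S_k|$ --- and proves the sharp scalar inequality
\[
x'+\frac{\delta}{(L-x)|S|+\delta}(L-x')\le x,
\]
linear in $L$ and checked at $L=x$ and $L\to\infty$, where $x,x'$ are the truthful and misreported phase-1 lengths and $\delta$ is the dropped $P_i$-mass. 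An exchange argument on the bipartite demand graph might well furnish an alternative proof, but it would have to deliver a bound of exactly this strength, and your sketch does not yet do so.
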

\begin{proof}[Proof (sketch)]
    We only give a very high-level idea here. The formal proof involves many technical details and is deferred to Appendix~\ref{append:PObi}.

    The proof is intuitively based on the fact that the (fractional version of) maximum Nash welfare rule is truthful for binary valuations~\cite{chen2013truth,aziz2014cake}\footnote{A truthful mechanism for binary valuations was first given by~\citet{chen2013truth}, and \citet{aziz2014cake} realized that the mechanism by Chen et al. is the maximum Nash welfare rule.}.
    If an agent $i$ misreports her valuation function, by the truthfulness of the maximum Nash welfare rule, the number of items with value $p$ to agent $i$ allocated in the first and second phases in $X_i''$ cannot be increased.
    Agent $i$ can only hope that some of the items where she has value $p$ will be fractionally allocated to her in the third phase (if agent $i$ misreports her valuation functions such that the value of some of the items where she has value $p$ is reported as $q$, then these items may be fractionally allocated to her in the third phase).
    However, by our uniform way of allocating remaining items in the third phase, we can guarantee that the misreporting is not beneficial to agent $i$.
    Proving this requires careful analysis including reducing the problem to the truthfulness of a hypothetical division rule and breaking down into the analysis in~\citet{chen2013truth}.
    It is discussed in Appendix~\ref{append:PObi}.
\end{proof}

\begin{proposition}\label{prop:POmarket}
    The division rule $\calF$ is Pareto-optimal.
\end{proposition}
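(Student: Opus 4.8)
The plan is to exhibit strictly positive weights $\lambda_1,\ldots,\lambda_n$ for which the fractional allocation $(X_1,\ldots,X_n)$ output by $\calF$ maximizes the weighted utilitarian welfare $\sum_{i\in N}\lambda_i v_i(X_i)$ over all fractional allocations; this suffices, because any Pareto-improvement would strictly increase that sum, so a maximizer is automatically Pareto-optimal. By additivity, $\sum_i\lambda_i v_i(X_i)=\sum_{g\in M}\sum_i(\lambda_i v_i(g))x_{ig}$, and since $\sum_i(\lambda_i v_i(g))x_{ig}\le\max_i\lambda_i v_i(g)$ for each $g$, the allocation is a maximizer exactly when it meets the \emph{maximal-claimant condition}: whenever $x_{ig}>0$, $\lambda_i v_i(g)\ge\lambda_j v_j(g)$ for all $j$. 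So the task reduces to choosing the $\lambda_i$ and checking this inequality item by item.

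To choose the weights I would use the competitive-equilibrium (Fisher-market) structure of the phase-1 allocation $(X_1',\ldots,X_n')$, which maximizes the Nash welfare of the binarized profile: it is supported by item prices $(p_g)_g$ under unit budgets, so for $0/1$ valuations every agent with positive binarized utility consumes only items good for her (original value $p$), all at a single price $p_i^\ast$, gets exactly $1/p_i^\ast$ of them, and every item good for $i$ has price at least $p_i^\ast$. I would then call agent $i$ \emph{rich} if $1/p_i^\ast>L=m/n$ (she receives more than $L$ good items in phase~1) and \emph{non-rich} otherwise, placing all agents with no $p$-valued item and all borderline agents in the non-rich class, and I would set $\lambda_i=q$ for rich agents and $\lambda_i=p$ for non-rich agents (using $p>q>0$).

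The verification rests on two structural facts I would establish first. (i) The phase-1 maximizer and the (arbitrary) phase-2 truncation can be chosen so that after phase~2 every agent holds only items good for her; concretely, route all universally $q$-valued items into the phase-3 pool and truncate each rich agent to $L$ of her good items. (ii) Any item that enters the phase-3 pool is either universally $q$-valued or a surplus item of some rich agent; in the latter case it has price $p_i^\ast<n/m$, so any agent valuing it $p$ would have equilibrium price $\le p_i^\ast<n/m$ and hence be rich — so every phase-3 item is good only for rich agents (or is universally $q$-valued), and moreover rich and borderline agents receive nothing in phase~3 since their bundles already have size $L$. Combining these, in the final allocation a rich agent holds only items good for her (each of which, by the same price argument, is bad for every non-rich agent), and a non-rich agent holds her own good items together with items that are bad for her and good only for rich agents (or universally $q$-valued).

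Then I would check the maximal-claimant condition directly from $p>q>0$: if $x_{ig}>0$ and $v_i(g)=p$ with $i$ non-rich, $\lambda_i v_i(g)=p^2$ dominates everything; if $v_i(g)=p$ with $i$ rich, $g$ is bad for all non-rich agents so $\lambda_j v_j(g)\le\max\{qp,pq\}=pq=\lambda_i v_i(g)$; and if $v_i(g)=q$ then $i$ is non-rich, $\lambda_i v_i(g)=pq$, and $g$ is good only for rich agents, so again $\lambda_j v_j(g)\le pq$. Hence the allocation maximizes $\sum_i\lambda_i v_i(X_i)$ and is Pareto-optimal. The hard part will be making the Fisher-market/price description rigorous, especially the degenerate case where the maximum Nash welfare of the binarized profile is zero: agents with no $p$-valued item must be folded into the non-rich class and shown never to hold an item good for a non-rich agent, and universally $q$-valued (or price-zero) items must be shown to be routable to non-rich agents — this is exactly where the freedom in the phase-1 maximizer and the phase-2 truncation gets used. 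The borderline case $1/p_i^\ast=L$ and the arithmetic of the phase-3 fractions $\alpha_i$ are then routine.
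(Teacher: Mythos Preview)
Your proposal is correct and is essentially the paper's proof in a different wrapper: both partition the agents into the truncated (``rich'') and non-truncated (``non-rich'') classes, prove the same structural lemma that any item held by a rich agent is $q$-valued by every non-rich agent, and use it to certify Pareto optimality. The paper phrases the certification as a Fisher-market equilibrium with item prices $p_g\in\{p,q\}$ and invokes the first welfare theorem, whereas you verify the maximal-claimant condition for the weights $\lambda_i\in\{p,q\}$ directly; your weights are precisely the reciprocals of the paper's bang-per-buck ratios $\gamma_i\in\{1,p/q\}$, so the two arguments are line-for-line equivalent once unpacked.
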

\begin{proof}
    We begin by introducing the Fisher market.
    A Fisher market with $n$ agents and $m$ \emph{divisible} items takes a set of valuation functions $(v_1,\ldots,v_n)$ and a set of \emph{budgets} $(b_1,\ldots,b_n)\in\mathbb{R}_{\geq0}^n$ as inputs, and outputs $(\bX,\mathbf{p})$ where $\bX$ is an allocation and $\mathbf{p}=(p_1,\ldots,p_m)\in\mathbb{R}_{\geq0}^n$ is the price vector.
    We say that $(\bX,\mathbf{p})$ is a \emph{market equilibrium} if
    \begin{itemize}
        \item all items with positive prices (i.e., $p_g> 0$) are fully allocated $\sum_{i=1}^nx_{ig}=1$,
        \item each agent spends all her budget: for each $i=1,\ldots,n$, we have $b_i=\sum_{g=1}^mp_g\cdot x_{ig}$, and
        \item each agent only buys items with the best value-to-price ratio; formally, for each agent $i$, let $\gamma_i=\max_{g=1,\ldots,m}v_{ig}/p_g$ be the \emph{maximum bang-per-buck} ratio, and we require $v_{ig}/p_g=\gamma_i$ whenever $x_{ig}>0$.
    \end{itemize}
    The \emph{first welfare theorem} states that $\bX$ is Pareto-optimal if $(\bX,\mathbf{p})$ is a market equilibrium.

    To prove Proposition~\ref{prop:POmarket}, we will define budgets $b_1,\ldots,b_n$ for the $n$ agents and the prices $p_1,\ldots,p_m$ for the $m$ items such that the allocation $(X_1,\ldots,X_n)$ output by $\calF$, together with the price vector, is a market equilibrium.
    Recall that $(X_1',\ldots,X_n')$ is an allocation that maximizes the Nash welfare for valuation functions $v_1',\ldots,v_n'$ defined by modifying $v_1,\ldots,v_n$ with $p$ changed to $1$ and $q$ changed to $0$.
    If $|X_i'|>L=m/n$, it is truncated such that a subset $X_i''$ of $X_i'$ with length $L$ is finally allocated to agent $i$.
    Let $Z$ be the set of agents whose bundles have been truncated.
    Notice that $X_i'=X_i''$ for each $i\in N\setminus Z$.
    
    We first show a property that, in the intermediate allocation $(X_1',\ldots,X_n')$, an item $g$ cannot be shared between an agent in $N\setminus Z$ and an agent in $Z$: if $v_i(g)=p$ for some $i\in N\setminus Z$, then $g$ will be allocated only among agents in $N\setminus Z$, and no fraction of $g$ will be included in $X'_{i'}$ for $i' \in Z$.
    Suppose this is not the case, and $g$ is included in some $X_{i'}'$ for some $i'\in Z$.
    By the fact that $X_i'$ is not truncated (since $i\in N\setminus Z$) and $X_{i'}'$ is truncated, we have $|X_i'|<|X_{i'}'|$.
    It is easy to see that moving some fraction of $g$ from $X_{i'}'$ to $X_i'$ increase the Nash welfare for the valuation profile $(v_1',\ldots,v_n')$, which contradicts to that $(X_1',\ldots,X_n')$ is a maximum Nash welfare solution.

    We are now ready to define the prices for all items and budgets for agents.
    If an item $g$ is not allocated in the first phase (i.e., $v_i(g)=q$ for all $i\in N$), its price is set to $p_g=q$.
    If some fraction of an item $g$ is included in some $X_i'$ for some $i\in N\setminus Z$, the price of $g$ is defined by $p_g=p$; otherwise, by our observation in the previous paragraph, $g$ is allocated among the agents in $Z$ in the intermediate allocation $(X_1',\ldots,X_n')$, and its price is set to $p_g=q$.
    The budget $b_i$ for each agent $i$ is set to the value such that agent $i$ spends exactly all her budget $b_i$ to buy $X_i$.
    That is,
    $$b_i=\left\{\begin{array}{ll}
        p|X_i'|+q(L-|X_i'|) & \mbox{if }i\in N\setminus Z \\
        qL & \mbox{if }i\in Z
    \end{array}\right..$$

    We next show that this is a market equilibrium.
    For each agent $i\in N\setminus Z$, her maximum bang-per-buck ratio is $\gamma_i=1$, as we have shown that there does not exist an item $g$ such that $v_i(g)=p$ while $p_g=q$.
    It is straightforward to see that agent $i$ spends all her budget on the items with the maximum bang-per-buck ratio $\gamma_i=1$.
    For each agent $i'\in Z$, her maximum bang-per-buck ratio is $\gamma_{i'}=p/q$.
    The $L$ items she receives have value $p$ (as these $L$ items form $X_{i'}''$, which is truncated from $X_{i'}'$ that includes only items of value $p$).
    The prices of them are set to $q$, as we have shown that each item with value $p$ will not be included in $X'_{i'}$.
    Therefore, agent $i'$ spends all her budget on items with the maximum bang-per-buck ratio.
\end{proof}

\begin{proposition}\label{prop:POrealizability}
    The division rule $\calF$ always outputs fractional allocations that are EF1-realizable. Moreover, the decomposition of EF1 allocations can be done in polynomial time.
\end{proposition}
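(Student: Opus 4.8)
The plan is to reduce EF1 for bi-valued valuations to a statement about how many $p$-valued items each agent holds, prove that the fractional allocation output by $\calF$ is fractionally envy-free, and then decompose it into integral allocations whose $p$-counts are the correct roundings of the fractional ones.

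\emph{A counting criterion.} For agent $i$ let $M_i^p=\{g\in M: v_i(g)=p\}$, and for a (fractional) bundle $S$ let $\pi_i(S)$ denote the amount of mass of items of $M_i^p$ contained in $S$. Since $\calF$ gives every agent exactly $L=m/n$ units of items, for any two size-$L$ bundles $S,S'$ one has $v_i(S)\ge v_i(S')$ iff $\pi_i(S)\ge\pi_i(S')$. I first show the output $(X_1,\dots,X_n)$ is fractionally envy-free, i.e.\ $\pi_i(X_i)\ge\pi_i(X_j)$ for all $i,j$. For a truncated agent $j$ (those with $|X_j'|>L$) this is immediate: by the analysis in the proof of Proposition~\ref{prop:POmarket}, $X_j$ consists of $L$ units of items worth $p$ to $j$, so $\pi_j(X_j)=L$ is maximal. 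For a non-truncated pair $i,j$, the first-order optimality conditions of the Nash-welfare maximizer $(X_1',\dots,X_n')$ of the binary profile $(v_1',\dots,v_n')$ give $\pi_i(X_i')\ge\pi_i(X_j')$ (moving mass of an $M_i^p$-item from $j$ to $i$ may not increase $\log$-welfare), and then a short computation using $\alpha_i-\alpha_j=(|X_j'|-|X_i'|)/T$ together with the fact that every third-phase item is worth exactly $q$ to both $i$ and $j$ shows the uniform redistribution in the third phase restores $\pi_i(X_i)\ge\pi_i(X_j)$. Next I show: any integral allocation $(A_1,\dots,A_n)$ with (A) $|A_i|\in\{\lfloor L\rfloor,\lceil L\rceil\}$ for all $i$, (B) $\pi_i(A_i)\ge\lfloor\pi_i(X_i)\rfloor$ for all $i$, and (C) $\pi_i(A_j)\le\lceil\pi_i(X_j)\rceil$ for all $j\ne i$, is EF1; this is a short case analysis, using $\lfloor\pi_i(X_i)\rfloor\ge\lceil\pi_i(X_j)\rceil-1$ and splitting on whether $A_j$ contains an $M_i^p$-item and on whether $|A_i|=|A_j|$ or $|A_i|=|A_j|-1$.

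\emph{Using the market structure.} From the proof of Proposition~\ref{prop:POmarket}, the agents partition into the truncated set $Z$ and $N\setminus Z$; every item held by any agent of $Z$ in $(X_1',\dots,X_n')$, and hence every item redistributed in the third phase, is worth only $q$ to every agent of $N\setminus Z$. Therefore it suffices to produce a distribution over integral allocations in which (i) each $j\in Z$ receives $\lfloor L\rfloor$ or $\lceil L\rceil$ items all worth $p$ to her --- then $j$ is EF1 toward everyone because $v_j(A_j)=p|A_j|$ and $|A_k|\le|A_j|+1$, and conditions (B),(C) hold automatically whenever such a bundle is a recipient or a non-recipient bundle for an agent of $N\setminus Z$ (it contains no $M_i^p$-item) --- and (ii) the restriction to $N\setminus Z$ satisfies (A),(B),(C), with the two parts coordinated so that all $n$ realized bundle sizes lie in $\{\lfloor L\rfloor,\lceil L\rceil\}$ (possible since $\sum_i|X_i|=m$). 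Over $N\setminus Z$, the fractional sub-allocation is the Nash-welfare maximizer of the binary profile, so each item is either worth $q$ to all of its fractional holders (this includes all third-phase items) or worth $p$ to all of them and shared only among such agents. Grouping the items into blocks and building regular bipartite (multi-)graphs whose proper edge-colourings are produced by Lemma~\ref{lem:matching}, one obtains a uniform distribution over integral allocations in which, block by block, every agent's count of every relevant item type differs from its fractional value by less than one --- this is the rounding mechanism already used in the $\frac1n$-MMS/PROP1 proof and in the Type~I analysis of Sect.~\ref{sect:three-1}, and it delivers (B) and (C) simultaneously. Each $X_j''$ with $j\in Z$ is a union of $L$ distinct items all worth $p$ to $j$, decomposed the same way into integral subsets of size $\lfloor L\rfloor$ or $\lceil L\rceil$; interleaving the two sets of random choices yields the global size balance (A). Matching of marginal probabilities is immediate, and every step is polynomial time (Lemma~\ref{lem:matching} plus linear programming).

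\emph{Main obstacle.} Two points need care. In the Nash-welfare maximizer an item may be split between a truncated and a non-truncated agent; it is worth $q$ to the latter, so it can be treated as a third-phase-type item, but making the bookkeeping rigorous may require choosing a convenient representative among the Nash-welfare-equivalent optima (so that junk and ``$q$-to-the-holder'' items are placed consistently). More essentially, conditions (B) and (C) couple all agents: rounding each agent's $p$-count independently is inconsistent with being a single joint distribution realizing the prescribed marginals, so the blocks and bipartite graphs in the decomposition must be arranged so that one edge-colouring controls every bundle's $p$-count error at once --- this is the technical core of the argument.
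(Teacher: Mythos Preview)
Your approach is genuinely different from the paper's, and it has a real gap at the point you yourself flag as the ``technical core''.

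The paper's proof is a short reduction: it observes that $(X_1,\dots,X_n)$ is exactly the outcome of the probabilistic serial rule under a suitable tie-breaking --- each agent first eats $X_i'$, then $X_i\setminus X_i'$; the only thing to check is that when agent $i$ finishes $X_i'$ no item $g$ with $v_i(g)=p$ remains, which follows from the Nash-welfare optimality of $(X_1',\dots,X_n')$ (any such leftover fraction would sit in some $X_{i''}'$ with $|X_{i''}'|>|X_i'|$, and moving it to $i$ would raise the product). EF1-realizability and the polynomial-time decomposition then come for free from the result of \citet{AzizFrSh23} on PS outcomes.

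Your direct rounding plan is sound through the fractional-envy-freeness step and the counting criterion (A)--(C), but the decomposition that enforces (B) and (C) simultaneously for \emph{all} pairs $(i,j)$ is only gestured at. The rounding mechanisms you invoke --- the PROP1/$\frac1n$-MMS construction and the Type~I edge-coloring --- control $O(n)$ per-agent quantities (each agent's count of her own top items in her own bundle). Condition (C) asks you to control $\pi_i(A_j)$ for all $n(n-1)$ observer--bundle pairs; in particular, when observer $i$ sits at a higher Nash-welfare level than $j$, the value $\pi_i(X_j)$ can be anywhere in $[0,|X_j'|]$, and nothing in an edge-coloring of the kind Lemma~\ref{lem:matching} provides forces the integral $\pi_i(A_j)$ to stay within one of it. This is precisely the obstacle the paper's introduction singles out: envy-based realizability involves $n(n-1)$ linear constraints whose coefficient matrix is typically not totally unimodular, whereas share-based criteria need only $n$. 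Your proposal names this difficulty but does not resolve it, so the argument is incomplete as written; the paper sidesteps the issue entirely by inheriting the decomposition from the PS machinery.
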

\begin{proof}
    We will show that the fractional allocation $(X_1,\ldots,X_n)$ output by $\calF$ is the outcome of the probabilistic serial rule under certain tie-breakings.
    Then, the proposition follows by the result in~\citet{AzizFrSh23} that shows the outcome of the probabilistic serial rule is EF1-realizable and the decomposition can be computed in polynomial time.
    
    Suppose each agent $i$ eats $X_i'$ first and then $X_i\setminus X_i'$.
    By the time $L$, all the items are eaten.
    Notice that some agent $i$ does not have enough time to finish $X_i'$; in particular, this happens when $X_i'$ has been truncated in the second phase of $\calF$.
    
    We need to show that, when $X_i'$ is fully eaten by agent $i$, no fraction of an item $g$ with $v_{ig}=p$ remains.
    By the time $X_i'$ is fully eaten, all the fractional bundles $X_{i'}'$ with $|X_{i'}'|\leq|X_i'|$ are fully eaten.
    By this time, if some item $g$ is not fully eaten, some fraction of $g$ is in the bundle $X_{i''}'$ with $|X_{i''}'|>|X_i'|$.
    We must have $v_i'(g)=0$ for the allocation $(X_1',\ldots,X_n')$ to be the maximum Nash welfare solution for $(v_1',\ldots,v_n')$ (otherwise, move some fraction of $g$ from $X_{i''}'$ to $X_i'$ increases the Nash welfare).
    As a result, $v_i(g)=q$.
    Therefore, by the time $X_i'$ is fully eaten by agent $i$, no fraction of an item $g$ with $v_{ig}=p$ remains.
\end{proof}

\subsection{Negative Result for Tri-Valued Valuation Functions}
For tri-valued valuations, truthfulness and almost envy-freeness are not even compatible with the weaker notion of $\expo$ Pareto-optimality, even if there are only two agents.

\begin{theorem}\label{thm:POtri}
    For $n=2$, there exist $p,q,r$ with $p>q>r\geq0$ such that for all $u,v\in\mathbb{Z}^+$ there does not exist an \expo Pareto-optimal, EF$^{+u}_{-v}$, and truthful randomized mechanism even when agents' valuation functions satisfy $v_i(g)\in\{p,q,r\}$ for every $i\in N$ and $g\in M$.
\end{theorem}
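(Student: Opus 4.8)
The plan is to fix carefully chosen values $p>q>r\ge 0$ (for instance with the ratios $p/q$, $p/r$, $q/r$ pairwise distinct and all different from $1$) and, for every $m$ large relative to $u+v$, to construct a single tri-valued two–agent instance $I=(v_1,v_2)$ on $m$ items together with a misreport $v_1'$ of agent~$1$, and to derive a contradiction with truthfulness. The instance will be assembled from $\Theta(m)$ near-identical blocks of a constant number of items each; since EF$^{+u}_{-v}$ forgives only $O(u+v)$ items for any fixed ordered pair of agents while the blocks are designed so that the relevant envy grows linearly in the number of blocks, the fairness constraint behaves, up to an additive $O(u+v)$ error, like envy-freeness applied block by block.

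The core tool is the classical description of Pareto-optimal allocations for two additive agents: $(A_1,A_2)$ is Pareto-optimal if and only if $\min_{g\in A_1}\tfrac{v_1(g)}{v_2(g)}\ge\max_{h\in A_2}\tfrac{v_1(h)}{v_2(h)}$ (with the standard conventions when a denominator vanishes); equivalently, there is a threshold $\lambda$ with every item of ratio above $\lambda$ in $A_1$, every item of ratio below $\lambda$ in $A_2$, and the ratio-$\lambda$ items split arbitrarily. For tri-valued valuations each item ratio lies in a fixed finite set, so the Pareto-optimal allocations consistent with a reported profile fall into a constant number of ``ratio classes''. The whole point of using three value levels is that agent~$1$, by relabelling certain items through the extra level (reporting a $q$ as a $p$, or a $p$ as an $r$), can move those items between ratio classes and thereby change which allocations are Pareto-optimal; two value levels do not provide enough ratio classes to do this in a damaging way, which is consistent with the positive result Theorem~\ref{thm:PObi}.

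With these ingredients I would carry out the argument in three steps. First, pick the blocks so that on input $I$ the combination of Pareto-optimality and EF$^{+u}_{-v}$ with respect to $I$ forces agent~$1$'s true utility to be at most some value $V$: the items agent~$1$ values are arranged so that Pareto-optimality with respect to $v_1$ puts a genuine ceiling on how many of them can be routed to her, and agent~$2$'s envy constraint closes off the remaining slack, leaving only an $O(u+v)$ margin. Second, exhibit $v_1'$, a relabelling that simultaneously (a) raises the Pareto ratios of the items agent~$1$ wants so that Pareto-optimality with respect to $(v_1',v_2)$ forces every efficient-and-fair allocation to route those items to agent~$1$, and (b) inflates agent~$1$'s \emph{reported} value for those same items, so that EF$^{+u}_{-v}$ for the ordered pair $(1,2)$ with respect to $(v_1',v_2)$ is violated in every allocation that withholds them from her. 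Intersecting (a) and (b) leaves only allocations giving agent~$1$ true utility at least $V'$ with $V'-V=\Omega(m)-O(u+v)$. Third, compare: truthfulness depends only on the marginals of $\calF$, and these are pinned to the support of $\calD$, which on each input lies inside the set of Pareto-optimal EF$^{+u}_{-v}$ allocations; hence $v_1(\calF(I))\le V<V'\le v_1(\calF(v_1',v_2))$ once $m$ is large, contradicting truthfulness of $\calF$.

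The main obstacle is that agent~$2$'s envy constraint — the true ceiling on agent~$1$'s value — does not move when agent~$1$ misreports, so the gap $V'-V$ cannot come from that constraint and must instead be created entirely by Pareto-optimality with respect to $v_1$ being strictly more restrictive on $I$ than Pareto-optimality with respect to $v_1'$, together with agent~$1$'s own report-dependent envy constraint. Engineering an instance in which the Pareto constraint on the honest report genuinely lowers agent~$1$'s achievable value below what the relabelled report forces — robustly for all $u$ and $v$ — is the delicate part, and it is exactly where the third value level is indispensable. One must also verify that the intersection of Pareto-optimal and EF$^{+u}_{-v}$ allocations is nonempty on both inputs (so the mechanism is not vacuously excluded) while being constrained enough to pin down agent~$1$'s utility up to the $O(u+v)$ slack, which requires a finite but somewhat intricate case analysis over the threshold $\lambda$ and over the ranges of $u$ and $v$.
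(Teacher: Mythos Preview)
Your plan differs substantially from the paper's argument, and as written it has a structural gap that I do not see how to close.

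You propose a single instance $I=(v_1,v_2)$ and a single deviation $v_1'$, and you want
\[
V:=\max\{v_1(A_1):A\in S(v_1,v_2)\}\ <\ V':=\min\{v_1(A_1):A\in S(v_1',v_2)\},
\]
where $S(\cdot,\cdot)$ denotes the set of Pareto-optimal EF$^{+u}_{-v}$ allocations. The ceiling $V$ you need in Step~1 must come from some constraint, but Pareto-optimality alone never caps agent~$1$'s utility from above: the allocation $A_1=M$ is always Pareto-optimal. Hence the ceiling is driven entirely by agent~$2$'s envy constraint (possibly tightened by the ratio structure that PO imposes). That constraint is unchanged when agent~$1$ deviates to $v_1'$, so it continues to cap agent~$1$'s \emph{true} utility at the same level on the misreport. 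Meanwhile, in Step~2 your floor $V'$ comes from $\mathrm{EF}_1(v_1')$ together with $\mathrm{PO}(v_1',v_2)$, but both constrain \emph{reported} quantities: $\mathrm{EF}_1(v_1')$ lower-bounds $v_1'(A_1)$, not $v_1(A_1)$, and the only items $\mathrm{PO}$ forces to agent~$1$ are those with $v_2=0$, a set that does not change with agent~$1$'s report. So I do not see how to force $V'>V$; your own paragraph on ``the main obstacle'' identifies exactly this issue but does not resolve it. In particular, the $V$-maximizing allocation under the honest profile need not be excluded under $(v_1',v_2)$ in any way that pushes true utility upward.

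The paper avoids this by abandoning the single-deviation template and instead chaining five instances. It starts from a symmetric profile where EF$^{+u}_{-v}$ directly pins each agent near a half share, then uses truthfulness of agent~$1$ twice (instance~$1\!\to\!2\!\to\!3$) to propagate a lower bound on agent~$1$'s share of one block, obtains the symmetric bound for agent~$2$ (instance~$4$), and finally lands on a fifth profile where truthfulness of \emph{both} agents forces the fractional allocation into a region that is incompatible with any lottery over \expo Pareto-optimal EF$^{+u}_{-v}$ allocations. The two ingredients your outline is missing are (i) the use of a chain of profiles so that an EF-derived bound on one instance becomes, via truthfulness, a constraint on a \emph{different} instance where EF no longer supplies it directly, and (ii) invoking truthfulness for both agents, so that the final contradiction comes from squeezing the marginals from two sides simultaneously rather than from a single agent's deviation.
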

\begin{proof}
    We consider $p=1$, $q=0.02$, and $r=0$.
    Suppose such a mechanism $(\calF,\calD)$ exists.
    Consider the number of items $m$ to be sufficiently large compared with $u$ and $v$, say, $m=200(u+v)$.
    Let $M_1$ be the set of the first $0.5m$ items and $M_2$ be the set of the remaining $0.5m$ items.
    Consider the first instance where both agents have value $1$ on items in $M_1$ and value $0$ on items in $M_2$.
    \begin{center}
        \begin{tabular}{ccc}
        \hline
             & value on items in $M_1$ & value on items in $M_2$ \\
        \hline
          agent 1   & $1$ & $0$\\
          agent 2   & $1$ & $0$\\
          \hline
        \end{tabular}
    \end{center}
    
    Let $\alpha_1$ be the (possibly fractional) number of items in $M_1$ that agent $1$ receives under the rule $\calF$ and $\alpha_2=0.5m-\alpha_1$ be the (possibly fractional) number of items in $M_1$ that agent $2$ receives.
    We must have $\alpha_1,\alpha_2\in[0.25m-(u+v),0.25m+(u+v)]$ to guarantee EF$^{+u}_{-v}$.
    If not, say $\alpha_1<0.25m-(u+v)$, then there exists integral allocation output by $\calD$ where agent $1$ receives less than $0.25m-(u+v)$ items in $M_1$, which violates EF$^{+u}_{-v}$.
    
    Next, consider the second instance where agent $1$ has value $0.02$ on the items in $M_2$ instead.
    \begin{center}
        \begin{tabular}{ccc}
        \hline
             & value on items in $M_1$ & value on items in $M_2$ \\
        \hline
          agent 1   & $1$ & $0.02$\\
          agent 2   & $1$ & $0$\\
          \hline
        \end{tabular}
    \end{center}
    By Pareto-optimality, all items in $M_2$ should be given to agent $1$.
    Moreover, agent $1$ should receive at least $\alpha_1-0.01m\geq 0.24m-(u+v)$ items from $M_1$ to guarantee truthfulness.
    If less than this, agent $1$ would misreport her valuation function to the one in the first instance, which is beneficial.
    
    Consider the third instance where agent $1$ has value $0.02$ on items in $M_1$ and value $1$ on items in $M_2$.
    \begin{center}
        \begin{tabular}{ccc}
        \hline
             & value on items in $M_1$ & value on items in $M_2$ \\
        \hline
          agent 1   & $0.02$ & $1$\\
          agent 2   & $1$ & $0$\\
          \hline
        \end{tabular}
    \end{center}
    By Pareto-optimality, all items in $M_2$ should be allocated to agent $1$.
    Moreover, to guarantee truthfulness, the number of items agent $1$ receives from $M_1$ should not be less than what she has received in the previous case.
    Thus, agent $1$ receives at least $0.24m-(u+v)$ items from $M_1$, and agent $2$ receives at most $0.26m+(u+v)$ items from $M_1$.
    
    Consider the fourth instance with valuation functions defined as follows.
    \begin{center}
        \begin{tabular}{ccc}
        \hline
             & value on items in $M_1$ & value on items in $M_2$ \\
        \hline
          agent 1   & $0$ & $1$\\
          agent 2   & $1$ & $0.02$\\
          \hline
        \end{tabular}
    \end{center}
    By symmetry of $M_1$ and $M_2$ and symmetry of both agents, the same analysis in the first three instances indicates that agent $1$ can receive at most $0.26m+(u+v)$ items from $M_2$.
    
    Finally, consider the fifth instance defined below.
    \begin{center}
        \begin{tabular}{ccc}
        \hline
             & value on items in $M_1$ & value on items in $M_2$ \\
        \hline
          agent 1   & $0.02$ & $1$\\
          agent 2   & $1$ & $0.02$\\
          \hline
        \end{tabular}
    \end{center}
    Agent $2$ receives at most $0.26m+(u+v)$ items from $M_1$, for otherwise, in the case agent $2$'s true valuation function is the one in the third instance, she would report the valuation function in this instance instead.
    Similarly, agent $1$ receives at most $0.26m+(u+v)$ items from $M_2$, for otherwise, in the fourth instance, agent $2$ would report her valuation function as it is in this instance.
    This already violates $\exan$ Pareto-optimality.
    Next, we show that it is impossible to decompose such a fractional allocation to integral allocations that are $\expo$ Pareto-optimal and EF$^{+u}_{-v}$.
    
    Consider agent $2$.
    To guarantee $\expo$ Pareto-optimality, if agent $2$ receives at least one item from $M_2$ in some (integral) allocation generated by $\calD$, she must receive all items in $M_1$ in this allocation.
    The fractional allocation indicates that the expected number of items agent $2$ receives from $M_2$ is at least $0.24m-(u+v)$.
    Thus, the probability that agent $2$ receives at least one item from $M_2$ is at least $0.48-\frac{2(u+v)}m$ (otherwise, the expected number of items is less than $(0.48-\frac{2(u+v)}m)\cdot0.5m<0.24m-(u+v)$).
    As a result, with probability at least $0.48-\frac{2(u+v)}m$, agent $2$ receives all the items in $M_1$.
    Since the expected number of items agent $2$ received in $M_1$ is at most $0.26m+(u+v)$, there exists an allocation output by $\calD$ where agent $2$ receives at most $0.1m$ items from $M_1$.
    Otherwise, if agent $2$ receives strictly more than $0.1m$ items from $M_1$ in all allocations, we have a contradiction: the expected number of items agent $2$ receives from $M_1$ is more than
    $$0.1m\times\left(0.52+\frac{2(u+v)}m\right)+0.5m\times\left(0.48-\frac{2(u+v)}m\right)=0.292m-0.8\cdot(u+v)>0.26m+(u+v),$$
    where the last inequality is due to $m=200(u+v)$.
    It is clear that an allocation where agent $2$ receives at most $0.1m$ items from $M_1$ and no item from $M_2$ is far from being EF$^{+u}_{-v}$.
\end{proof}

\section{Discussion on Ex-Ante Envy-Freeness}
From the best-of-both-worlds aspect, our mechanisms also provide $\exan$ fairness guarantees.
\begin{definition}\label{def:ex-ante-ef}
    A randomized mechanism is \emph{$\exan$ envy-free} if the fractional allocation $\bX$ it implements is envy-free.
    That is, for every pair of agents $i$ and $j$, it holds that $\sum_{g\in M} v_{ig}x_{ig}\ge\sum_{g\in M} v_{ig}x_{jg}$.
\end{definition}

It is straightforward to see that the equal division rule satisfies $\exan$ envy-freeness: for any $i,j\in N$, we have $\sum_{g\in M}v_{ig}x_{jg} =\frac{1}{n}v_i(M)$.
Moreover, it is known that the probabilistic serial rule also satisfies $\exan$ envy-freeness~\cite{AzizFrSh23}.
This directly implies that our mechanism for two agents in Sect.~\ref{sect:twoagents}, mechanisms for $n$ agents in Sect.~\ref{sect:nagents}, and the mechanism for bi-valued valuation functions in Sect.~\ref{sect:PObi} satisfy $\exan$ envy-freeness.

However, $\exan$ envy-freeness fails for our mechanism for three agents in Sect.~\ref{sect:threeagents}, as in the fractional allocation of Type I items, agent $3$ may envy agents $1$ and $2$.
Nevertheless, it provides an approximate envy-freeness guarantee, $\exan$ $\frac12$-envy-freeness.
\begin{definition}\label{def:ex-ante-aef}
    A randomized mechanism is \emph{$\exan$ $\alpha$-envy-free} if the fractional allocation $\bX$ it implements is $\alpha$-approximate envy-free. 
    That is, for every pair of agents $i$ and $j$, it holds that $\sum_{g\in M} v_{ig}x_{ig}\ge\alpha\cdot\sum_{g\in M} v_{ig}x_{jg}$.
\end{definition}

Under our mechanism for three agents, for Type I items, it is easy to verify that agent $1$ will not envy agents $2$ and $3$ within each group, thus the fractional allocation of Type I items is envy-free to agent $1$; the same holds for agent $2$.
For agent $3$ and each group of items $G=\{a,b,c\}$ of Type I, we have $$\sum_{g\in G}v_{3g}x_{3g}=\frac13v_{3a}+\frac13v_{3b}+\frac13v_{3c}\ge\frac12\left(\frac23v_{3a}+\frac13v_{3c}\right)=\frac12\sum_{g\in G}v_{3g}x_{1g},$$ and $\sum_{g\in G}v_{3g}x_{3g}\ge \frac12\sum_{g\in G}v_{3g}x_{2g}$.
Therefore, the fractional allocation of Type I items is $\frac12$-approximate envy-free to agent $3$.
For Type II items, as the fractional allocation is given by the equal division rule, it satisfies $\exan$ envy-freeness.
Combining the fractional allocations of the two types, we may conclude that the mechanism is $\exan$ envy-free to agents $1$ and $2$, and $\exan$ $\frac12$-envy-free to agent $3$.

\section{Future Work}
A natural future direction is to strengthen the result in this paper by designing a truthful and EF1 randomized mechanism (or proving such mechanisms do not exist) for three agents, or even $n$ agents.
We showed that the equal division rule does not work (Theorem~\ref{thm:notEF1realizable}).
Moreover, we note that the fractional allocation rule $\calF$ designed in Sect.~\ref{sect:three-fractional} fails to output allocations that are EF1-realizable: the same counterexample in the proof of Theorem~\ref{thm:notEF1realizable} shows this.
For $n$ agents, known fractional allocation rules such as the ones given by~\citet{freeman2023equivalence} and~\citet{shende2023strategy} provide fractional allocations that are close to the equal division.
Specifically, the fraction of each item allocated to each agent is restricted to the range $[0,\frac2n]$ for both rules.
It is unclear if being this close to the equal division makes these fractional allocations not EF1-realizable.

Another direction is to establish lower bounds on the best possible envy-based fairness that a randomized truthful mechanism can achieve.
To obtain a lower bound of EF$c$ (corresponding to EF$_{-c}^{+0}$ in our paper), a natural idea is to characterize all possible truthful fractional rules and then show that each rule satisfying the characterization is not EF$c$-realizable.
As we discussed in the introduction, both steps are technically involved, especially when considering more than two agents.
Even for deterministic mechanisms, the impossibility result for two agents~\cite{amanatidis2017truthful} does not imply the case for three agents, and to our best knowledge, there is no existing impossibility result for more than two agents without additional assumptions.
Another approach is to construct a series of instances that finally lead to an instance where either fairness or truthfulness is violated, which is widely used in the impossibility results for deterministic mechanisms~\cite{amanatidis2017truthful,tao2022existence,tao2023fair,garg2022efficient}.
This is also how we show the impossibility result of EF$_{-v}^{+u}$X.
However, contrary to the power of randomness, the approach is more difficult to apply to randomized mechanisms due to the flexibility in fractional allocation when considering EF$_{-v}^{+u}$, making it more challenging to restrict a randomized mechanism's behavior.
The reason it works for EF$_{-v}^{+u}$X is exactly that the fairness constraint, much stronger than EF$_{-v}^{+u}$, restricts the feasible integral allocations, thus largely reduces the flexibility in fractional allocations.

We have presented a randomized truthful mechanism for bi-valued valuations that is EF1 and Pareto-optimal.
An interesting open question is whether a deterministic truthful mechanism can achieve these guarantees.
We conjecture that the answer is no.

Finally, all of our positive results rely on additive valuations and fail to generalize to richer valuation domains.
Under more general valuations, an agent's expected utility cannot be calculated directly from a fractional allocation as it is no longer well-defined.
The expected utility can only be obtained from a probability distribution of integral allocations, that is, after the decomposition rule is applied to the fractional allocation.
Therefore, our framework to separately consider $\calF$ and $\calD$ no longer applies, and we need other techniques for the case beyond additive.

\bibliographystyle{plainnat}
\bibliography{reference}

\begin{thebibliography}{85}
\providecommand{\natexlab}[1]{#1}
\providecommand{\url}[1]{\texttt{#1}}
\expandafter\ifx\csname urlstyle\endcsname\relax
  \providecommand{\doi}[1]{doi: #1}\else
  \providecommand{\doi}{doi: \begingroup \urlstyle{rm}\Url}\fi

\bibitem[Akrami and Garg(2024)]{akrami2024breaking}
Hannaneh Akrami and Jugal Garg.
\newblock Breaking the 3/4 barrier for approximate maximin share.
\newblock In \emph{Proceedings of the 2024 Annual ACM-SIAM Symposium on Discrete Algorithms (SODA)}, pages 74--91. SIAM, 2024.

\bibitem[Akrami et~al.(2023)Akrami, Alon, Chaudhury, Garg, Mehlhorn, and Mehta]{akrami2023efx}
Hannaneh Akrami, Noga Alon, Bhaskar~Ray Chaudhury, Jugal Garg, Kurt Mehlhorn, and Ruta Mehta.
\newblock {EFX}: a simpler approach and an (almost) optimal guarantee via rainbow cycle number.
\newblock In \emph{Proceedings of the 24th ACM Conference on Economics and Computation}, pages 61--61, 2023.

\bibitem[Amanatidis et~al.(2022)Amanatidis, Birmpas, Filos-Ratsikas, Voudouris, et~al.]{amanatidis2022fair}
G~Amanatidis, G~Birmpas, A~Filos-Ratsikas, AA~Voudouris, et~al.
\newblock Fair division of indivisible goods: A survey.
\newblock In \emph{IJCAI International Joint Conference on Artificial Intelligence}, pages 5385--5393. International Joint Conferences on Artificial Intelligence, 2022.

\bibitem[Amanatidis et~al.(2016)Amanatidis, Birmpas, and Markakis]{amanatidis2016truthful}
Georgios Amanatidis, Georgios Birmpas, and Evangelos Markakis.
\newblock On truthful mechanisms for maximin share allocations.
\newblock In \emph{Proceedings of the Twenty-Fifth International Joint Conference on Artificial Intelligence}, pages 31--37, 2016.

\bibitem[Amanatidis et~al.(2017{\natexlab{a}})Amanatidis, Birmpas, Christodoulou, and Markakis]{amanatidis2017truthful}
Georgios Amanatidis, Georgios Birmpas, George Christodoulou, and Evangelos Markakis.
\newblock Truthful allocation mechanisms without payments: Characterization and implications on fairness.
\newblock In \emph{Proceedings of the 2017 ACM Conference on Economics and Computation}, pages 545--562, 2017{\natexlab{a}}.

\bibitem[Amanatidis et~al.(2017{\natexlab{b}})Amanatidis, Markakis, Nikzad, and Saberi]{amanatidis2017approximation}
Georgios Amanatidis, Evangelos Markakis, Afshin Nikzad, and Amin Saberi.
\newblock Approximation algorithms for computing maximin share allocations.
\newblock \emph{ACM Transactions on Algorithms (TALG)}, 13\penalty0 (4):\penalty0 1--28, 2017{\natexlab{b}}.

\bibitem[Amanatidis et~al.(2021)Amanatidis, Birmpas, Fusco, Lazos, Leonardi, and Reiffenh{\"{a}}user]{DBLP:conf/wine/AmanatidisBFLLR21}
Georgios Amanatidis, Georgios Birmpas, Federico Fusco, Philip Lazos, Stefano Leonardi, and Rebecca Reiffenh{\"{a}}user.
\newblock Allocating indivisible goods to strategic agents: Pure nash equilibria and fairness.
\newblock In \emph{{WINE}}, volume 13112 of \emph{Lecture Notes in Computer Science}. Springer, 2021.

\bibitem[Amanatidis et~al.(2023{\natexlab{a}})Amanatidis, Aziz, Birmpas, Filos-Ratsikas, Li, Moulin, Voudouris, and Wu]{amanatidis2023fair}
Georgios Amanatidis, Haris Aziz, Georgios Birmpas, Aris Filos-Ratsikas, Bo~Li, Herv{\'e} Moulin, Alexandros~A Voudouris, and Xiaowei Wu.
\newblock Fair division of indivisible goods: Recent progress and open questions.
\newblock \emph{Artificial Intelligence}, page 103965, 2023{\natexlab{a}}.

\bibitem[Amanatidis et~al.(2023{\natexlab{b}})Amanatidis, Birmpas, Lazos, Leonardi, and Reiffenh{\"{a}}user]{DBLP:conf/sigecom/AmanatidisBL0R23}
Georgios Amanatidis, Georgios Birmpas, Philip Lazos, Stefano Leonardi, and Rebecca Reiffenh{\"{a}}user.
\newblock Round-robin beyond additive agents: Existence and fairness of approximate equilibria.
\newblock In \emph{{EC}}, pages 67--87. {ACM}, 2023{\natexlab{b}}.

\bibitem[Asano and Umeda(2020)]{asano2020cake}
Takao Asano and Hiroyuki Umeda.
\newblock Cake cutting: An envy-free and truthful mechanism with a small number of cuts.
\newblock In \emph{31st International Symposium on Algorithms and Computation (ISAAC 2020)}. Schloss Dagstuhl-Leibniz-Zentrum f{\"u}r Informatik, 2020.

\bibitem[Aziz and Ye(2014)]{aziz2014cake}
Haris Aziz and Chun Ye.
\newblock Cake cutting algorithms for piecewise constant and piecewise uniform valuations.
\newblock In \emph{International conference on web and internet economics}, pages 1--14. Springer, 2014.

\bibitem[Aziz et~al.(2020)Aziz, Moulin, and Sandomirskiy]{aziz2020polynomial}
Haris Aziz, Herv{\'e} Moulin, and Fedor Sandomirskiy.
\newblock A polynomial-time algorithm for computing a pareto optimal and almost proportional allocation.
\newblock \emph{Operations Research Letters}, 48\penalty0 (5):\penalty0 573--578, 2020.

\bibitem[Aziz et~al.(2023)Aziz, Freeman, Shah, and Vaish]{AzizFrSh23}
Haris Aziz, Rupert Freeman, Nisarg Shah, and Rohit Vaish.
\newblock Best of both worlds: Ex ante and ex post fairness in resource allocation.
\newblock \emph{Operations Research}, 2023.

\bibitem[Babaioff and Manaker~Morag(2025)]{babaioff2024truthful}
Moshe Babaioff and Noam Manaker~Morag.
\newblock On truthful mechanisms without pareto-efficiency: Characterizations and fairness.
\newblock In Itai Ashlagi and Aaron Roth, editors, \emph{Proceedings of the 26th {ACM} Conference on Economics and Computation, {EC} 2025, Stanford University, Stanford, CA, USA, July 7-10, 2025}, page 447. {ACM}, 2025.
\newblock \doi{10.1145/3736252.3742569}.
\newblock URL \url{https://doi.org/10.1145/3736252.3742569}.

\bibitem[Babaioff et~al.(2021)Babaioff, Ezra, and Feige]{babaioff2021fair}
Moshe Babaioff, Tomer Ezra, and Uriel Feige.
\newblock Fair and truthful mechanisms for dichotomous valuations.
\newblock In \emph{Proceedings of the AAAI Conference on Artificial Intelligence}, volume~35, pages 5119--5126, 2021.

\bibitem[Babaioff et~al.(2022)Babaioff, Ezra, and Feige]{babaioff2022best}
Moshe Babaioff, Tomer Ezra, and Uriel Feige.
\newblock On best-of-both-worlds fair-share allocations.
\newblock In \emph{International Conference on Web and Internet Economics}, pages 237--255. Springer, 2022.

\bibitem[Barman and Krishnamurthy(2019)]{barman2019proximity}
Siddharth Barman and Sanath~Kumar Krishnamurthy.
\newblock On the proximity of markets with integral equilibria.
\newblock In \emph{Proceedings of the AAAI Conference on Artificial Intelligence}, volume~33, pages 1748--1755, 2019.

\bibitem[Barman and Verma(2022)]{barman2022truthful}
Siddharth Barman and Paritosh Verma.
\newblock Truthful and fair mechanisms for matroid-rank valuations.
\newblock In \emph{Proceedings of the AAAI Conference on Artificial Intelligence}, volume~36, pages 4801--4808, 2022.

\bibitem[Bei et~al.(2017)Bei, Chen, Huzhang, Tao, and Wu]{bei2017cake}
Xiaohui Bei, Ning Chen, Guangda Huzhang, Biaoshuai Tao, and Jiajun Wu.
\newblock Cake cutting: Envy and truth.
\newblock In \emph{IJCAI}, pages 3625--3631, 2017.

\bibitem[Bei et~al.(2020)Bei, Huzhang, and Suksompong]{bei2020truthful}
Xiaohui Bei, Guangda Huzhang, and Warut Suksompong.
\newblock Truthful fair division without free disposal.
\newblock \emph{Social Choice and Welfare}, 55\penalty0 (3):\penalty0 523--545, 2020.

\bibitem[Bei et~al.(2023)Bei, Tao, Wu, and Yang]{DBLP:journals/corr/abs-2308-08903}
Xiaohui Bei, Biaoshuai Tao, Jiajun Wu, and Mingwei Yang.
\newblock The incentive guarantees behind nash welfare in divisible resources allocation.
\newblock In \emph{{WINE}}, volume to apear of \emph{Lecture Notes in Computer Science}, page to appear. Springer, 2023.

\bibitem[Berger et~al.(2022)Berger, Cohen, Feldman, and Fiat]{berger2022almost}
Ben Berger, Avi Cohen, Michal Feldman, and Amos Fiat.
\newblock Almost full {EFX} exists for four agents.
\newblock In \emph{Proceedings of the AAAI Conference on Artificial Intelligence}, volume~36, pages 4826--4833, 2022.

\bibitem[Bogomolnaia and Moulin(2001)]{bogomolnaia2001new}
Anna Bogomolnaia and Herv{\'e} Moulin.
\newblock A new solution to the random assignment problem.
\newblock \emph{Journal of Economic theory}, 100\penalty0 (2):\penalty0 295--328, 2001.

\bibitem[Bouveret et~al.(2023)Bouveret, Gilbert, Lang, and M{\'e}rou{\'e}]{bouveret2023thou}
Sylvain Bouveret, Hugo Gilbert, J{\'e}r{\^o}me Lang, and Guillaume M{\'e}rou{\'e}.
\newblock Thou shalt not pick all items if thou are first: of strategyproof and fair picking sequences.
\newblock \emph{arXiv preprint arXiv:2301.06086}, 2023.

\bibitem[Brams and Taylor(1996)]{BramsTa96}
Steven~J. Brams and Alan~D. Taylor.
\newblock \emph{Fair Division: From Cake-Cutting to Dispute Resolution}.
\newblock Cambridge University Press, 1996.

\bibitem[Brams et~al.(2006)Brams, Jones, Klamler, et~al.]{brams2006better}
Steven~J Brams, Michael~A Jones, Christian Klamler, et~al.
\newblock Better ways to cut a cake.
\newblock \emph{Notices of the AMS}, 53\penalty0 (11):\penalty0 1314--1321, 2006.

\bibitem[Brandt et~al.(2016)Brandt, Conitzer, Endriss, Lang, and Procaccia]{BrandtCoEn16}
Felix Brandt, Vincent Conitzer, Ulle Endriss, J\'{e}r\^{o}me Lang, and Ariel~D. Procaccia, editors.
\newblock \emph{Handbook of Computational Social Choice}.
\newblock Cambridge University Press, 2016.

\bibitem[Br{\^a}nzei and Miltersen(2015)]{branzei2015dictatorship}
Simina Br{\^a}nzei and Peter~Bro Miltersen.
\newblock A dictatorship theorem for cake cutting.
\newblock \emph{IJCAI}, pages 482--488, 2015.

\bibitem[Bu et~al.(2023)Bu, Song, and Tao]{bu2023existence}
Xiaolin Bu, Jiaxin Song, and Biaoshuai Tao.
\newblock On existence of truthful fair cake cutting mechanisms.
\newblock \emph{Artificial Intelligence}, 319:\penalty0 103904, 2023.

\bibitem[Budish(2011)]{budish2011combinatorial}
Eric Budish.
\newblock The combinatorial assignment problem: Approximate competitive equilibrium from equal incomes.
\newblock \emph{Journal of Political Economy}, 119\penalty0 (6):\penalty0 1061--1103, 2011.

\bibitem[Budish et~al.(2013)Budish, Che, Kojima, and Milgrom]{budish2013designing}
Eric Budish, Yeon-Koo Che, Fuhito Kojima, and Paul Milgrom.
\newblock Designing random allocation mechanisms: Theory and applications.
\newblock \emph{American economic review}, 103\penalty0 (2):\penalty0 585--623, 2013.

\bibitem[Caragiannis et~al.(2009)Caragiannis, Kaklamanis, Kanellopoulos, and Kyropoulou]{caragiannis2009low}
Ioannis Caragiannis, Christos Kaklamanis, Panagiotis Kanellopoulos, and Maria Kyropoulou.
\newblock On low-envy truthful allocations.
\newblock In \emph{Algorithmic Decision Theory: First International Conference, ADT 2009, Venice, Italy, October 20-23, 2009. Proceedings 1}, pages 111--119. Springer, 2009.

\bibitem[Caragiannis et~al.(2019{\natexlab{a}})Caragiannis, Gravin, and Huang]{caragiannis2019envy}
Ioannis Caragiannis, Nick Gravin, and Xin Huang.
\newblock Envy-freeness up to any item with high {N}ash welfare: {T}he virtue of donating items.
\newblock In \emph{Proceedings of the ACM Conference on Economics and Computation (EC)}, pages 527--545, 2019{\natexlab{a}}.

\bibitem[Caragiannis et~al.(2019{\natexlab{b}})Caragiannis, Kurokawa, Moulin, Procaccia, Shah, and Wang]{CKMP+19}
Ioannis Caragiannis, David Kurokawa, Herv{\'e} Moulin, Ariel~D. Procaccia, Nisarg Shah, and Junxing Wang.
\newblock The unreasonable fairness of maximum {N}ash welfare.
\newblock \emph{ACM Transactions on Economics and Computation}, 7\penalty0 (3):\penalty0 1--32, 2019{\natexlab{b}}.

\bibitem[Chaudhury et~al.(2020)Chaudhury, Garg, and Mehlhorn]{chaudhury2020efx}
Bhaskar~Ray Chaudhury, Jugal Garg, and Kurt Mehlhorn.
\newblock {EFX} exists for three agents.
\newblock In \emph{Proceedings of the ACM Conference on Economics and Computation (EC)}, pages 1--19, 2020.

\bibitem[Chaudhury et~al.(2021)Chaudhury, Kavitha, Mehlhorn, and Sgouritsa]{chaudhury2021little}
Bhaskar~Ray Chaudhury, Telikepalli Kavitha, Kurt Mehlhorn, and Alkmini Sgouritsa.
\newblock A little charity guarantees almost envy-freeness.
\newblock \emph{SIAM Journal on Computing}, 50\penalty0 (4):\penalty0 1336--1358, 2021.

\bibitem[Chen et~al.(2013)Chen, Lai, Parkes, and Procaccia]{chen2013truth}
Yiling Chen, John~K Lai, David~C Parkes, and Ariel~D Procaccia.
\newblock Truth, justice, and cake cutting.
\newblock \emph{Games and Economic Behavior}, 77\penalty0 (1):\penalty0 284--297, 2013.

\bibitem[Christodoulou and Christoforidis(2024)]{christodoulou2024fair}
George Christodoulou and Vasilis Christoforidis.
\newblock Fair and truthful allocations under leveled valuations.
\newblock \emph{arXiv preprint arXiv:2407.05891}, 2024.

\bibitem[Cole et~al.(2013)Cole, Gkatzelis, and Goel]{cole2013mechanism}
Richard Cole, Vasilis Gkatzelis, and Gagan Goel.
\newblock Mechanism design for fair division: allocating divisible items without payments.
\newblock In \emph{Proceedings of the fourteenth ACM conference on Electronic commerce}, pages 251--268, 2013.

\bibitem[Conitzer et~al.(2017)Conitzer, Freeman, and Shah]{conitzer2017fair}
Vincent Conitzer, Rupert Freeman, and Nisarg Shah.
\newblock Fair public decision making.
\newblock In \emph{Proceedings of the 2017 ACM Conference on Economics and Computation}, pages 629--646, 2017.

\bibitem[Dobzinski and Dughmi(2013)]{dobzinski2013power}
Shahar Dobzinski and Shaddin Dughmi.
\newblock On the power of randomization in algorithmic mechanism design.
\newblock \emph{SIAM Journal on Computing}, 42\penalty0 (6):\penalty0 2287--2304, 2013.

\bibitem[Dobzinski et~al.(2023)Dobzinski, Oren, and Vondrak]{dobzinski2023fairness}
Shahar Dobzinski, Sigal Oren, and Jan Vondrak.
\newblock Fairness and incentive compatibility via percentage fees.
\newblock In \emph{Proceedings of the 24th ACM Conference on Economics and Computation}, pages 517--535, 2023.

\bibitem[Feldman et~al.(2023)Feldman, Mauras, Narayan, and Ponitka]{feldman2023breaking}
Michal Feldman, Simon Mauras, Vishnu~V Narayan, and Tomasz Ponitka.
\newblock Breaking the envy cycle: Best-of-both-worlds guarantees for subadditive valuations.
\newblock \emph{arXiv preprint arXiv:2304.03706}, 2023.

\bibitem[Foley(1967)]{Foley67}
Duncan~Karl Foley.
\newblock Resource allocation and the public sector.
\newblock \emph{Yale Economics Essays}, 7\penalty0 (1):\penalty0 45--98, 1967.

\bibitem[Freeman et~al.(2023)Freeman, Witkowski, Vaughan, and Pennock]{freeman2023equivalence}
Rupert Freeman, Jens Witkowski, Jennifer~Wortman Vaughan, and David~M Pennock.
\newblock An equivalence between fair division and wagering mechanisms.
\newblock \emph{Management Science}, 2023.

\bibitem[Garg and Psomas(2022)]{garg2022efficient}
Rohan Garg and Alexandros Psomas.
\newblock Efficient mechanisms without money: Randomization won't let you escape from dictatorships, 2022.

\bibitem[Halpern et~al.(2020)Halpern, Procaccia, Psomas, and Shah]{halpern2020fair}
Daniel Halpern, Ariel~D Procaccia, Alexandros Psomas, and Nisarg Shah.
\newblock Fair division with binary valuations: One rule to rule them all.
\newblock In \emph{Web and Internet Economics: 16th International Conference, WINE 2020, Beijing, China, December 7--11, 2020, Proceedings 16}, pages 370--383. Springer, 2020.

\bibitem[Hartman et~al.(2025)Hartman, Segal-Halevi, and Tao]{hartman2025s}
Eden Hartman, Erel Segal-Halevi, and Biaoshuai Tao.
\newblock It's not all black and white: Degree of truthfulness for risk-avoiding agents.
\newblock \emph{arXiv preprint arXiv:2502.18805}, 2025.

\bibitem[Hosseini and Larson(2019)]{hosseini2019multiple}
Hadi Hosseini and Kate Larson.
\newblock Multiple assignment problems under lexicographic preferences.
\newblock In \emph{Proceedings of the 18th International Conference on Autonomous Agents and MultiAgent Systems}, pages 837--845, 2019.

\bibitem[Huang et~al.(2024)Huang, Wang, Wei, and Zhang]{HUANG2024103491}
Haoqiang Huang, Zihe Wang, Zhide Wei, and Jie Zhang.
\newblock Bounded incentives in manipulating the probabilistic serial rule.
\newblock \emph{Journal of Computer and System Sciences}, 140:\penalty0 103491, 2024.
\newblock ISSN 0022-0000.
\newblock \doi{https://doi.org/10.1016/j.jcss.2023.103491}.
\newblock URL \url{https://www.sciencedirect.com/science/article/pii/S002200002300096X}.

\bibitem[Huang et~al.(2025)Huang, Tao, Yang, and Zhou]{huang2025incentive}
Haoqiang Huang, Biaoshuai Tao, Mingwei Yang, and Shengwei Zhou.
\newblock Incentive analysis of collusion in fair division.
\newblock \emph{arXiv preprint arXiv:2510.01689}, 2025.

\bibitem[Hv et~al.(2025)Hv, Ghosal, Nimbhorkar, and Varma]{hv2025efx}
Vishwa~Prakash Hv, Pratik Ghosal, Prajakta Nimbhorkar, and Nithin Varma.
\newblock Efx exists for three types of agents.
\newblock In \emph{Proceedings of the 26th ACM Conference on Economics and Computation}, pages 101--128, 2025.

\bibitem[Kurokawa et~al.(2016)Kurokawa, Procaccia, and Wang]{kurokawa2016can}
David Kurokawa, Ariel Procaccia, and Junxing Wang.
\newblock When can the maximin share guarantee be guaranteed?
\newblock In \emph{Proceedings of the AAAI Conference on Artificial Intelligence}, volume~30, 2016.

\bibitem[Kyropoulou et~al.(2020)Kyropoulou, Suksompong, and Voudouris]{kyropoulou2020almost}
Maria Kyropoulou, Warut Suksompong, and Alexandros~A Voudouris.
\newblock Almost envy-freeness in group resource allocation.
\newblock \emph{Theoretical Computer Science}, 841:\penalty0 110--123, 2020.

\bibitem[Li et~al.(2015)Li, Zhang, and Zhang]{li2015truthful}
Minming Li, Jialin Zhang, and Qiang Zhang.
\newblock Truthful cake cutting mechanisms with externalities: Do not make them care for others too much!
\newblock In \emph{Twenty-Fourth International Joint Conference on Artificial Intelligence}, 2015.

\bibitem[Li et~al.(2023)Li, Liu, Lu, and Tao]{li2023truthful}
Zihao Li, Shengxin Liu, Xinhang Lu, and Biaoshuai Tao.
\newblock Truthful fair mechanisms for allocating mixed divisible and indivisible goods.
\newblock In \emph{Proceedings of the Thirty-Second International Joint Conference on Artificial Intelligence}, pages 2808--2816, 2023.

\bibitem[Lipton et~al.(2004)Lipton, Markakis, Mossel, and Saberi]{lipton2004approximately}
Richard~J Lipton, Evangelos Markakis, Elchanan Mossel, and Amin Saberi.
\newblock On approximately fair allocations of indivisible goods.
\newblock In \emph{Proceedings of the 5th ACM Conference on Electronic Commerce}, pages 125--131, 2004.

\bibitem[Liu et~al.(2024)Liu, Lu, Suzuki, and Walsh]{liu2023mixed}
Shengxin Liu, Xinhang Lu, Mashbat Suzuki, and Toby Walsh.
\newblock Mixed fair division: A survey.
\newblock In \emph{Proceedings of the AAAI Conference on Artificial Intelligence}, volume~38, pages 22641--22649, 2024.

\bibitem[Mahara(2023)]{mahara2023extension}
Ryoga Mahara.
\newblock Extension of additive valuations to general valuations on the existence of {EFX}.
\newblock \emph{Mathematics of Operations Research}, 2023.

\bibitem[Manurangsi and Suksompong(2022)]{manurangsi2022almost}
Pasin Manurangsi and Warut Suksompong.
\newblock Almost envy-freeness for groups: Improved bounds via discrepancy theory.
\newblock \emph{Theoretical Computer Science}, 930:\penalty0 179--195, 2022.

\bibitem[Maya and Nisan(2012)]{maya2012incentive}
Avishay Maya and Noam Nisan.
\newblock Incentive compatible two player cake cutting.
\newblock In \emph{International Workshop on Internet and Network Economics}, pages 170--183. Springer, 2012.

\bibitem[Menon and Larson(2017)]{menon2017deterministic}
Vijay Menon and Kate Larson.
\newblock Deterministic, strategyproof, and fair cake cutting.
\newblock In \emph{Proceedings of the 26th International Joint Conference on Artificial Intelligence}, pages 352--358, 2017.

\bibitem[Momi(2017)]{momi2017efficient}
Takeshi Momi.
\newblock Efficient and strategy-proof allocation mechanisms in economies with many goods.
\newblock \emph{Theoretical Economics}, 12\penalty0 (3):\penalty0 1267--1306, 2017.

\bibitem[Mossel and Tamuz(2010)]{mossel2010truthful}
Elchanan Mossel and Omer Tamuz.
\newblock Truthful fair division.
\newblock In \emph{International Symposium on Algorithmic Game Theory}, pages 288--299. Springer, 2010.

\bibitem[Nash~Jr(1950)]{nash1950bargaining}
John~F Nash~Jr.
\newblock The bargaining problem.
\newblock \emph{Econometrica: Journal of the econometric society}, pages 155--162, 1950.

\bibitem[Ortega and Segal-Halevi(2022)]{ortega2022obvious}
Josu{\'e} Ortega and Erel Segal-Halevi.
\newblock Obvious manipulations in cake-cutting.
\newblock \emph{Social Choice and Welfare}, pages 1--20, 2022.

\bibitem[P{\'a}pai(2000)]{papai2000strategyproof}
Szilvia P{\'a}pai.
\newblock Strategyproof multiple assignment using quotas.
\newblock \emph{Review of Economic Design}, 5:\penalty0 91--105, 2000.

\bibitem[P{\'a}pai(2001)]{papai2001strategyproof}
Szilvia P{\'a}pai.
\newblock Strategyproof and nonbossy multiple assignments.
\newblock \emph{Journal of Public Economic Theory}, 3\penalty0 (3):\penalty0 257--271, 2001.

\bibitem[Plaut and Roughgarden(2020)]{plaut2020almost}
Benjamin Plaut and Tim Roughgarden.
\newblock Almost envy-freeness with general valuations.
\newblock \emph{SIAM Journal on Discrete Mathematics}, 34\penalty0 (2):\penalty0 1039--1068, 2020.

\bibitem[Procaccia(2013)]{procaccia2013cake}
Ariel~D Procaccia.
\newblock Cake cutting: Not just child's play.
\newblock \emph{Communications of the ACM}, 56\penalty0 (7):\penalty0 78--87, 2013.

\bibitem[Procaccia and Wang(2014)]{procaccia2014fair}
Ariel~D Procaccia and Junxing Wang.
\newblock Fair enough: Guaranteeing approximate maximin shares.
\newblock In \emph{Proceedings of the fifteenth ACM conference on Economics and computation}, pages 675--692, 2014.

\bibitem[Psomas and Verma(2022)]{DBLP:conf/nips/0001V22}
Alexandros Psomas and Paritosh Verma.
\newblock Fair and efficient allocations without obvious manipulations.
\newblock In \emph{NeurIPS}, 2022.

\bibitem[Robertson and Webb(1998)]{robertson1998cake}
Jack Robertson and William Webb.
\newblock \emph{Cake-cutting algorithms: Be fair if you can}.
\newblock CRC Press, 1998.

\bibitem[Satterthwaite and Sonnenschein(1981)]{satterthwaite1981strategy}
Mark~A Satterthwaite and Hugo Sonnenschein.
\newblock Strategy-proof allocation mechanisms at differentiable points.
\newblock \emph{The Review of Economic Studies}, 48\penalty0 (4):\penalty0 587--597, 1981.

\bibitem[Shende and Purohit(2023)]{shende2023strategy}
Priyanka Shende and Manish Purohit.
\newblock Strategy-proof and envy-free mechanisms for house allocation.
\newblock \emph{Journal of Economic Theory}, 213:\penalty0 105712, 2023.

\bibitem[Steinhaus(1948)]{Steinhaus48}
Hugo Steinhaus.
\newblock The problem of fair division.
\newblock \emph{Econometrica}, 16\penalty0 (1):\penalty0 101--104, 1948.

\bibitem[Steinhaus(1949)]{Steinhaus49}
Hugo Steinhaus.
\newblock Sur la division pragmatique.
\newblock \emph{Econometrica}, 17:\penalty0 315--319, 1949.

\bibitem[Sun and Chen(2024)]{sun2024randomized}
Ankang Sun and Bo~Chen.
\newblock Randomized strategyproof mechanisms with best of both worlds fairness and efficiency.
\newblock \emph{arXiv preprint arXiv:2408.01027}, 2024.

\bibitem[Suzuki et~al.(2023)Suzuki, Tamura, Yahiro, Yokoo, and Zhang]{suzuki2023strategyproof}
Takamasa Suzuki, Akihisa Tamura, Kentaro Yahiro, Makoto Yokoo, and Yuzhe Zhang.
\newblock Strategyproof allocation mechanisms with endowments and m-convex distributional constraints.
\newblock \emph{Artificial Intelligence}, 315:\penalty0 103825, 2023.

\bibitem[Svensson(1999)]{svensson1999strategy}
Lars-Gunnar Svensson.
\newblock Strategy-proof allocation of indivisible goods.
\newblock \emph{Social Choice and Welfare}, 16:\penalty0 557--567, 1999.

\bibitem[Tao(2022)]{tao2022existence}
Biaoshuai Tao.
\newblock On existence of truthful fair cake cutting mechanisms.
\newblock In \emph{Proceedings of the 23rd ACM Conference on Economics and Computation}, pages 404--434, 2022.

\bibitem[Tao and Yang(2023)]{tao2023fair}
Biaoshuai Tao and Mingwei Yang.
\newblock Fair and almost truthful mechanisms for additive valuations and beyond.
\newblock \emph{arXiv preprint arXiv:2306.15920}, 2023.

\bibitem[Troyan and Morrill(2020)]{troyan2020obvious}
Peter Troyan and Thayer Morrill.
\newblock Obvious manipulations.
\newblock \emph{Journal of Economic Theory}, 185:\penalty0 104970, 2020.

\bibitem[Varian(1973)]{varian1973equity}
Hal~R Varian.
\newblock Equity, envy, and efficiency.
\newblock 1973.

\bibitem[Xiao and Ling(2020)]{xiao2020algorithms}
Mingyu Xiao and Jiaxing Ling.
\newblock Algorithms for manipulating sequential allocation.
\newblock In \emph{Proceedings of the AAAI Conference on Artificial Intelligence}, volume~34, pages 2302--2309, 2020.

\end{thebibliography}

\clearpage
\appendix

\section{Proof of Theorem~\ref{thm:impossibility_deterministic}}
\label{append:impossibility_deterministic}
We adopt Theorem~3.6 in \citet{amanatidis2017truthful} that every truthful mechanism for two agents can be implemented as a \emph{picking-exchange} mechanism.
A picking mechanism contains two picking components $(N_1,N_2)$ that form a partition of $M$, and two sets of offers $\mathcal{O}_1,\mathcal{O}_2$ where each contains several subsets of $N_i$ for each agent, where $\bigcup_{T\in\mathcal{O}_i}=N_i$ and $\bigcap_{T\in\mathcal{O}_i}=\emptyset$.
In a picking mechanism, each agent $i\in\{1,2\}$ is allocated one offer with the largest utility from $\mathcal{O}_i$, and the remaining items in $N_i$ are allocated to agent $3-i$.
An exchange mechanism contains two exchange components $(E_1,E_2)$ that forms a partition of $M$, and a set of exchange deals $D=\{(T^1_1,T^2_1),\cdots,(T^1_k,T^2_k)\}$, where $T^i_j$ is a non-empty subset of $E_i$ and $T^i_{j_1}\cap T^i_{j_2}=\emptyset$ for any $1\le j_1,j_2\le k$.
In an exchange mechanism, each of the exchange deals $(T^1_j,T^2_j)$ that satisfies $v_i(T^i_j)<v_i(T^{3-i}_j)$ is exchanged between the two agents and $T^i_j$ is allocated to agent $i$ if it is not exchanged.
A picking-exchange mechanism combines the above parts while ensuring $(N_1,N_2,E_1,E_2)$ forms a partition of $M$.
It is not hard to see that a picking-exchange mechanism is truthful.

Given a truthful mechanism, we say agent $i$ \emph{controls} set $T$ as whenever $v_i(g)\ge v_i(M\setminus T)$ for all $g\in T$, $T$ will be allocated to $i$ by the mechanism.
Denote the set of maximal controlled sets of agent $i$ by $\mathcal{C}_i$, we have $\bigcup_{T\in \mathcal{C}_1}T \cup \bigcup_{T\in \mathcal{C}_2}T=M$ and $\bigcup_{T\in \mathcal{C}_1}T\cap \bigcup_{T\in \mathcal{C}_2}T=\emptyset$.
\citet{amanatidis2017truthful} show that a truthful mechanism can be implemented by a picking-exchange mechanism where $E_i=\bigcap_{T\in \mathcal{C}_i}T, N_i=\bigcup_{T\in \mathcal{C}_i}T\setminus E_i$, and $\mathcal{O}_i=\{T\setminus E_i| T\in\mathcal{C}_i\}$ for $i\in\{1,2\}$.

We now prove Theorem~\ref{thm:impossibility_deterministic} using this characterization.

We first notice that to achieve EF$^{+u}_{-v}$, each agent cannot control a set of items with a size larger than $u+v$.
Otherwise, assume that the set of the first $u+v+1$ items $\{g_1,\ldots,g_{u+v+1}\}$ is controlled by agent 1 where $u+v+1\le m$.
Consider the valuation profile where $v_1(g_j)=v_2(g_j)=m$ for $1\le j\le u+v+1$ and $v_1(g_j)=v_2(g_j)=1$ for $u+v+1<j\le m$, then agent 1 will receive the first $u+v+1$ items, violating EF$^{+u}_{-v}$ for agent 2.

Then, for a fixed picking-exchange mechanism, we may assume $0\le |E_i|=k_i\le u+v$.
Consider the following valuation profile where $1\gg\epsilon\gg\delta\gg\mu>0$ and each of the $N_i$ and $E_i$ could be empty.

\begin{table}[h]
    \centering
    \begin{tabular}{c|ccccc|ccccc|ccc|ccc}
    \hline
         & \multicolumn{5}{c|}{items in $N_1$} & \multicolumn{5}{c|}{items in $N_2$} & \multicolumn{3}{c|}{items in $E_1$} & \multicolumn{3}{c}{items in $E_2$} \\
         \hline 
        $v_1$ & $1+\epsilon$ & 1 & 1 & $\cdots$ & 1 & 1 & $\delta$ & $\delta$ & $\cdots$ & $\delta$ & $\mu$ & $\cdots$ & $\mu$ & $\mu$ & $\cdots$ & $\mu$ \\
        \hline
        $v_2$ & 1 & $\delta$ & $\delta$ & $\cdots$ & $\delta$ & $1+\epsilon$ & 1 & 1 & $\cdots$ & 1 & $\mu$ & $\cdots$ & $\mu$ & $\mu$ & $\cdots$ & $\mu$ \\
    \hline
    \end{tabular}
    \label{tab:impossibility_deterministic}
\end{table}

Given $|E_1|=k_1$, agent 1 will receive at most $u+v-k_1$ items from $N_1$ (as the size of each offer in $\mathcal{O}_1$ is at most $u+v-k_1$); we further assume agent 1 receives $x$ items from $N_2$.
When chosen $\epsilon,\delta$ and $\mu$ to be sufficiently small, we have $v_1(A_1)\le u+v+\epsilon-k_1+x\delta+k_1\mu\le u+v-k_1+1$ and $v_1(A_2)\ge |N_1|-(u+v-k_1)$.
To ensure EF$^{+u}_{-v}$ for agent 1, we have $|N_1|\le 3u+3v-2k_1+1$.
By symmetry, $|N_2|\le 3u+3v-2k_2+1$ to ensure EF$^{+u}_{-v}$ for agent 2.
Hence, no picking-exchange mechanism is EF$^{+u}_{-v}$ when $m\ge 6u+6v+2$.

\section{Proof of Proposition~\ref{prop:perfecttwo}}
\label{append:prop:perfecttwo}
    We assume without loss of generality that the number of items $m$ is an even number, for otherwise we can add a dummy item where both agents have value $0$.
    Let agent $1$ sort the items by descending values $g^{(1)}_1,\ldots,g^{(1)}_m$ where $v_1(g^{(1)}_1)\geq v_1(g^{(1)}_2)\geq\cdots\geq v_1(g^{(1)}_m)$, and let agent $2$ do the same with $g^{(2)}_1,\ldots,g^{(2)}_m$.
    Ties are broken arbitrarily.
    Notice that $(g^{(1)}_1,\ldots,g^{(1)}_m)$ is a permutation of $(g^{(2)}_1,\ldots,g^{(2)}_m)$.
    Based on agent $1$'s sorting, define the partition $(G^{(1)}_1,\ldots,G^{(1)}_{m/2})$ of $M$ where $G^{(1)}_j=\{g^{(1)}_{2j-1},g^{(1)}_{2j}\}$ for $j=1,\ldots,m/2$, and define the partition $(G^{(2)}_1,\ldots,G^{(2)}_{m/2})$ similarly for agent $2$.

    Next, we show that it is possible to find a partition $(X,Y)$ such that $|X\cap G^{(i)}_j|=|Y\cap G^{(i)}_j|=1$ for each $i\in \{1,2\}$ and $j=1,\ldots,m$. That is, for each two-item set $G^{(i)}_j$, exactly one item is in $X$ and exactly one item is in $Y$.
    We will show that both $(X,Y)$ and $(Y,X)$ are EF1 and such a partition can be found in polynomial time.

    To show that the allocation $(X,Y)$ and $(Y,X)$ are EF1, we will only show that $v_1(X)\geq v_1(Y\setminus\{g\})$ for some $g\in Y$. The other direction $v_1(Y)\geq v_1(X\setminus\{g\})$ as well as the analysis for agent $2$ are similar and are thus omitted.
    Let $x_j$ be the unique item in the set $X\cap G_j^{(1)}$ and $y_j$ be the unique item in the set $Y\cap G_j^{(1)}$.
    Then $X=\{x_1,\ldots,x_{m/2}\}$ and $Y=\{y_1,\ldots,y_{m/2}\}$.
    By agent $1$'s value-descending ordering of the items $g_1^{(1)},\ldots,g_m^{(1)}$ and the definition of $G^{(1)}_1,\ldots,G^{(1)}_{m/2}$, agent $1$ values $x_j$ weakly higher than any items in $G_{j+1}^{(1)}$.
    In particular, we have $v_1(x_j)\geq v_1(y_{j+1})$ for each $j=1,\ldots,m/2-1$.
    Therefore, by summing up all the items, we have $v_1(X)\geq v_1(Y\setminus\{y_1\})$.

    It now remains to show the existence of such a partition $(X,Y)$ and how it is computed.
    We construct a bipartite graph $G=(V_1,V_2,E)$ where $V_1$ contains $m/2$ vertices corresponding to $G^{(1)}_1,\ldots,G^{(1)}_{m/2}$, $V_2$ contains $m/2$ vertices corresponding to $G^{(2)}_1,\ldots,G^{(2)}_{m/2}$, and $E$ contains $m$ edges corresponding to the $m$ items such that an edge $g$ is incident to the vertex $G^{(i)}_j$ if $g\in G^{(i)}_j$.
    The bipartite graph $G$ constructed is $2$-regular, and a valid $2$-coloring of the edges corresponds to a valid partition $(X,Y)$.
    By Lemma~\ref{lem:matching}, such a $2$-coloring exists and can be found in polynomial time.

\section{Subtlety in Tie-Breaking for Fractional Division Rule in Sect.~\ref{sect:three-fractional}}
\label{append:subtlety}
We will show that, when selecting the two highest-value items respectively for agents $1$ and $2$ (where agent $1$ and agent $2$ receive fractions of $\frac23$ respectively), tie-breaking by a consistent item index order cannot guarantee truthfulness.
Therefore, the delicate tie-breaking rule in Algorithm~\ref{alg:tiebreaking} is necessary. 

Assume that the tie-breaking rule is defined where agent $1$ first chooses the item with the highest value and the smallest index, and then agent $2$ chooses one from the remaining two items with the highest value and the smallest index.
Consider the instance with three items $M=\{g_1,g_2,g_3\}$.
Agent $1$ believes both $g_1$ and $g_2$ have the highest value (i.e., $v_1(g_1)=v_1(g_2)>v_1(g_3)$), and agent $2$ believes both $g_2$ and $g_3$ have the highest value (i.e., $v_2(g_2)=v_2(g_3)>v_2(g_1)$).
Thus, the items should be allocated according to Type I.
Under the tie-breaking rule we defined, agent $1$ will receive $\frac23$ fraction of item $g_1$ and $\frac13$ fraction of item $g_3$.
Agent $2$ will receive $\frac23$ fraction of item $g_2$ and $\frac13$ fraction of item $g_3$.

However, if agent $1$ misreports her valuation such that only item $g_2$ has the highest value, agent $1$ will receive $\frac23$ fraction of item $g_2$ and $\frac13$ fraction of item $g_1$.
Agent $2$ will receive $\frac23$ fraction of item $g_3$ and $\frac13$ fraction of item $g_1$.
As $v_1(g_1)>v_1(g_3)$, the misreport is beneficial.

\section{Proof of Theorem~\ref{thm:incom_efkx_n}}\label{append:incom_efkx_n}
The proof is similar to that of Theorem~\ref{thm:incom_efkx}.
We begin by introducing the constraints on EF$_{-k}^{+0}$X allocations in a series of instances with $n$ agents and $m+1$ items $M=\{g_0,\ldots,g_m\}$, where $m$ is sufficiently large compared to $n$ and $k$.
Still, we write $M'=\{g_1,\ldots,g_m\}$.

We similarly show that each agent needs to receive at least $k$ items from $M'$, which allows us to focus on items within $M'$ when we consider removing items.
\begin{proposition}\label{prop:incom_efkx_n_1}
    Consider the following instance where $x>n-1$ and the valuation functions of other agents are arbitrary.
    When $m>\max\left\{x, \frac{xk(2n-1)}{x-(n-1)}\right\}$, any EF$_{-k}^{+0}$X allocation must allocate at least $k$ items from $M'$ to agent $1$.
\end{proposition}
\begin{center}
    \begin{tabular}{cccccc}
        \hline
        & $g_0$ & $g_1$ & $g_2$ & $\cdots$ & $g_m$ \\
        \hline
        $v_1$ & $1$ & $x/m$ & $x/m$ & $\cdots$ & $x/m$ \\
        \hline
    \end{tabular}
\end{center}
\begin{proof}
    Assume that agent $1$ receives $k'<k$ items from $M'$.
    We will show that the allocation is not EF$_{-k}^{+0}$X for agent $1$ even if agent $1$ also receives $g_0$. 
    Let $A_1=\{g_0,g_1,\ldots,g_{k'}\}$.
    By the pigeonhole principle, there exists some agent $i\in \{2,\ldots,n\}$ that receives at least $\frac{m-k'}{n-1}$ items from $M'$.
    Then, we have
    $$v_1(A_1)=1+\frac{k'x}{m}<1+\frac{kx}{m}<\frac{x}{m}\left(\frac{m-k}{n-1}-k\right)<\frac{x}{m}\left(\frac{m-k'}{n-1}-k\right)\le v_1(A_i\setminus S_i),$$
    where the second inequality holds due to $m>\frac{xk(2n-1)}{x-(n-1)}$, and $S_i$ is any subset of $A_i$ with at most $k$ items.
    Therefore, the allocation is EF$_{-k}^{+0}$X for agent $1$ only if at least $k$ items from $M'$ are allocated to her.
\end{proof}

We next show that item $g_0$ must be allocated to agent $1$ in any EF$_{-k}^{+0}$X allocation.
\begin{proposition}\label{prop:incom_efkx_n_2}
    Consider the following instance where $n-1<x_1<x_2\le x_3$.
    An allocation is EF$_{-k}^{+0}$X only if $g_0\in A_1$ when $m>\max\left\{x_3, \frac{x_1k(2n-1)}{x_1-(n-1)}, \frac{2x_1x_2k}{x_2-x_1}\right\}$.
\end{proposition}
\begin{center}
    \begin{tabular}{cccccc}
        \hline
        & $g_0$ & $g_1$ & $g_2$ & $\cdots$ & $g_m$ \\
        \hline
        $v_1$ & $1$ & $x_1/m$ & $x_1/m$ & $\cdots$ & $x_1/m$ \\
        $v_2$ & $1$ & $x_2/m$ & $x_2/m$ & $\cdots$ & $x_2/m$ \\
        $v_3$ & $1$ & $x_3/m$ & $x_3/m$ & $\cdots$ & $x_3/m$ \\
        $\vdots$ & $\vdots$ & $\vdots$ & $\vdots$ & $\vdots$ & $\vdots$ \\
        $v_n$ & $1$ & $x_3/m$ & $x_3/m$ & $\cdots$ & $x_3/m$ \\
        \hline
    \end{tabular}
    \label{tab:incom_efkx_p2}
\end{center}
\begin{proof}
    We first consider the case that $g_0$ is allocated to agent $2$.
    Assume agent $1$ receives $y_1$ items from $M'$, and agent $2$ receives additionally $y_2$ items from $M'$.
    According to Proposition~\ref{prop:incom_efkx_n_1}, we have $y_1\ge k$ and $y_2\ge k$.
    To guarantee EF$_{-k}^{+0}$X for agent $1$, we consider the case where agent $1$ removes $k$ items in $A_2\cap M'$ from $A_2$, which is given by $$\frac{x_1}{m}y_1\ge 1+\frac{x_1}{m}(y_2-k).$$
    This implies $y_2\le y_1+k-m/x_1$.
    However, agent $2$ will envy agent $1$ even if removing a set $S_1$ of $k$ items from $A_1$ according to the following inequality,
        $$v_2(A_2)=1+\frac{x_2}{m}y_2\le 1+\frac{x_2}{m}(y_1+k-\frac{m}{x_1})< \frac{x_2}{m}(y_1-k)=v_2(A_1\setminus S_1),$$
    where the second inequality holds due to $m>2x_1x_2k/(x_2-x_1)$.

    The case where $g_0$ is allocated to an agent $i\in\{3,\ldots,n\}$ follows by the same analysis.
    The inequality holds as $m>2x_1x_2k/(x_2-x_1)\ge 2x_1x_3k/(x_3-x_1)$ where $x_3\ge x_2$.
    Details are omitted here.
\end{proof}

We are now ready to prove Theorem~\ref{thm:incom_efkx_n} based on the above two propositions.
\paragraph{Proof of Theorem~\ref{thm:incom_efkx_n}.} 
Assume that there exists a truthful and EF$_{-k}^{+0}$X randomized mechanism $(\calF,\calD)$.
We begin with the following instance with $n$ agents and $m+1$ items where $m>24n^2k$.
\begin{center}
    \begin{tabular}{cccccc}
        \hline
        & $g_0$ & $g_1$ & $g_2$ & $\cdots$ & $g_m$ \\
        \hline
        $v_1$ & $1$ & $n/m$ & $n/m$ & $\cdots$ & $n/m$ \\
        $v_2$ & $1$ & $4n/m$ & $4n/m$ & $\cdots$ & $4n/m$ \\
        $\vdots$ & $\vdots$ & $\vdots$ & $\vdots$ & $\vdots$ & $\vdots$ \\
        $v_n$ & $1$ & $4n/m$ & $4n/m$ & $\cdots$ & $4n/m$ \\
        \hline
    \end{tabular}
    \label{tab:incom_efkx_p2}
\end{center}

Denote $\{g_1,\ldots,g_m\}$ by $M'$.
For any allocation that satisfies EF$_{-k}^{+0}$X, according to Propositions~\ref{prop:incom_efkx_n_1}, each agent needs to be allocated at least $k$ items from $M'$.
According to Proposition~\ref{prop:incom_efkx_n_2}, we have $g_0\in A_1$.
Denote by $\alpha_i$ the (possibly fractional) number of items that agent $i$ receives from $M'$ in the fractional allocation generated by $\calF$.

Let $\beta_2=\frac{2n+1}{2n^2}m-\frac{n-1}{n}k$.
We claim that if $\calF$ is truthful, then $\alpha_2\ge \beta_2$.
Otherwise, agent $2$ can misreport by lowering her value of each item $g\in M'$ from $4n/m$ to $2n/m$, which leads to the following instance. 
\begin{center}
    \begin{tabular}{cccccc}
        \hline
        & $g_0$ & $g_1$ & $g_2$ & $\cdots$ & $g_m$ \\
        \hline
        $v_1$ & $1$ & $n/m$ & $n/m$ & $\cdots$ & $n/m$ \\
        $v_2$ & $1$ & $2n/m$ & $2n/m$ & $\cdots$ & $2n/m$ \\
        $v_3$ & $1$ & $4n/m$ & $4n/m$ & $\cdots$ & $4n/m$ \\
        $\vdots$ & $\vdots$ & $\vdots$ & $\vdots$ & $\vdots$ & $\vdots$ \\
        $v_n$ & $1$ & $4n/m$ & $4n/m$ & $\cdots$ & $4n/m$ \\
        \hline
    \end{tabular}
    \label{tab:incom_efkx_p2}
\end{center}
In any EF$_{-k}^{+0}$X allocation of the above instance, agent $1$ will receive $g_0$.
In addition, agent $2$ will receive no less than $\beta_2$ items from $M'$.
To see this, assume that agent $2$ receives $y_2$ items from $M'$, and we will derive a lower bound on $y_2$ for any EF$_{-k}^{+0}$X allocation by considering the worst-case scenario to agent $2$.
In particular, we assume that envy from agent $2$ to every other agent is eliminated only after exactly $k$ items in $M'$ are removed from the other's bundle, which represents the most unfavorable case.
Under such a scenario, agent $1$ will receive no more than $y_2-m/2n+k$ items, and each agent $i\in\{3,\ldots,n\}$ will receive $y_2+k$ items.
As every item in $M'$ is allocated, we have $(y_2-m/2n+k)+y_2+(n-2)(y_2+k)\ge m$, which gives us $y_2\ge \frac{2n+1}{2n^2}m-\frac{n-1}{n}k=\beta_2$.
In the new instance, the expected (possibly fractional) number of items that agent $2$ receives from $\calF$ is $\alpha'_2\ge y_2$.
Hence, when $\alpha_2< \beta_2$, agent $2$'s utility will strictly improve by $\alpha'_2-\alpha_2>0$ from such a misreport.

Given that $\alpha_2\ge\beta_2$, we now derive an upper bound on $\alpha_1$ in the original instance.
As $\calF$ is EF$_{-k}^{+0}$X-realizable, denote the randomized allocations as $\{(p_t,\calA_t)\}_{t=1,\ldots,T}$.
Assume that in each allocation $\calA_t$, agent $2$ receives $y_2^t$ items, where $\sum_{t=1,\ldots,T}p_ty_2^t=\alpha_2$.
Each agent $i\in\{3,\ldots,n\}$ needs to receive at least $y_2^t-k$ items to ensure EF$_{-k}^{+0}$X.
Therefore, the maximum number of items that agent $1$ can receive from $M'$ is $m-y_2^t-(n-2)(y_2^t-k)=m-(n-1)y_2^t+(n-2)k$ in $\calA_t$.
This gives the upper bound of $\alpha_1$ as 
\begin{align*}
    \alpha_1 &\le \sum_{t=1}^T p_t \left(m-(n-1)y_2^t+(n-2)k\right) = m-(n-1)\alpha_2+(n-2)k \\
        &\le m-(n-1)\beta_2+(n-2)k = \frac{n+1}{2n^2}m+\frac{2n^2-4n+1}{n}k.
\end{align*}
Let $\beta_1=\frac{n+1}{2n^2}m+\frac{2n^2-4n+1}{n}k$.

Given that $\alpha_1\le\beta_1$, we provide a beneficial misreport for agent $1$ by increasing her value of each item $g\in M'$ from $n/m$ to $3n/m$, which leads to the following instance.
\begin{center}
    \begin{tabular}{cccccc}
        \hline
        & $g_0$ & $g_1$ & $g_2$ & $\cdots$ & $g_m$ \\
        \hline
        $v_1$ & $1$ & $3n/m$ & $3n/m$ & $\cdots$ & $3n/m$ \\
        $v_2$ & $1$ & $4n/m$ & $4n/m$ & $\cdots$ & $4n/m$ \\
        $\vdots$ & $\vdots$ & $\vdots$ & $\vdots$ & $\vdots$ & $\vdots$ \\
        $v_n$ & $1$ & $4n/m$ & $4n/m$ & $\cdots$ & $4n/m$ \\
        \hline
    \end{tabular}
    \label{tab:incom_efkx_p2}
\end{center}
We will show that in any EF$_{-k}^{+0}$X allocation of the above instance, agent $1$ will receive $g_0$ as well as more than $\beta_1$ items from $M'$.
To see this, assume that agent $1$ receives $y_1$ items from $M'$, and we similarly derive a lower bound on $y_1$ for any EF$_{-k}^{+0}$X allocation by considering her worst-case scenario.
To ensure that agent $1$ will not envy the others after removing exactly $k$ items, each agent $i\in\{2,\ldots,n\}$ will receive no more than $y_1+m/3n+k$ items.
Due to $y_1+(n-1)(y_1+m/3n+k)\ge m$, we have $y_1\ge \frac{2n+1}{3n^2}m-\frac{n-1}{n}k$.
Therefore, in the new instance, the expected (possibly fractional) number of items that agent $1$ receives from an EF$_{-k}^{+0}$X-realizable rule is $\alpha'_1\ge \frac{2n+1}{3n^2}m-\frac{n-1}{n}k$.

As $m>24n^2k$, it can be verified that $\alpha'_1> \alpha_1$, which implies that agent $1$'s expected utility will increase due to such a misreport.
Therefore, we conclude the incompatibility between truthfulness and EF$_{-k}^{+0}$X.

\section{Proof of Proposition~\ref{prop:POtruthful}}
\label{append:PObi}
A key observation of the (possibly fractional) allocation $(X_1',\ldots,X_n')$ that maximizes the Nash welfare is that, for any two agents $i$ and $j$, if $|X_i'|< |X_j'|$, then $v_i(g)=q$ for any item $g$ that has some fraction included in $X_j'$.
Otherwise, if some fraction of $g$ with $v_i(g)=p$ is included in $X_j'$, the allocation $(X_1',\ldots,X_n')$ cannot be a maximum Nash welfare solution to the profile $(v_1',\ldots,v_n')$, as moving some fraction of $g$ from $X_j'$ to $X_i'$ strictly improves the Nash welfare.

Based on the above observation, we now provide an equivalent interpretation of computing the maximum Nash welfare allocation $(X_1',\ldots,X_n')$ in the first phase of the division rule $\calF$ in Sect.~\ref{sect:PObi}.
Given a set of agents $S$ and a set of items $R\subseteq M$, let $C^R(S)$ denote a set of items $g\in R$ where $g\in C^R(S)$ if and only if there is at least one agent $i\in S$ such that $v_i(g)=p$.
For $R=M$, we simply write $C(S)$ for $C^R(S)$.
Let $|S|$ and $|C^R(S)|$ denote the number of the agents in $S$ and the items in $C^R(S)$ respectively. 
Let $N_u$ denote the set of agents that have not received any item and have value $p$ to some unallocated items.
Let $R$ denote the set of items that are still unallocated, and we will ensure $R$ contains only integral items throughout the procedure (but the allocation of $M\setminus R$ to agents in $N\setminus N_u$ may be fractional).
The mechanism starts from $R=M$ and iteratively finds a group of agents $S=\argmin\limits_{S\subseteq N_u}\frac{|C^R(S)|}{|S|}$ and (possibly fractionally) allocates $\frac{|C^R(S)|}{|S|}$ units of items to each agent $i\in S$, where we guarantee that each agent $i$ only receives those items $g$ with $v_i(g)=p$ (this is always possible; otherwise, it is easy to see $S$ cannot minimizes $\frac{|C^R(S)|}{|S|}$).
We then remove $S$ from $N_u$, update $R$, and repeat the procedure.
Let $S_1,\ldots,S_K$ be the sets of agents iteratively chosen by the mechanism.
At the beginning of the $k$-th iteration, the set of items allocated is exactly $C(S_1\cup S_2\cup\cdots\cup S_{k-1})$, let $R_k=M\setminus C(S_1\cup S_2\cup\cdots\cup S_{k-1})$ be the set of unallocated items at this moment (in particular, $R_1=M$). 
We can prove that $\frac{|C^{R_1}(S_1)|}{|S_1|}\le\frac{|C^{R_2}(S_2)|}{|S_2|}\le\cdots\le \frac{|C^{R_K}(S_K)|}{|S_K|}$.
The proof is similar to \citet{chen2013truth}.
Verbally, the agent who receives items later in the first phase will not receive fewer items than the agent who receives items earlier.

By the division rule $\calF$, each agent will receive exactly $L=m/n$ units of items after the three phases.
Therefore, for agent $i$, if $|X_i'|\ge L$ (i.e., agent $i$ will receive no less than $L$ units of items with value $p$ in the first phase under truthful report), there is no incentive to misreport as agent $i$ has already received the highest possible value.
Hence, we will only focus on agents that receive less than $L$ units of items in the first phase.

From now on, we will analyze the incentive of a particular agent $i$, and we will stick to the following assumption in the rest of this section.

\noindent\textbf{Assumption: } $|X_i'|<L$ when agents report truthfully.

We now show the truthfulness of $\calF$.
The following proof consists of two parts.
In Sect.~\ref{append:Fg_mechanism}, we will introduce a new hypothetical division rule $\calF^g_i$ in which agent $i$'s misreporting is, intuitively, more beneficial.
The hypothetical division rule $\calF_i^g$ is assumed to know agent $i$'s true valuation function.
That is, $\calF_i^g$ takes $n+1$ valuation functions as inputs: the true valuation function $v_i$ for agent $i$, the reported valuation function $u_i$ for agent $i$, and the valuation functions of the remaining $n-1$ agents.

We will show in Proposition~\ref{prop:Fg_reduce} that if there is a beneficial misreport for agent $i$ under the original division rule $\calF$, then there is also a beneficial misreport for her under the new division rule $\calF^g_i$.
Equivalently, for agent $i$, the truthfulness of the new division rule $\calF^g_i$ implies the truthfulness of the original division rule $\calF$.
In Sect.~\ref{append:Fg_truthful}, we will show that $\calF^g_i$ is truthful.
Combining the two parts, we conclude that $\calF$ is truthful for agent $i$.
Since $i$ is an arbitrary agent, we conclude Proposition~\ref{prop:POtruthful}.

\subsection{A New Division Rule $\calF_i^g$ and Its Relation to $\calF$}
\label{append:Fg_mechanism}

We begin by describing the hypothetical division rule $\calF_i^g$.
Let $D=\{g:v_i(g)=p,u_i(g)=q\}$.
Upon receiving the reported valuation function $u_i$ of agent $i$ and the valuation functions of the remaining $n-1$ agents, the rule $\calF_i^g$ does the same as $\calF$ by iteratively choosing agent sets $S_1,S_2,\ldots,S_K$ and compute the allocation $(X_1',\ldots,X_n')$ (which maximizes the Nash welfare if value $p$ is treated as $1$ and value $q$ is treated as $0$) in the first phase.
If $|X_i'|\geq L$, we let $\calF_i^g$ do exactly the same as $\calF$.
Otherwise, we make the following changes to $\calF_i^g$.
When deciding the allocation of $C^{R_k}(S_k)$ to agents in $S_k$ in each iteration and when truncating the bundles with size larger than $L$, the rule $\calF_i^g$ does them in a way that maximizes the (possibly fractional) number of truncated items in $D$, i.e., $\calF_i^g$ does the best to reserve the items in $D$ to the third phase.
(Notice that, under $\calF$, the allocation of $C^{R_k}(S_k)$ to $S_k$ and the truncation are done in an arbitrary consistent way.)
In addition, for those $S_k$ such that $|C^{R_k}(S_k)|/|S_k|\leq L$, i.e., those agents whose bundles are not truncated in the second phase, we require that $\calF_i^g$ handles the allocation of $C^{R_k}(S_k)$ to $S_k$ in exactly the same way as it is in $\calF$.
This finishes the description of $\calF_i^g$.

As a remark, the division rule $\calF_i^g$ needs to know $D$, which depends on agent $i$'s true valuation function $v_i$.
This is why $\calF_i^g$ is a ``hypothetical'' division rule.

Intuitively, by lying that a high-valued item is low-valued, agent $i$ would like these items to be allocated to her in the third phase.
Therefore, agent $i$ hopes that more items in $D$ can be allocated in the third phase, and $\calF_i^g$ does exactly this for agent $i$. 

We now prove Proposition~\ref{prop:Fg_reduce}.
This allows us to reduce the truthfulness of $\calF$ to the truthfulness of $\calF_i^g$.
\begin{proposition}\label{prop:Fg_reduce}
    If there is no beneficial misreporting under $\calF_i^g$, there is also no beneficial misreporting under $\calF$.
\end{proposition}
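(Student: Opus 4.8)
The plan is to show the contrapositive-style statement directly: assuming agent $i$ has a beneficial misreport $u_i$ under the original rule $\calF$ (with true valuation $v_i$), I will construct a beneficial misreport for her under $\calF_i^g$. The natural candidate is the \emph{same} report $u_i$, now fed to $\calF_i^g$ together with the side-information $v_i$. The key point is that $\calF_i^g$ and $\calF$ agree on the first-phase structure: they choose the same sets $S_1,\ldots,S_K$ and produce the same (up to reshuffling of which specific items go where) maximum-Nash-welfare skeleton $(X_1',\ldots,X_n')$, hence the same $|X_i'|$. So under both rules agent $i$ receives the same number $|X_i'|$ of value-$p$ items \emph{committed} in the first two phases (since $|X_i'|<L$, no truncation hits her), contributing the same $p\cdot|X_i'|$ to her utility; the difference can only come from the third phase, where the $T=m-\sum_j|X_j''|$ unallocated items are split with agent $i$ getting an $\alpha_i=(L-|X_i''|)/T$ fraction of each. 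Since the quantities $|X_j''|$ (and hence $T$ and $\alpha_i$) are determined by the first-two-phase skeleton, they are identical under $\calF$ and $\calF_i^g$; thus agent $i$'s third-phase utility is $\alpha_i\sum_{g\in R}v_i(g)|\{g\text{ unallocated}\}|$, and the only thing that varies between the two rules is \emph{which} items end up in the unallocated pool $R$ (equivalently, which items are reserved to phase three).

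Concretely, I would argue: (i) $\calF_i^g$ by construction reserves to phase three at least as much total $v_i$-value among items of $D=\{g:v_i(g)=p,\,u_i(g)=q\}$ as $\calF$ does, because $\calF_i^g$ explicitly maximizes the fractional amount of $D$ pushed to phase three while respecting the constraint that it matches $\calF$ exactly on the non-truncated $S_k$'s; and (ii) for items \emph{not} in $D$, the set of items reserved to phase three is forced — for $g$ with $u_i(g)=v_i(g)$ the rules behave the same way with respect to $g$, so agent $i$'s phase-three gain from such items is unaffected. Therefore $v_i(\calF_i^g(v_i;u_i,v_{-i})) \ge v_i(\calF(u_i,v_{-i}))$. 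Combining with the assumed inequality $v_i(\calF(u_i,v_{-i})) > v_i(\calF(v_i,v_{-i}))$ and the observation that truthful reporting coincides under the two rules, $\calF_i^g(v_i;v_i,v_{-i}) = \calF(v_i,v_{-i})$ (when $u_i=v_i$ we have $D=\emptyset$ so the tie-breaking freedom is vacuous and $\calF_i^g$ literally runs $\calF$), we get
\[
v_i\bigl(\calF_i^g(v_i;u_i,v_{-i})\bigr) \;>\; v_i\bigl(\calF_i^g(v_i;v_i,v_{-i})\bigr),
\]
i.e.\ $u_i$ is a beneficial misreport under $\calF_i^g$, which is the contrapositive of the claim.

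The main obstacle I anticipate is making step (i) rigorous: one must verify that reserving more $D$-mass to phase three is actually \emph{feasible} for $\calF_i^g$ without disturbing the max-Nash-welfare skeleton, and that it genuinely helps agent $i$ — i.e.\ that an item $g\in D$ reserved to phase three yields agent $i$ value $\alpha_i\cdot p$ (via her fractional share), whereas if $g$ is instead committed in phase one it is committed to \emph{some other} agent $j$ (since $u_i(g)=q$ means agent $i$ never "wants" $g$ in the first phase under report $u_i$), giving agent $i$ value $0$ from $g$ in phases one–two. So reserving $D$-items weakly increases agent $i$'s utility, and $\calF_i^g$ does the most of this that is compatible with the skeleton; hence it weakly dominates $\calF$ from agent $i$'s perspective for every report. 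A secondary subtlety is bookkeeping the partial items: an item may be fractionally committed in phase two and only a fraction $\beta_g$ left for phase three, but this is handled identically by the two rules whenever it concerns items outside $D$ or agents outside the relevant truncated sets, and for $D$-items $\calF_i^g$ simply makes $\beta_g$ as large as possible. Once these feasibility and monotonicity points are nailed down, the reduction is immediate, and the remaining work (proving $\calF_i^g$ itself truthful, in Appendix~\ref{append:Fg_truthful}) is where the real combinatorial content of the argument lies.
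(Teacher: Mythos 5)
Your proposal is correct and follows essentially the same route as the paper's proof: feed the same misreport $u_i$ to both rules, observe that the phase-one skeleton (hence $X_i'$, $T$, $\alpha_i$, and the truthful-report outcome) is identical under $\calF$ and $\calF_i^g$, and conclude that $\calF_i^g$'s maximization of the truncated $D$-mass can only raise agent $i$'s true utility relative to $\calF$, so a beneficial deviation under $\calF$ transfers to $\calF_i^g$. The only omission is the trivial case where the misreport yields $|X_i'|\ge L$, in which the two rules coincide by definition (and note the phase-one contribution is $v_i(X_i')$ rather than $p\cdot|X_i'|$, though this does not affect the comparison since $X_i'$ is the same set under both rules).
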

\begin{proof}
    We will prove the contra-positive.
    We will show that, if a beneficial misreport $u_i$ for agent $i$ exists under the division rule $\calF$, the same misreport $u_i$ is beneficial for agent $i$ under the division rule $\calF_i^g$.
    Suppose the valuation profile $(v_1,\ldots,v_{i-1},u_i,v_{i+1},\ldots,v_n)$ is given as input to both $\calF$ and $\calF_i^g$.
    We consider two cases.
    Notice that the length of agent $i$'s allocation in the first phase, $|X_i'|$, is the same under both mechanisms.
    We discuss two cases: $|X_1'|\geq L$ and $|X_1'|<L$.

    Suppose $|X_i'|\geq L$. Agent $i$ ends up receiving the same allocation under both $\calF$ and $\calF_i^g$ by our definition.

    Suppose $|X_i'|<L$, which implies that $X_i'=X_i''$ is the same under both $\calF$ and $\calF_i^g$ (by our definition).
    Since both $\calF$ and $\calF_i^g$ use the same iterative procedure, each agent receives the same \emph{length} under both rules in the first and the second phases.
    Thus, the values $T=m-\sum_{i=1}^n|X_i''|$ and $\alpha_i=\frac{L-|X_i''|}T$ are also the same under both $\calF$ and $\calF_i^g$.
    Agent $i$'s utility is then given by $v_i(X_i')+\alpha_i(|D_\truc|\cdot p+(T-|D_\truc|)\cdot q)$, where $D_\truc$ is the set of (possibly fractional) items in $D$ that is truncated and allocated in the second and the third phases.
    The utility is maximized with maximum $|D_\truc|$, and $\calF_i^g$ maximizes $|D_\truc|$ by definition.
    
    Thus, in both cases, by the same misreporting $u_i$, the utility of agent $i$ in $\calF_i^g$ is weakly higher than in $\calF$.
    If misreporting is beneficial under $\calF$, the same misreporting is beneficial under $\calF_i^g$.
\end{proof}

\subsection{Truthfulness of $\calF_i^g$ for agent $i$}
\label{append:Fg_truthful}
Since agent $i$ will receive exactly $L$ units of items after the three phases, the truthfulness of $\calF_i^g$ follows from the claim that agent $i$ will not receive more (possibly fractional) items with value $p$ in the three phases.

The proof of the claim above consists of two steps:
\begin{itemize}
    \item Step 1: we first show that if agent $i$ with valuation $v_i$ can benefit from a misreport to $u_i$ where there exist items $g$ such that $v_i(g)=q$ and $u_i(g)=p$, we can construct another beneficial misreport for agent $i$ where such deviation from $q$ to $p$ does not exist.
    \item Step 2: We then show that without deviation from $q$ to $p$, any deviation of items from value $p$ to $q$ is also not beneficial.
\end{itemize}
Hence, we conclude that there is no beneficial misreport, which guarantees the truthfulness of $\calF_i^g$.
In the following, we extend the notations defined for $\calF$ to $\calF_i^g$.

\paragraph{Step 1.}
First of all, if agent $i$ receives a length of at least $L$ after the first phase for reporting $u_i$, this misreporting is obviously non-beneficial: we have assumed agent $i$ receives a length of less than $L$ in the first phase when reporting truthfully; the truthfulness of maximum Nash welfare mechanism for binary valuations (proved by \citet{chen2013truth}) implies agent $i$ cannot receive more high-valued items by reporting $u_i$.
From now on, we assume agent $i$ receives a length of less than $L$ for reporting $u_i$.

Let $S_k$ be the group containing agent $i$ when agent $i$ reports $u_i$, and let $C^{R_k}(S_k)$ be those items that are allocated at the $k$-th iteration.
Let
$$w_i'(g)=\left\{\begin{array}{ll}
    q & \mbox{if }g\notin C^{R_k}(S_k) \\
    u_i(g) & \mbox{otherwise} 
\end{array}\right..$$
It is clear that reporting $w_i'$ leads to the same allocation as reporting $u_i$, as the iterative procedure of selecting $S_k$ and allocating $C^{R_k}(S_k)$ is exactly the same as if $u_i$ were reported.
Therefore, we will assume agent $i$ has reported $w_i'$ instead of $u_i$.

Let $D=\{g:w_i'(g)=q,v_i(g)=p\}$ and $E=\{g:w_i'(g)=p,v_i(g)=q\}$.
We aim to show that
$$w_i(g)=\left\{\begin{array}{ll}
    q & \mbox{if }g\in E \\
    w_i'(g) & \mbox{otherwise} 
\end{array}\right.$$
is at least as good as reporting $w_i'$ for agent $i$.
Notice that this will conclude the proof of this part: in $w_i$, no low-valued item is reported as high-valued.

Let $D_\truc\subseteq D$ be the set of items in $D$ that are truncated in the second phase, where $D_\truc$ may contain fractional items.
Let $X_i'$ be the allocation of agent $i$ in the first phase when reporting $w_i'$.
Let $F=X_i'\cap E$, and notice that $F$ may also contain fractional items.
We first show that, by reporting $w_i$ instead of $w_i'$, the following two properties hold:
\begin{enumerate}
    \item $|D_\truc|$ increases;
    \item The overall size of the truncation $T=m-\sum_{j=1}^n|X_j''|$ increases, but by an length of at most $|F|$.
\end{enumerate}

To see the above two properties intuitively, notice that agents in $S_1\cup\cdots\cup S_{k-1}$ have value $q$ on items in $E$, and thus in $F$.
When agent $i$ reports $w_i$ instead, items in $F$ will be reallocated to agents in $S_k\cup S_{k+1}\cup\cdots\cup S_K$.
Notice that we have assumed $|X_i|<L$, so bundles for the agents in $S_k$ are not large enough to be truncated.
Reallocating $F$ to agents in $S_k\cup S_{k+1}\cup\cdots\cup S_K$ can only make more items truncated, and the extra length truncated is at most $|F|$.

Now we prove these two properties formally.
When $F$ is removed from the item set $M$ (items are treated as divisible), consider the maximum Nash welfare allocation for the resource set $M\setminus F$ for $p$ and $q$ treated as $1$ and $0$ respectively.
By resource monotonicity (a well-known property for maximum Nash welfare allocation), each agent receives less value than it is in the case where $M$ is allocated.
Moreover, it is easy to verify by the iterative procedure that the allocations for agents in $S_{k+1}\cup S_{k+2}\cup\cdots\cup S_K$ remain unchanged.
To see this, first notice that agents in $S_1\cup S_2\cup\cdots\cup S_k$ value all items in $C(S_{k+1}\cup S_{k+2}\cup\cdots \cup S_K)$ as $q$ (or $0$ after the treatment).
Therefore, in the maximum Nash welfare allocation, a superset of $C(S_{k+1}\cup S_{k+2}\cup\cdots \cup S_K)$ is allocated to agents in $S_{k+1}\cup S_{k+2}\cup\cdots\cup S_K$.
By resource monotonicity, agents in $S_{k+1}\cup S_{k+2}\cup\cdots\cup S_K$ receive weakly more value than before.
On the other hand, this superset cannot be proper: we have shown that each of the $n$ agents cannot receive more value in the allocation of $M\setminus F$ compared with the allocation of $M$.
Therefore, agents in $S_{k+1}\cup S_{k+2}\cup\cdots\cup S_K$ receives exactly $C(S_{k+1}\cup S_{k+2}\cup\cdots \cup S_K)$, and the allocations for agents in $S_{k+1}\cup S_{k+2}\cup\cdots\cup S_K$ remain unchanged.
Furthermore, in the allocation of $M\setminus F$, the overall size of the truncated items $T$ remains unchanged, as the agents whose bundles are truncated are those in $S_\ell\cup S_{\ell+1}\cup\cdots \cup S_K$ for some $\ell>k$ (recall that we have shown that the sizes of bundles for agents in $S_k$ are not large enough to be truncated).

Now, consider the scenario where $F$ is added back but agent $i$ has value $q$ (or $0$ after the treatment) on items in $F$.
By resource monotonicity again, the value received by each agent is weakly increased.
The overall size of the truncation thus increases, and it cannot be increased by a size of more than $|F|$, for otherwise some agent in $S_1\cup S_2\cup\cdots\cup S_{\ell-1}$ must have received less value.
This proves property 2.

To prove property 1, first recall that the allocations for agents in $S_{k+1}\cup S_{k+2}\cup\cdots\cup S_K$ remain unchanged when $F$ is removed from the resource set, so the allocations for the agents in $S_\ell\cup S_{\ell+1}\cup\cdots \cup S_K$ with truncated bundles are also unchanged since $\ell>k$.
Therefore, when $F$ is removed, $|D_\truc|$ remains unchanged in the new allocation.

Next, we describe an iterative procedure to add $F$ back while maintaining maximum Nash welfare, where the procedure resembles resource monotonicity.
Each iteration of the procedure involves allocating parts of $F$ and moving some part of an agent's allocation to another agent, and we will show that the truncated part of $D$ remains truncated during these.
To describe the procedure, we start with the maximum Nash welfare allocation of $M\setminus F$ and define a directed graph with $n+1$ vertices, where the $n$ vertices represent the $n$ agents, and the last vertex represents the pool $F$ of the unallocated (possibly fractional) items.
We build a directed edge from agent $j_1$ to agent $j_2$ if $j_2$'s bundle contains some fraction of an item $g$ where $v_{j_1}(g)=p$.
We build a directed edge from an agent $j$ to the pool $F$ if $F$ contains a (possibly fractional) item $g$ where $v_j(g)=p$.
Notice that, to guarantee maximum Nash welfare, if $j_1$ receives more value than $j_2$, there should not be an edge from $j_2$ to $j_1$.
In each iteration of the procedure, we identify a set $S_{\min}$ of agents such that 1) there is a path from each agent in $S_{\min}$ to $F$ and 2) agents in $S_{\min}$ currently have the equally minimum value for their bundles among those agents satisfying 1).
We build a spanning tree rooted at $|F|$ where the tree nodes are those agents in $S_{\min}$ (note that there cannot exist ``intermediate node'' in the tree that is not in $S_{\min}$).
Then, each agent in the tree takes a portion of an item with value $p$ from her parent.
We let all agents in $S_{\min}$ simultaneously ``eat'' the item from their parents in a continuous way while keeping the utilities for agents in $S_{\min}$ the same.
This stops when one of the following two critical events happen: 1) the graph structure changes, and 2) more agents are included in $S_{\min}$, i.e., the utility for the agents in $S_{\min}$ is increased to an amount that begins to equal to the utility of some other agents that is not in $S_{\min}$.
When critical events happen, we move on to the next iteration and do the same, until $F$ becomes empty.
It is easy to verify that the maximum Nash welfare property is preserved and agents' utilities never decrease throughout this procedure.

By describing the procedure in this way, it is then easy to see that $|D_\truc|$ never decreases during the procedure.
Since all agents' utilities can only increase throughout this procedure, the only possibility for $|D_\truc|$ to decrease is when a part of $D$ in agent $j_1$'s bundle, which was initially truncated, is reallocated to an agent $j_2$ whose utility is below $L$ (if a truncated part of $D$ is reallocated to an agent $j_2$ whose utility is already above $L$, this part remains truncated in agent $j_2$'s bundle and the overall size of truncated part of $D$ is unchanged).
However, this is impossible: reallocation between two agents only happens when their utilities are the same; on the other hand, if a part of $j_1$'s bundle is truncated, the utility of $j_1$ is larger than $L$, which is larger than the utility of $j_2$.

We have described a procedure to allocate $F$ such that $|D_\truc|$ does not decrease.
Since $\calF_i^g$ optimizes $|D_\truc|$ by our definition, we have proved property 1.

After proving the two properties, we show that the length received by agent $i$ in the first phase is in the interval $[|X_i'\setminus F|,|X_i'|]$ when reporting $w_i$ by the truthfulness of the maximum Nash welfare rule under binary valuations.
Consider the allocation in the first phase.
If agent $i$ receives a length less than $|X_i'\setminus F|$ under $w_i$, assume that $w_i$ is agent $i$'s truthful valuation.
However, when agent $i$ misreports the valuation to $w_i'$, she will receive a length of $|X_i'|$, among which a length of $|X_i'\setminus F|$ has value $p$ to agent $i$, which is beneficial to misreport.
If agent $i$ receives a length more than $|X_i'|$ under $w_i$, assume that agent $i$'s truthful valuation is $w_i'$.
When agent $i$ reports the truthful valuation, she will receive a length of $|X_i'|$ with value $p$, which implies that misreporting to $w_i$ is beneficial.
Both cases contradict the truthfulness of the maximum Nash welfare rule.

Finally, we show that the value received by agent $i$ when reporting $w_i$ is no less than that when reporting $w'_i$.
For simplicity, denote the length $|X_i'|$ received by agent $i$ in the first phase when reporting $w_i'$ by $x$, and the length received by agent $i$ in the first phase when reporting $w_i$ by $\bar{x}$.
Here, we specify that $T=m-\sum_{i=1}^n|X_i''|$ and $D_\truc$ respectively denote the number of unallocated items and items with value $p$ to agent $i$ that are truncated after the second phase when agent $i$ reports $w_i'$.

After the three phases, if the valuation reported by agent $i$ is $w_i'$, the number of items agent $i$ will receive with value $p$ after the three phases is $x-|F|+\frac{|D_\truc|}{T}(L-x)$.
If agent $i$ reports $w_i$ instead, as $|D_\truc|$ will increase and $T$ will increase by at most $|F|$ by the two properties above, the number of items agent $i$ will receive with value $p$ after the three phases is lower bounded by $\bar{x}+\frac{|D_\truc|}{T+|F|}(L-\bar{x})$.
As agent $i$ will receive a length of $L$ after the three phases under both valuations, to show that reporting $w_i$ is at least as good as reporting $w_i'$, we only need to guarantee the inequality $$\left(\bar{x}+\frac{|D_\truc|}{T+|F|}(L-\bar{x})\right)-\left(x-|F|+\frac{|D_\truc|}{T}(L-x)\right)\ge 0$$ holds for $\bar{x}\in[x-|F|,x]$.
Notice that the inequality is linear in $\bar{x}$, hence it holds for $\bar{x}\in[x-|F|,x]$ as long as it holds for $\bar{x}=x-|F|$ and $\bar{x}=x$.
When $\bar{x}=x-|F|$, the inequality is simplified to $T-(L-x)\ge 0$, which holds trivially by the definition of $T$.
When $\bar{x}=x$, it is simplified to show $T^2-|D_\truc|(L-x)+T|F|\ge 0$, which holds as $T\ge |D_\truc|$ and $T-(L-x)\ge 0$.

\paragraph{Step 2.}
From now on, we will assume that agent $i$ will not misreport from value $q$ to $p$, that is, there is no item $g\in M$ such that $v_i(g)=q$ and $u_i(g)=p$. 

We begin by defining some notations.
For a set of agents $N'\subseteq N$, we denote $C(N')$ under $v_i$ by $C_v(N')$ and under $u_i$ by $C_u(N')$.
Similarly, let $((X_1')_{u},\ldots,(X_n')_{u})$ and $((X_1')_{v},\ldots,(X_n')_{v})$ be the allocations right after the first phase for valuations $u_i$ and $v_i$ respectively, and
let $((X_1'')_{u},\ldots,(X_n'')_{u})$ and $((X_1'')_{v},\ldots,(X_n'')_{v})$ be the allocations right after the second phase for valuations $u_i$ and $v_i$ respectively.
Let $T_u=m-\sum_{j=1}^n|(X_j'')_u|$ and $T_v=m-\sum_{j=1}^n|(X_j'')_v|$ be the respective sizes of the truncated items at phase two.

Let $S$ be the first group of agents found by the mechanism under the truthful valuation $v_i$, and let $S'$ be that under $u_i$.
We claim that to be profitable for agent $i$, it holds that $S=S'$ and $C_v(S)=C_u(S')$.
We prove it by contradiction.
Assume that agent $i$ receives $x$ units of items with value $p$ under $v_i$.
Compared to reporting truthfully, there are three types of deviations such that $S\neq S'$ or $C_v(S)\neq C_u(S')$, and we demonstrate that under each of the following cases, agent $i$ cannot receive more than $x$ units of the items with value $p$.
\begin{itemize}
    \item Case 1: $i\in S$, yet $i\notin S'$.
    \item Case 2: $i\in S$ and $i\in S'$, yet $S\neq S'$ or $C_v(S)\neq C_u(S')$.
    \item Case 3: $i\notin S$, yet $i\in S'$.
\end{itemize}

In the following, we use $x$ to denote the units of items that agent $i$ will receive under truthful report $v_i$ in the first phase.

In Case 1, we have $C_v(S') = C_u(S')$ and $|C_v(S')|/|S'|=|C_u(S')|/|S'|$, as $i\notin S'$ and any deviation by agent $i$ will not impact the valuation of agents in $S'$.
Consequently, $|C_u(S')|/|S'|\ge |C_v(S)|/|S|=x$, for otherwise, the mechanism will choose $S'$ instead of $S$ initially under $v_i$.
Given the fact that every agent in the latter group will not receive fewer items than agents in the first group, each agent in $S\setminus \{i\}$ will receive at least $x$ units of items in the first phase under $u_i$.
As agent $i$ only values $p$ to a subset of items within set $C_v(S)$, those items left for agent $i$ will be at most $C_v(S)-x|S\setminus\{i\}|=x$, which is not profitable.

In Case 2, denote the number of items that $v_i(g)=p$, $u_i(g)=q$, and $v_j(g)=q$ for $j\in S\setminus\{i\}$ by $\delta$ where $0\le \delta\le x$, and the units of items that agent $i$ receives in the first phase after deviation by $x'$.
These $\delta$ units of items with value $p$ to agent $i$ will not be allocated within $S$, so we have $|C_u(S)|=x|S|-\delta$.
Further, as each (possibly fractional) remaining item will be uniformly allocated, agent $i$ will receive no more than
$\frac{\delta}{T_u}(L-x')$ units of items with value $p$ in the third phase.
Since 
$$T_u\geq\sum_{j\in S}(L-|(X_j'')_u|)=L\cdot|S|-\sum_{j\in S}|(X_j'')_u|\geq L\cdot|S|-|C_u(S)|=(L-x)|S|+\delta,$$
agent $i$ will receive no more than $\frac{\delta}{(L-x)|S|+\delta}(L-x')$ units of items with value $p$ in the third phase where $L=\frac{m}{n}>x$.
Therefore, we only need to prove 
\begin{equation}\label{eqn:eqn1}
x'+\frac{\delta}{(L-x)|S|+\delta}(L-x')\le x,
\end{equation}
which is equivalent to
\begin{equation}\label{eqn:eqn2}
x'((L-x)|S|+\delta)+\delta(L-x')\leq x((L-x)|S|+\delta).
\end{equation}
The above inequality (\ref{eqn:eqn2}) is linear in $L$ and $L\geq x$. Hence, if it holds when $L=x$ and $L\rightarrow+\infty$ respectively, it holds for all values of $L$.
When $L=x$, the inequality is equivalent to $x'+(x-x')\le x$, which trivially holds.
When $L\rightarrow \infty$, the inequality (\ref{eqn:eqn1}) is equivalent to $x'\le x-\frac{\delta}{|S|}$.
By contradiction, if $x'>x-\frac{\delta}{|S|}$, for set $S\setminus\{i\}$, we have 
$$\frac{|C_u(S\setminus\{i\})|}{|S\setminus\{i\}|}=\frac{|C_u(S)|-x'}{|S|-1}<\frac{x|S|-\delta-(x-\delta /|S|)}{|S|-1}=x-\frac{\delta}{|S|}<x'.$$
It is implied that at least the set $S\setminus \{i\}$ should be chosen by the procedure before agent $i$ under $u_i$, contradicting to the assumption that $i\in S'$.
Hence, inequality (\ref{eqn:eqn1}) holds.

The analysis for Case 3 is similar to Case 2.
As $i\notin S$, assume $i\in S_k$ under truthful report (which denotes the group found by the mechanism at the $k$-th round), and let $S_{<k}=S_1\cup\cdots\cup S_{k-1}$ denote the set of agents who receive items before $S_k$ in the first phase.
Each agent $j\in S_{<k}$ receives $x_j$ numbers of items in the first phase under $v_i$ where $x_j\le x$, and $|C_v(S_{<k})|/|S_{<k}|\le x$.
We still define $\delta$ as the number of items that $v_i(g)=p$, $u_i(g)=q$, and $v_j(g)=q$ for $j\in S_{<k}\cup S_k\setminus\{i\}$, and $x'$ be the numbers of items that agent $i$ receives in the first phase under $u_i$.
Thus, agent $i$ will receive no more than $\frac{\delta}{T_u}(L-x')$ items with value $p$ in the third phase.
Since
\begin{align*}
    T_u&\geq\sum_{j\in S_{<k}\cup S_k}(L-|(X_j'')_u|)\geq\sum_{j\in S_{<k}\cup S_k}(L-|(X_j')_u|)\\
    &\geq L\cdot|S_{<k}\cup S_k|-C_u(S_{<k}\cup S_k)\tag{as $(X_j')_u$ can only contain $g$ with value $p$ to agent $j$}\\
    &= L\cdot|S_{<k}\cup S_k|-\left(C_v(S_{<k}\cup S_k)-\delta\right)\tag{Note that $v_j(g)=q$ for $j\in S_{<k}\cup S_k\setminus\{i\}$ by definition of $\delta$}\\
    &=\sum_{j\in S_{<k}}(L-x_j)+(L-x)|S_k|+\delta,
\end{align*}
agent $i$ will receive no more than $\frac{\delta}{\sum_{j\in S_{<k}}(L-x_j)+(L-x)|S_k|+\delta}(L-x')$ items with value $p$ in the third phase.
Hence, the problem is reduced to the validity of the inequality 
\begin{equation}\label{eqn:eqn3}
x'+\frac{\delta}{\sum_{j\in S_{<k}}(L-x_j)+(L-x)|S_k|+\delta}(L-x') \le x,
\end{equation}
and we similarly show it holds for $L=x$ and $L\rightarrow+\infty$.
When $L\rightarrow +\infty$, the inequality is transformed into $x'\le x-\frac{\delta}{|S_{<k}\cup S_k|}$. 
If it does not hold, we have the contradiction that
$$\frac{|C_u(S_{<k}\cup S_k\setminus\{i\})|}{|S_{<k}\cup S_k\setminus\{i\}|}=\frac{|C_u(S_{<k}\cup S_k)|-x'}{|S_{<k}\cup S_k|-1}<\frac{x|S_{<k}\cup S_k|-\delta-(x-\delta/|S_{<k}\cup S_k|)}{|S_{<k}\cup S_k|-1}=x-\frac{\delta}{|S_{<k}\cup S_k|}<x',$$
indicating that $i\notin S'$, which is a contradiction.
When $L=x$, (\ref{eqn:eqn3}) is equivalent to
$$\frac{\delta}{\sum_{j\in S_{<k}}(L-x_j)+(L-x)|S_k|+\delta}(L-x') \leq L-x',$$
which always holds as $\frac{\delta}{\sum_{j\in S_{<k}}(L-x_j)+(L-x)|S_k|+\delta}\leq 1$.

By far, we have shown that $S=S'$ and $C_u(S)=C_v(S)$, so the agents in $S$ will receive items in the same way under $u_i$ and $v_i$ in the first phase.
If agent $i$ belongs to $S$, it can be seen from $C_u(S)=C_v(S)$ that no item with value $p$ to agent $i$ is left to the second phase, so misreporting is not profitable.
If agent $i$ does not belong to $S$, we remove the set of agents $S$ from $N$ and the set of items $C(S)$ from $M$, and consider the next step of the mechanisms under $v_i$ and $u_i$.
By adopting the same analysis, it can be inductively shown that if agent $i$ aims to gain higher utility, the mechanism should behave entirely identically under $v_i$ and $u_i$, which implies that profitable misreporting of $\calF_i^g$ does not exist.

\end{document}